\newtheorem{exampleEM}{Example}
\newenvironment{example}{\begin{exampleEM}\em}{\qex\end{exampleEM}}
\newtheorem{theorem}{Theorem}
\newtheorem{definition}{Definition}
\newtheorem{lemma}{Lemma} 
\newenvironment{proof}{\noindent{\em Proof}.\ }{\vskip\parskip\noindent}
\newcommand{\qed}{\hfill$\Box$}
\newcommand{\qex}{\hfill$\blacksquare$}
\newcommand{\todo}[1]{\marginpar{\bf\large$[\ast]$}{\sf [#1]}}
\long\def\comment#1{}
\newcommand{\caI}{{\cal I}}
\newcommand{\caP}{{\cal P}}
\newcommand{\caR}{{\cal R}}
\newcommand{\caT}{{\cal T}}
\newcommand{\caX}{{\cal X}}
\newcommand{\sort}[1]{\ensuremath{\mathsf{#1}}}
\newcommand{\occ}[1]{\mathit{Pos}(#1)}
\newcommand{\funocc}[1]{\mathit{Pos}_{\Symbols}(#1)}
\newcommand{\var}{{\cal V}ar}
\newcommand{\vars}[1]{{\cal V}ar(#1)}
\newcommand{\Variables}{\caX}
\newcommand{\Symbols}{\Sigma}
\newcommand{\subterm}[2]{#1|_{#2}} 
\newcommand{\replace}[3]{#1[#3]_{#2}}
\newcommand{\domain}[1]{\mathit{Dom}(#1)}
\newcommand{\range}[1]{\intrvar{#1}}
\newcommand{\intrvar}[1]{\mathit{Ran}(#1)}
\newcommand{\idsubst}{\textit{id}}
\newcommand{\TermsOn}[5]{{\caT^{#4}_{#1}(#2)}{}_{#3}^{#5}}
\newcommand{\Terms}{\TermsOn{\Symbols}{\Variables}{}{}{}}
\newcommand{\ETerms}{\TermsOn{\Symbols\!/\!E}{\Variables}{}{}{}}
\newcommand{\TermsS}[1]{\TermsOn{\Symbols}{\Variables}{\sort{#1}}{}{}}
\newcommand{\ETermsS}[1]{\TermsOn{\Symbols\!/\!E}{\Variables}{\sort{#1}}{}{}}
\newcommand{\GTermsOn}[2]{\caT^{#2}_{#1}}
\newcommand{\GTerms}{\GTermsOn{\Symbols}{}}
\newcommand{\GTermsS}[1]{\GTermsOn{\Symbols,\sort{#1}}{}}
\newcommand{\TermsSt}{\TermsS{State}}
\newcommand{\composeRel}{;}
\newcommand{\rewrite}[1]{\rightarrow_{#1}}
\newcommand{\rewritePos}[2]{\stackrel{#2}{\rightarrow}_{#1}}
\newcommand{\narrow}[2]{\stackrel{#1}{\leadsto}_{#2}}
\newcommand{\csu}[3]{\textit{CSU}_{#3}({#1})}
\newcommand{\csuV}[3]{\textit{CSU\/}^{#2}_{#3}({#1})}
\newcommand{\restrict}[1]{{|_{#1}}}
\newcommand{\congr}[1]{=_{#1}}
\newcommand{\nI}[1]{\ensuremath{#1{\notin}\caI}}
\newcommand{\inI}[1]{\ensuremath{#1{\in}\caI}}
\newcommand{\Pone}{\caP_1}
\newcommand{\Ptwo}{\caP_2}  
 \newcommand{\RTcomp}{\caR_{\caP_{1} \, ;_S \caP_{2}}}
  \newcommand{\RTdc}{\caR_{synch(\caP_{1} \, ;_S \caP_{2})}}
 \newcommand{\SeqComp}{\caP_{1} \, ;_S \caP_{2}}
\begin{document}

\begin{frontmatter}    

%\pretitle{}
\title{Effective Sequential Protocol Composition in Maude-NPA}
\runningtitle{Effective Sequential Protocol Composition in Maude-NPA}
%\subtitle{}

\maketitle

% For one author:
%\author{\fnms{} \snm{}\thanks{}}
%\address{}
%\runningauthor{}

% Two or more authors:
%\author[A]{\fnms{} \snm{}\thanks{}},
\author[A,C]{\fnms{Sonia} \snm{Santiago}},
\author[A]{\fnms{Santiago} \snm{Escobar}},
\author[B]{\fnms{Catherine} \snm{Meadows}},
\author[C]{\fnms{Jos\'e} \snm{Meseguer}}
\runningauthor{S. Santiago et al.}
\address[A]{
DSIC-ELP, Universitat Polit\`ecnica de Val\`encia, 
Camino de Vera, s/n
46022 Valencia
Spain\\
E-mail: \{ssantiago,sescobar\}@dsic.upv.es
}
\address[B]{Naval Research Laboratory, 
4555 Overlook Ave SW, Washington, DC 20375, USA\\
 E-mail:  meadows@itd.nrl.navy.mil}
\address[C]{ 
Department of Computer Science, University of Illinois at Urbana-Champaign, 
The Thomas M. Siebel Center for Computer Science
201 N. Goodwin Ave.
Urbana, IL 61801-2302, USA\\
E-mail: meseguer@cs.uiuc.edu
}

\begin{abstract}
 Protocols do not work alone, but together, one protocol relying on another
 to provide needed services.  Many of the problems in cryptographic protocols
 arise when such composition is done incorrectly or is not well understood.
 In this paper we discuss an extension to 
 the Maude-NPA syntax and its operational semantics
to support dynamic sequential composition of protocols, so that protocols can be 
specified separately
and composed when desired.  This allows one to reason about many
 different compositions
with minimal changes to the specification, as well as improving, in terms of both performance and ease of specification, on an earlier composition extension we presented in \citep{EMMS10}. We show how compositions can be defined and executed
symbolically in Maude-NPA using the compositional syntax and semantics.
We also provide an experimental analysis of the performance of Maude-NPA using the compositional syntax and semantics,
and compare it to the performance of a syntax and semantics for composition developed in earlier research.
Finally, in the conclusion we give some lessons learned about the best ways of extending narrowing-based state reachability tools, as well as comparison with related work and future plans.   
\end{abstract}

\begin{keyword}
Cryptographic Protocols \sep
Formal Verification of Secure Systems \sep
Sequential Protocol Composition  \sep
Protocol Verification  \sep
Maude-NPA 
\end{keyword}

\end{frontmatter}

%%%%%%%%%%% The article body starts:

%\section{}\label{s1}

%\subsection{}\label{s1.1}

%\begin{figure}[t]
%\includegraphics{}
%\caption{Figure caption.}\label{f1}
%\end{figure}

%\begin{table*}
%\caption{} \label{t1}
%\begin{tabular}{lll}
%\hline
%&&\\
%&&\\
%\hline
%\end{tabular}
%\end{table*}

\section{Introduction}\label{sec:intro}

The area of formal analysis of cryptographic protocols has been an active one since the mid 1980's.
The idea is to verify 
protocols that use cryptography to guarantee security against an attacker
%protocols that use encryption to guarantee secrecy, and that use 
%authentication of data to ensure security, 
%against an attacker 
---commonly called the {\em Dolev-Yao} attacker \citep{dolev-yao}--- 
who has complete
control of the network, and can intercept, alter, and redirect traffic, create new traffic on his/her own, perform all
operations available to legitimate participants, and may have access to some subset of the longterm keys
of legitimate principals.    Whatever approach is taken, the use of formal methods has had a long
history, not only 
for providing formal proofs of security, but  also for uncovering bugs 
%of providing proofs and of uncovering bugs
and security
flaws that in some cases had remained unknown 
long after the original protocol's publication.

A number of approaches have been taken to the formal verification of cryptographic protocols.
One of the most popular is model checking, in which the interaction of the protocol with the
attacker is symbolically executed.  Indeed, model-checking of secrecy (and later, authentication) in protocols in the bounded-session
model (where a {\em session} is a single  execution of a process representing an honest principal) has been shown
to be decidable \citep{rusinowitch-np}, 
and a number of bounded-session model checkers exist.  Moreover, a number
of unbounded model checkers either make use of abstraction to enforce
decidability, or allow for the possibility of non-termination.

It is well known that many problems in the security of cryptographic
protocols arise when the protocols are composed. This is true whether the composition is parallel, in which two different protocols are executed in an interleaved fashion, or sequential, in which one or more child protocols use information from executing a parent protocol.  Protocols that work  correctly
in one environment may fail when they are composed with new protocols in new environments,
either because the properties they guarantee are not quite appropriate for the new environment,
or because the composition itself is mishandled.   Security of parallel composition can generally be achieved by avoiding ambiguity about which  protocol a message belongs to (as in, e.g. \citep{GuttmanThayer00,CiobacaC10}).  The necessary conditions for security of sequential composition are  harder to pin down, since they depend on the guarantees offered and needed by the particular protocols being analyzed.

To see an example of the problems that can arise, consider the analyses of the Internet Engineering Task Force's (IETF) Group Domain of Interpretation (GDOI) protocol \cite{baugher2003group},
in which the third author of this paper was involved.  GDOI is a secure multicast protocol built on top of the IKE Version 1 (V1) \cite{ikev1}
key distribution protocol, which had already undergone at least one formal analysis \cite{meadows1999analysis} and substantial peer review by the IETF.  However, early versions of GDOI had two subtle flaws, one arising from the composition of GDOI with IKE, and the other arising from the way a subprotocol of GDOI was used by the parent protocol.  One was a type confusion attack that took advantage of the fact that IKE V1 headers began with random numbers instead of a field indicating what type of protocol it was \cite{meadows2004formal}.  An intruder could take advantage of this confusion  to obtain a group key to which it was not entitled.  The discovery of this attack led to a redesign of the
GDOI protocol before it was submitted as a standard.   Another attack involved a subprotocol of GDOI, called the \emph{Proof of Possession} (POP) protocol.  The GDOI specification was not clear about the situations in  which POP was to be used.  Once these were clarified, it was discovered the POP was also subject to an attack \cite{meadows2004deriving}.  This discovery led to a modification of GDOI to fix this vulnerability.

The importance of understanding sequential composition has long been acknowledged, and there are a number
of logical systems that support such compositional reasoning.  The Protocol Composition Logic (PCL) begun with %in %of Datta et al. 
\citep{durgin01csfw} is probably the first protocol
logic to approach composition in a systematic way. Logics such as the Protocol Derivation Logic (PDL) \citep{cerv05csfw}, and tools such as
the Protocol Derivation Assistant (PDA)  \citep{anlauff06} and the Cryptographic Protocol Shape Analyzer (CPSA)   \citep{doghmi} also support reasoning about composition.
All of these are logical systems and tools that support reasoning about the properties guaranteed by the protocols.  One uses the logic to determine whether the properties guaranteed by the protocols are adequate.  This is a natural way to approach sequential composition, since one can use these tools to determine whether the properties guaranteed by one protocol are adequate for the needs of another protocol that relies upon it.  Thus,   PCL   and  the authentication tests methodology
underlying CPSA   are used to analyze key exchange standards and electronic commerce protocols in 
 \citep{datta03mfps} and  in \citep{guttman02csfw} 
respectively, via sequential composition out of simpler components.

Less attention has been given to handling sequential composition when model checking protocols, especially in the case in which an instance of a parent protocol can spawn multiple instances of  subprotocols, e.g., in the case of a master key agreement protocol that can be used multiple times to generate a session key.  We believe that this is an imbalance that needs to be corrected, for  logical systems and state exploration techniques make complementary contributions to our understanding of the security of a protocol.  Logical methods allow us to construct proofs from basic assumptions, much as we develop protocols that use basic cryptographic algorithms.  These logical systems provide insight into  how a protocol achieves security, and what basic assumptions it depends on.  State exploration tools, on the other hand, provide concrete attacks that can be used in fixing a protocol.  Moreover, they are also useful for discovering behaviors that, while they may not violate specified security properties, nevertheless turn out to be undesirable.  This can be used to inform and refine the logical systems.  Finally, state-exploration-based models can provide a useful semantics for logical systems.
%Also, we need such a formal semantics for sequential protocol composition in the strand space before
%considering compositional reasoning. %, especially when considering dynamic properties of protocols such as secrecy or authentication.
%We plan to consider compositional reasoning of protocol properties once the infrastructure for sequential protocol composition is defined in Maude-NPA.

%The problem is in providing a specification and verification environment that supports composition.   Surprisingly, the
%two leading formal calculi underlying most current cryptographic protocol analysis tools, the pi calculus \citep{AbadiF01}, and strand  spaces \citep{strands} do not provide much support for this.  The pi calculus provides operators for parallel composition, replication, and deterministic choice, but does not have a sequential composition operator.  The strand space model represents protocols as straight-line programs, without even a choice operator. 
%%Thus in most cases adding support for composition will require either modifying the  semantics and formal model underlying the tool, or finding a way to work around it.  
%This all means that adding explicit protocol composition operators to such formalisms  would be a non-trivial task, similar to the one
%undertaken in this paper for the Maude-NPA specification language. Such a task would require extending the syntax and semantics of
%formalisms, and, to the best of our knowledge, has not been undertaken so far.

The problem is in providing a specification and verification environment that supports composition.  
This is not necessarily straightforward; we note in particular that the
two leading formal calculi underlying most current cryptographic protocol analysis tools, the pi calculus \citep{AbadiF01}, and strand  spaces \citep{strands} 
do not provide general sequential  operators that can be used to specify  protocol compositions.  
That does not mean of course that one cannot define in these languages protocols that are actual compositions of other protocols; 
it just means that sequential protocol composition is not supported at the language level and must be encoded by the user in ways 
that may depend on the particular composition at hand.  
The aim of this work is to provide  
specification primitives for a wide variety of compositions at the language level in a way 
 that is both transparent to the user and sound and complete with respect to a desired semantics.   
%That is the aim of this paper. 

There are several ways that composition can be handled in state exploration systems.   One is to not  modify the tool at all, but to handle everything at the specification level, by concatenating protocols that are being composed: for example, a master key agreement protocol followed by a session key
distribution protocol.  This requires no modification, but besides being tedious to specify when many different ways of composition are possible, it also cannot be used to represent cases in which a parent protocol can have an arbitrary number of children, as is indeed the case in the master/session key case.  Another is to compose protocols at execution time, but without modifying the operational semantics of the tool, an approach we took in \citep{EMMS10}. 
Although this minimizes the modifications made to the tool, it can lead to counterintuitive and inefficient methods of specification and analysis, since we are adapting composition to the original semantics instead of the other way around.  Finally, we can extend the operational semantics of the tool, but minimize 
such an extension by adding or modifying as few semantic rules as possible.

In this paper we describe how  this third approach has been applied to the Maude-NPA protocol analysis tool and its strand-space-based semantics.  We first give an abstract composition semantics, first introduced in \citep{EMMS10}, that extends Maude-NPA's operational semantics using the concept of \emph{parameterized} strands  \citep{guttman2001security} augmented with a separate composition operator.  We then describe an extension of Maude-NPA to Maude-NPA \emph{with composition} via \emph{synchronization messages}, in which composition is achieved by means of strand space parameters, but without the separate composition operator, and prove soundness and completeness of the operational semantics of Maude-NPA with composition with respect  to a subset of the abstract semantics. 
This extension allows us to minimize the changes made to Maude-NPA, as well as the number of rewrite rules that need to be added to its operational semantics.  We provide evidence that this approach to extending Maude-NPA is probably optimal by comparing it with an earlier approach we took  in \citep{EMMS10}, in which composition was implemented via \emph{protocol transformation}, in which the synchronization  messages implementing composition were passed along the Dolev-Yao channel.  Although this required fewer modifications to Maude-NPA, since it already supported  communication in the Dolev-Yao model, the additional communication overhead had a negative impact on performance.  We illustrate this via experiments comparing the performance of  both composition approaches, via synchronization messages and protocol transformation.

Thus the contributions of this paper are the following:

\begin{enumerate}
\item It provides a formal definition of sequential protocol composition in the strand space model  (Section~\ref{sec:mnpa-comp}).
\item It provides a new operational semantics of protocol composition in Maude-NPA (Section~\ref{sec:mnpa-comp}).
%\item It improves the operational semantics for protocol composition of \citep{EMMS10} (Section~\ref{sec:mnpa-comp}).
\item It provides a simple and intuitive syntax for protocol composition in Maude-NPA
(Section~\ref{sec:mnpa-comp}).
\item It describes an implementation of protocol composition directly in Maude-NPA via the operational semantics, giving a proof of its soundness and completeness with respect to a subset of the abstract semantics.
(Section~\ref{sec:directComposition}). %given certain restrictions on the abstract semantics.
\item It provides an experimental evaluation of the new operational semantics, and
compares its performance with respect to the protocol transformation technique presented in~\citep{EMMS10}
(Section~\ref{sec:experiments-composition}).
\end{enumerate}

%\todo{Sonia: Put here new names for each composition?}

The rest of the paper is organized as follows.  
In Section~\ref{sec:examples-composition} 
we introduce  two %four 
motivating examples of sequential protocol composition, which will be used
throughout this paper as running examples.
After some % motivating examples in Section \ref{sec: examples-composition} and 
preliminaries in Section~\ref{sec:preliminaries},
 we give an overview of the Maude-NPA tool and its operational semantics   in Section~\ref{sec:maude-npa} (referred to as \emph{basic Maude-NPA} to distinguish it from Maude-NPA with composition).
In Section~\ref{sec:mnpa-comp} we describe the syntax for sequential protocol composition and its operational semantics.
In Section~\ref{sec:directComposition} we describe  an implementation of composition via synchronization messages, and show that it is sound and complete with respect to
the semantics given in Section~\ref{sec:mnpa-comp}.  
A presentation   of  the protocol transformation approach to protocol composition
is described in Section~\ref{sec:transfComposition}, preparatory to an experimental evaluation and performance comparison between the two approaches   given in Section~\ref{sec:experiments-composition}.
Finally, in Section~\ref{sec:conc} we conclude the paper and discuss related and future work, as well as lessons learned.

%The rest of the paper is organized as follows.  
%In Section~\ref{sec:examples} 
%we introduce two %four 
%protocols 
%that we will use as running examples.                                   
%After some preliminaries in Section~\ref{sec:preliminaries},
%in Section~\ref{sec:maude-npa} we give an overview of the Maude-NPA tool and its operational semantics, and 
%in  Section~\ref{sec:currentSyntax} we describe our current  syntax for protocol specification and analysis in the Maude-NPA.
%Section~\ref{sec:mnpa-comp} explains  
%the
%new   syntax and operational semantics implemented to support protocol composition
%in Maude-NPA.
%The results of our experiments on the protocols presented in this paper
%are discussed in Section~\ref{sec:experiments}.
%Finally, in Section~\ref{sec:conc} we conclude the paper and discuss related and future work.
%

\section{Motivating Examples}\label{sec:examples-composition}

  In this section we provide several motivating examples of sequential composition.
  These  examples give a flavor for the variants of sequential composition that are used in constructing cryptographic protocols. 
 A single parent protocol instance can be composed with either many instances of a child protocol, 
 or with only one such child instance.
 Likewise, parent protocol roles can determine child protocol roles, or child protocol roles can be unconstrained.  
 In Section~\ref{sec:examples-NSLDB}
we provide  an example of a one-parent, one-child protocol composition,
 which  appeared in \citep{Guttman-location}
 and which is subject to a distance hijacking attack previously described in \cite{EMMS10}; we also provide a corrected version that it is proved to be secure against distance hijacking.
 In Section~\ref{sec:examples-NSLKD}
 we provide  an example of a one-parent, many-children protocol composition which is proved secure by our tool.

 \subsection{NSL Distance Bounding Protocol}\label{sec:examples-NSLDB}
 
 In this example of a one-parent, one-child protocol composition, appeared in   \citep{Guttman-location},
%In this protocol, 
the participants  first use NSL to agree on a secret nonce. 
We reproduce the NSL protocol below using textbook Alice-and-Bob notation where
$A \rightarrow B: m$ means participant with name $A$ sending the message $m$ to the participant with name $B$:

\vspace{1ex}
\begin{quote}
\begin{enumerate}
\item $A \rightarrow B: \{N_A,A\}_{pub(B)}$
\item $B \rightarrow A: \{N_A,N_B,B\}_{pub(A)}$
\item $A \rightarrow B: \{N_B\}_{pub(B)}$
\newcounter{enumi_saved}
\setcounter{enumi_saved}{\value{enumi}}
\end{enumerate}
\end{quote}
\vspace{1ex}
\noindent where 
$\{M\}_{pub(A)}$ means message $M$ encrypted using the public key of principal with name $A$, $N_A$ and $N_B$ are nonces generated by the respective principals, and we use the comma as message concatenation.

 The agreed nonce $N_{A}$ is then used in a
distance bounding protocol described below.  This is a type of protocol, originally proposed by 
%Desmedt 
\citep{desmedt}
for smart cards, which has received new interest in recent years for its possible application in 
wireless environments \citep{capkun}. 
%The idea behind the protocol is that
%Bob uses the nonce he received from Alice to compute his distance from her in the following
%protocol.
%
The idea behind the protocol is that
Bob uses the round trip time of a challenge-response protocol with Alice to compute an upper bound on her distance from him 
according to the following
protocol:

\begin{quote}
\begin{enumerate}
\setcounter{enumi}{\value{enumi_saved}}
\item $B \rightarrow A: N_B'$

\noindent Bob records the time at which he sent $N_B'$

\item $A \rightarrow B: N_A \oplus N_B'$

\noindent Bob records the time he receives the response and checks 
the equivalence $N_{A} = N_A \oplus N_B' \oplus N'_{B}$.  If this holds, he uses the round-trip time of his challenge and response to estimate his distance from Alice
\end{enumerate}
\end{quote}

\noindent
where $\oplus$ is the exclusive-or operator satisfying 
associativity (i.e., $X \oplus (Y \oplus Z) = (X \oplus Y) \oplus Z$)
and 
commutativity (i.e., $X \oplus Y = Y \oplus X$)
plus the self-cancellation
property
$X \oplus X = 0$ and 
the identity property $X \oplus 0 = X$.
Note that Bob is the initiator 
and Alice is the responder of the distance bounding protocol,
in contrast to the NSL protocol.

This protocol must satisfy two requirements.  
The first is that it must guarantee that $N_A \oplus N'_B$ 
 was sent after $N'_B$ was
received, or Alice will be able to pretend that she is closer than she is.  Note that if Alice and Bob do not agree on $N_A$
beforehand, then Alice will be able to mount the following attack:
%     
%     
%\begin{quote}
%\begin{enumerate}
%\setcounter{enumi}{\value{enumi_saved}}
%\item 
$B \rightarrow A: N_B'$ 
and then
%\item 
$A \rightarrow B: N$.
%
%\end{enumerate}
%\end{quote}
%
\noindent Of course, $N = N_B' \oplus X$ for some $X$.  But Bob has no way of telling if Alice computed $N$ using
$N_B'$ and $X$, or if she just sent a random $N$.
Using NSL to agree on a $X = N_A$ in advance prevents this type of attack.

Bob also needs to know that the response comes from whom it is supposed to be from.  In particular, an attacker should not be able to impersonate Alice.  Using NSL to agree on $N_A$ guarantees that only Alice and Bob can know $N_A$, so the attacker cannot impersonate Alice.  However, it should also be the case that an attacker cannot pass off Alice's response as his own. 
This is not the case for the NSL distance bounding protocol, which is subject to a form of what has come
to be known as the Distance Hijacking Attack \citep{DBLP:conf/ndss/CremersRC12}
\footnote{This is not meant as a denigration of \citep{Guttman-location}, whose main focus is on timing models in strand spaces, not the design of distance bounding protocols.}.   This attack was found by the authors of this paper by inspection and has been previously described in \cite{EMMS10}.

\begin{enumerate}\label{attack-nsldb}
\item[a)] Intruder $I$ runs an instance of NSL with Alice as the initiator and $I$ as the responder, obtaining
a nonce $N_A$.
\item[b)] $I$ then runs an instance of NSL with Bob with $I$ as the initiator and Bob as the responder, using $N_A$ as the initiator nonce.
\item[c)] $B \rightarrow I : N_B'$
where
%\noindent 
$I$ does not respond, but Alice, seeing this, thinks it is for her.
\item[d)] $A \rightarrow I : N_B' \oplus N_A$
where 
%\noindent  
Bob, seeing this thinks this is $I$'s response.
\end{enumerate}

If Alice is closer to Bob than $I$ is, then $I$ can use this attack to appear closer to Bob than he is.
This attack is a textbook example of a composition failure.  NSL has all the properties of a good key distribution protocol,
but fails to provide all the guarantees that are needed by the distance bounding protocol.  However, in this case we can fix the problem, not by changing NSL, but by changing the distance bounding protocol so that it provides a stronger guarantee:
\begin{quote}
\begin{enumerate}
\setcounter{enumi}{\value{enumi_saved}}
\item $B \rightarrow A: N_B'$
\item $A \rightarrow B: h(N_A , A) \oplus N_B'$
\noindent where $h$ is a collision-resistant hash function.%
\end{enumerate}
\end{quote}
%\vspace{-.68ex}
\noindent
As we show in our analysis in Section~\ref{sec:experiments-composition}, this prevents the attack.  $I$ cannot pass off Alice's nonce as his own because it is now bound to her name.  

The distance bounding example is a case of a one parent, one child protocol composition.  Each instance of the parent NSL
protocol can have only one child distance bounding protocol, since the distance bounding protocol depends upon the assumption that $N_A$ is known only by $A$ and $B$.  But since the distance bounding protocol reveals $N_A$, it cannot be used with the same $N_A$ more than once.

\subsection{NSL Key Distribution Protocol}\label{sec:examples-NSLKD}

Our next example is a one parent, many children protocol composition, also using NSL.  This type of composition arises, for example, in key distribution protocols in which the parent protocol is used to generate a master key, and the child protocol is used to generate a session key.  In this case, one wants to be able to run an arbitrary number
of instances of the child protocol with the same master key.

In the distance bounding example the initiator of the distance bounding protocol was always the child of the
responder of the NSL protocol and vice versa.  In the key distribution example, the initiator of the session key protocol
can be the child of either the initiator or the responder of the NSL protocol.  So, we have two possible child executions after NSL:

\vspace{2ex}
\begin{tabular}{@{}l@{}l@{}}
\begin{minipage}{.5\linewidth}
%\begin{quote}
\begin{enumerate}
\setcounter{enumi}{\value{enumi_saved}}
\item $A \rightarrow B: \{Sk_A\}_{h(N_A,N_B)}$
\item $B \rightarrow A: \{Sk_A ; N'_{B}\}_{h(N_A,N_B)}$
\item $A\rightarrow B: \{N'_{B}\}_{h(N_A,N_B)}$
\end{enumerate}
%\end{quote}
\end{minipage}

%\noindent

&

\begin{minipage}{.5\linewidth}
%\begin{quote}
\begin{enumerate}
\setcounter{enumi}{\value{enumi_saved}}
\item $B \rightarrow A: \{Sk_B\}_{h(N_A,N_B)}$
\item $A \rightarrow B: \{Sk_B ; N'_{A}\}_{h(N_A,N_B)}$
\item $B \rightarrow A: \{N'_{A}\}_{h(N_A,N_B)}$
\end{enumerate}
%\end{quote}
\end{minipage}

\end{tabular}
\vspace{1ex}

\noindent
where 
$Sk_{A}$ is the session key generated by principal $A$
and $h$ is again a collision-resistant hash function.
This protocol is proved secure by our tool in Section~\ref{sec:experiments-composition}.

%%%%%%%%%%%%%%%%%%%%%
% Prelim TERM REWRITING
%%%%%%%%%%%%%%%%%%%%
\section{Background on Term Rewriting}
\label{sec:preliminaries}

In this section we provide background on the concepts from term rewriting
used in this paper.  Due to space constraints, this section is rather terse and mainly intended to 
reference purposes.  The reader should consult it as needed. Readers familiar with such terminology and notation can skip this section and proceed to 
the next section, where we provide examples of protocol specification.

We follow the classical notation and terminology from
\citep{Terese03} for term rewriting
and from
\citep{Meseguer92-tcs,tarquinia} for rewriting logic and order-sorted notions.

We assume an \textit{order-sorted signature} $\Symbols$
with a finite poset of sorts $(\sort{S},\leq)$ and a finite 
number of function symbols.
%We furthermore assume that:
%(i) each connected component in the poset ordering has a top sort,
%and for each $\sort{s}\in\sort{S}$ we denote by $[\sort{s}]$ the top
%sort in the component of \sort{s};
%and (ii) for each operator declaration 
%$f: \sort{s_1} \times \ldots \times \sort{s_n} \rightarrow \sort{s}$
%in $\Symbols$,
%there is also a declaration 
%$f: [\sort{s_1}] \times \ldots \times [\sort{s_n}] \rightarrow [\sort{s}]$.
We assume an $\sort{S}$-sorted family 
$\Variables=\{\Variables_\sort{s}\}_{\sort{s} \in \sort{S}}$
of disjoint variable sets with each $\Variables_\sort{s}$
countably infinite.
$\TermsS{s}$
denotes the set of terms of sort \sort{s},
and
$\GTermsS{s}$ the set of ground terms of sort \sort{s}.
We write 
$\Terms$ and $\GTerms$ for the corresponding term algebras.
We write $\var(t)$ for the set of variables present in a term $t$.
The set of positions of a term $t$ is written $\occ{t}$,
and
the set of non-variable positions $\funocc{t}$.
%The root of a term is $\rootpos$.  
The subterm of $t$ 
at position $p$
is $\subterm{t}{p}$, and $\replace{t}{p}{u}$ is 
%the subterm $\subterm{t}{p}$ in $t$
%replaced by $u$.
the result of replacing $\subterm{t}{p}$ by $u$ in $t$.
In Maude-NPA, we use sorts to indicate such things as which terms
are intended to be parts of messages, and which terms, such as strands,
are part of the higher-level infrastructure.  We also use sorts to provide restrictions
on how messages may be constructed.  For example, we can specify an encryption function
as symbol $e$ of arity two, where the first argument must be of sort $\sort{key}$, while the second
argument is of sort $\sort{message}$, where $\sort{key} < \sort{message}$.                                                  

A \textit{substitution} $\sigma$ is a sort-preserving mapping
from a finite subset of $\Variables$ %, written $\domain{\sigma}$,
to $\Terms$.
The set of variables assigned by $\sigma$ is $\domain{\sigma}$
and
the set of variables
introduced by $\sigma$ is $\range{\sigma}$.
The identity
substitution is $\idsubst$.
Substitutions are homomorphically extended to $\Terms$.
Application of substitution $\sigma$ to term $t$ is denoted by $t\sigma$.
Thus $e(K,X) (\sigma = \{ K \mapsto key(A,B), X \mapsto n(A,r) \} = e(key(A,B),n(A,r))$.
The restriction of $\sigma$ to a set of variables $V$ is 
${\sigma}\restrict{V}$.
The composition of two substitutions is 
$x(\sigma\theta)=(x\sigma)\theta$ for $x\in\Variables$.

A \textit{$\Symbols$-equation} is an unoriented pair $t = t'$,
where $t \in \TermsS{s}$,
$t' \in \TermsS{s'}$,
and $s$ and $s'$ are sorts in the same connected
component of the poset  $(\sort{S},\leq)$.
Given a set $E$ of $\Symbols$-equations,
%such that $\GTermsS{s}\neq\emptyset$ for every sort \sort{s},
order-sorted equational logic
induces 
a congruence relation $\congr{E}$ on terms $t,t' \in \Terms$;
see \citep{tarquinia}.   Throughout this
paper we assume that $\GTermsS{s}\neq\emptyset$ for every sort \sort{s}.
%%
%%The \textit{$E$-subsumption} order on terms $\TermsS{s}$,
%%written $t \subsume{E} t'$ (meaning that $t'$ is
%%more general than $t$), holds if 
%%$\exists \sigma : t \congr{E} \sigma(t')$.  
%%The $E$-renaming equivalence on terms $\TermsS{s}$,
%%written $t \approx_{E} t'$, holds if 
%%$t \subsume{E} t'$ and $t' \subsume{E} t$.
%%We extend $\congr{E}$ %, $\approx_{E}$, 
%%and $\subsume{E}$
%%to substitutions
%%in the expected way.
We denote the $E$-equivalence class of a term $t\in\Terms$
as $[t]_E$ and the $S$-sorted families of sets of $E$-equivalence classes of all terms $\Terms$
and $\TermsS{s}$
as $\ETerms$, and $\ETermsS{s}$ for the quotient set of sort \sort{s}, respectively.
A substitution $\sigma$ is \emph{more general modulo $E$} than another substitution $\theta$,
written $\sigma \sqsupseteq_{E}  \theta$,
iff there is a substitution $\rho$
such that $\sigma\rho \congr{E} \theta$,
i.e., such that $x\sigma\rho =_E x\theta$ for each $x\in\Variables$.  
In Maude-NPA we use equations to represent the properties
of crypto systems.  Thus, if we want to represent the fact that decryption with a key cancels out encryption with
the same key, we can use the equation $d(K,e(K,X)) = X$.  

%For a set $E$ of $\Symbols$-equations, an \textit{$E$-unifier} for a $\Symbols$-equation $t = t'$ is a
%substitution $\sigma$ s.t. $t\sigma \congr{E} t'\sigma$. 
%A \textit{complete} set of $E$-unifiers of an
%equation $t = t'$ is written
%$\csu{t = t'}{W}{E}$. 
%We say 
%that $E$-unification is \emph{finitary} if
%$\csu{t = t'}{W}{E}$  contains a finite
%number of $E$-unifiers for any equation $t = t'$.
%%This notion can be extended to several equations,
%%written 
%%$\csu{t_{1} = t'_{1} \wedge \cdots \wedge t_{n} = t'_{n}}{W}{E}$.

For a set $E$ of $\Symbols$-equations, an \textit{$E$-unifier} 
for a $\Symbols$-equation $t = t'$ is a
substitution $\sigma$ s.t. $t\sigma \congr{E} t'\sigma$. 
%A \textit{complete} set of $E$-unifiers of an
%equation $t = t'$ is written
%$\csu{t = t'}{W}{E}$. 
%%%%%%%%%%%
For
$\vars{t}\cup\vars{t'} \subseteq W$, a set of substitutions $\csuV{t =
  t'}{W}{E}$ is said to be a \textit{complete} set of 
   of $E$-unifiers of an equation
$t = t'$  away from $W$ iff:
(i) each $\sigma \in
\csuV{t = t'}{W}{E}$ is an $E$-unifier of $t = t'$;
(ii) for
any $E$-unifier $\rho$ of $t = t'$ there is a $\sigma \in
\csuV{t=t'}{W}{E}$ such that $\subterm{\sigma}{W} \sqsupseteq_{E} \subterm{\rho}{W}$; 
(iii) for all
$\sigma \in \csuV{t=t'}{W}{E}$, $\domain{\sigma} \subseteq
(\vars{t}\cup\vars{t'})$ and $\range{\sigma} \cap W = \emptyset$.
If the set of variables $W$ is irrelevant or is understood from the context,
we write $\csu{t = t'}{W}{E}$ instead of $\csuV{t = t'}{W}{E}$.
%%%%%%
We say 
that $E$-unification is \emph{finitary}  if
$\csu{t = t'}{W}{E}$  contains a finite
number of $E$-unifiers for any equation $t = t'$, and \emph{unitary} if it contains most one.
For example,  $E$-unification when $E = \{d(K,e(K,X)) = X\}$ is finitary but not unitary.  
For example, the complete set of unifiers  $\csu{d(K,X) = Y}{W}{E}$
contains two substitutions: $\sigma_1= \{Y \mapsto  d(K,X)\}$ and  $\sigma_2 = \{X \mapsto e(K,Y)\}$ .

A \textit{rewrite rule} is an oriented pair $l \to r$, where
$l \not\in \Variables$
and
$l,r \in \TermsS{s}$ for some sort $\sort{s}\in\sort{S}$. 
An \textit{(unconditional)
  order-sorted rewrite theory} is a triple $\caR = (\Symbols,E,R)$
with $\Symbols$ an order-sorted signature, $E$ a set of
$\Symbols$-equations, and $R$ a set of rewrite rules.  
A \emph{topmost rewrite theory} $(\Symbols,E,R)$ is a rewrite theory 
s.t.
for each $l \to r \in R$, $l,r\in\TermsSt$ for a top sort \sort{State},
%$r \not\in \Variables$, %\todo{Jose: is $l \not\in \Variables$ or $r \not\in \Variables$?}
and no operator in $\Symbols$ has \sort{State} as an argument sort.  In Maude-NPA, topmost rewriting is used to describe which
states can follow from other states.  That is, these rewrite rules are topmost rules of form $S \to S'$, where $S$ and $S'$ are both of topmost terms sort \sort{State}. The theory $E$ used by Maude-NPA describes the equational properties of the cryptosystem.

The rewriting relation $\rewrite{R}$ on
$\Terms$ is 
$t \rewritePos{R}{p} t'$ 
(or $\rewrite{R}$) 
if 
$p \in \funocc{t}$,
$l \to r\in R$, 
$\subterm{t}{p} = l\sigma$, 
and $t' =
\replace{t}{p}{r\sigma}$
for some $\sigma$.
The relation $\rewrite{R/E}$
on $\Terms$ is 
$\congr{E} \composeRel\rewrite{R}\composeRel\congr{E}$,
i.e.,
$t \rewrite{R/E} s$
iff
$\exists u_1,u_2\in\Terms$ s.t.
$t \congr{E} u_1 \rewrite{R} u_2 \congr{E} s$.
Note that
$\rewrite{R/E}$ on $\Terms$
induces a relation 
$\rewrite{R/E}$ on $\ETerms$
by
$[t]_{E} \rewrite{R/E} [t']_{E}$ iff $t \rewrite{R/E} t'$.  
%
%When $\caR = (\Symbols,E,R)$ is a topmost rewrite theory,
%we can safely restrict ourselves to 
%the rewriting relation $\rewrite{R,E}$ on
%$\Terms$ instead of the relation $\rewrite{R/E}$. %where
%$t \rewritePos{R,E}{\toppos} t'$ 
%(or just $\rewrite{R,E}$) 
%if 
%$l \to r\in R$, 
%$t \congr{E} \sigma(l)$, 
%and $t' = \sigma(r)$. 
%
The relation $\rewrite{R/E}$
is undecidable in general, since $E$-congruence classes can be arbitrarily large,
and 
the simpler relation $\rewrite{R,E}$ is used.
The rewriting relation $\rewrite{R,E}$ on
$\Terms$ is performed by applying narrowing to representatives of 
${t \rewritePos{R,E}{p} t'}$ 
(or $\rewrite{R,E}$) 
if 
$p \in \funocc{t}$,
$l \to r\in R$, 
$\subterm{t}{p} \congr{E} l\sigma$, 
and $t' =
\replace{t}{p}{r\sigma}$
for some $\sigma$.  
The narrowing relation $\narrow{}{R}$ on
$\Terms$ is 
$t \narrow{p}{\sigma,R} t'$ 
(or $\narrow{}{\sigma,R}$, $\narrow{}{R}$) 
if 
$p \in \funocc{t}$,
$l \to r\in R$, 
$\sigma \in \csu{\subterm{t}{p} = l}{W}{\emptyset}$, 
and $t' = \sigma(\replace{t}{p}{r})$.  
Assuming that $E$ has a finitary and complete unification algorithm,
the narrowing relation $\narrow{}{R,E}$ on
$\Terms$ is
$t \narrow{p}{\sigma,R,E} t'$ 
(or $\narrow{}{\sigma,R,E}$, $\narrow{}{R,E}$) 
if 
$p \in \funocc{t}$,
$l \to r\in R$, 
$\sigma\in\csu{t|_p = l}{V}{E}$, 
%$\sigma(t|_p) \congr{E} \sigma(l)$, 
and $t' = (\replace{t}{p}{r})\sigma$.

Maude-NPA uses narrowing modulo $E$ to perform state space exploration.     Its use of topmost rewrite theories provides several advantages; see \citep{narrowing-hosc06}:
(i) %as pointed out above 
the relation 
$\rewrite{R,E}$ achieves the same effect as the relation $\rewrite{R/E}$,
%(ii) also the narrowing relation $\narrow{}{R,E}$ achieves
%the same effect as a more general narrowing relation $\narrow{}{R/E}$,
and
(ii) we obtain a completeness result between narrowing ($\narrow{}{R,E}$) and 
rewriting ($\rewrite{R/E}$),
in the sense that a reachability problem has a solution
iff narrowing can find an instance of it.%Note that
%$\rewrite{R,E}$ on $\Terms$
%induces 
%a relation
%$\rewrite{R,E}$ on $\ETerms$
%by
%$[t]_{E} \rewrite{R,E} [t']_{E}$ iff 
%$\exists w\in\Terms$ s.t. $t \rewrite{R,E} w$ and $w \congr{E} t'$.  
%
%We say that a term $t$ is \emph{$\rewrite{R,E}$-irreducible} 
%(or simply irreducible) if there is no term $t'$ such that
%$t \rewrite{R,E} t'$.
%We say that a term $t$ is \emph{strongly $\rewrite{R,E}$-irreducible} 
%(or simply strongly irreducible)
%if for any substitution $\sigma$
%such that for each $x\in\dom(\sigma)$, $\sigma(x)$ is 
%$\rewrite{R,E}$-irreducible, 
%then $\sigma(t)$ is $\rewrite{R,E}$-irreducible.
%When the equational theory $E$ can be viewed as a set of oriented rewrite rules
%$\Delta$ and a set of equational axioms $A$, i.e., $E=(\Delta\uplus A)$,
%the rewrite theory $(\Symbols,A,\Delta)$ can be defined. In this situation,
%we say that a term $t$ (or a substitution $\theta$)
%is \emph{strongly $E$-irreducible} 
%if it is strongly $\rewrite{\Delta,A}$-irreducible.
%Note that
%$\narrow{}{R,E}$ on $\Terms$
%induces 
%a relation
%$\narrow{}{R,E}$ on $\ETerms$
%by
%$[t]_{E} \narrow{\sigma}{R,E} [t']_{E}$ iff 
%$\exists w\in\Terms:t \narrow{\sigma}{R,E} w$ and $w \congr{E} t'$.  

For equational theories $E$ describing the properties of cryptosystem, Maude-NPA uses
$E=E' \uplus Ax$ such that the   equations $E'$ oriented as left-to-right rules are
confluent, coherent, and 
terminating modulo 
axioms $Ax$ such as commutativity ($C$), 
associativity-commutativity ($AC$),
or associativity-commutativity plus identity ($ACU$)
of some function symbols.
We also require axioms $Ax$ to be \emph{regular}, i.e.,
for each equation $l = r \in Ax$, $\var(l) = \var(r)$.

Note that axioms such as
commutativity ($C$), 
associativity-commutativity ($AC$),
or associativity-commutativity plus identity ($ACU$) are regular.
Maude-NPA has  both dedicated and generic algorithms 
for solving unification problems in such theories $E' \uplus Ax$
under appropriate conditions
\citep{Escobar-JLAP}.

%%%%%%%%%%%%%%%%%%%%%%%%%%%%%%%%%%%%%%%%%%%%%%%%%%%%%%%%%%%%%%%%%%%%%%%%%%%%%

\section{Basic Maude-NPA's Execution Model and Protocol Analysis}\label{sec:maude-npa}  
%\todo{Cathy and Jose:   "Basic Maude-NPA"?  to avoid confusion with Core Maude?}
%\todo{Sonia: in some part we refer to this as ``original semantics of Maude-NPA''}

%\subsection{Execution Model}
 In this section we describe the core syntax and semantics of Maude-NPA as described in  \citep{FOSAD07}.
 We refer to it here as \emph{basic Maude-NPA} to distinguish it from Maude-NPA with composition.  When we describe features
 that will be modified once composition is added, we refer explicitly to ``basic Maude-NPA''.  When a feature is the same for
 both versions we simply say ``Maude-NPA.''

%Given a protocol $\mathcal{P}$, 
%a \emph{state} in the protocol execution 
%is
%an $E_\caP$-equivalence class $[t]_{E_\caP}$
%with $t$ a term
% of sort \sort{State},
%$[t]_{E_\caP} \in T_{\Sigma_{\mathcal P}/ E_{\mathcal P}}(\cal{X})_{\sort{State}}$,
%where $\Sigma_{\mathcal{P}}$ is the signature
%defining the sorts and function symbols for the cryptographic functions and for
%all the state constructor symbols,
%and
%$E_{\mathcal{P}}$ is a set of equations specifying the
%\emph{algebraic properties} of the cryptographic functions and the state constructors.
%%In the Maude-NPA \citep{EscMeaMes-tcs06,FOSAD07}, 
%There are two types of algebraic properties in Maude-NPA: 
%(i) {\em equational axioms}, such as commutativity,  associativity-commutativity,  
%or associativity-commutativity-identity,
%      called \emph{axioms},  and
%  (ii) {\em equational rules}, called \emph{equations}.
%% Axioms are specified in Maude-NPA input files within the typed function symbol declarations, whereas equations are specified separately
%    with the symbol \texttt{eq} and attribute \texttt{variant}; see \citep{FOSAD07}.
%%As an example, we provide the specification of the NSL protocol below.
%%
%The equational theory of the NSL protocol contains no axioms
%and only the equational rules for cancellation of encryption and decryption
%%\begin{align*}
%$pk(A,sk(A,X)) = X$ %\\
%and
%$sk(A,pk(A,X)) = X$.
%%\end{align*}
In Maude-NPA the behaviors of protocols is modeled using  rewrite theories.  Briefly, a protocol $\caP$ is a set of strands.  Each strand is either a \emph{protocol strand} that describes the actions of a role played by an honest principal, or an \emph{intruder strand} describing the ways in which the intruder can derive new messages, e.g. by generating nonces or applying functions symbols to messages it already knows.  Although the two are conceptually different, the are processed the same way in basic Maude-NPA.  Thus, given a protocol $\caP$, its behavior in basic Maude-NPA
is modeled by the rewrite theory $(\Symbols_\caP,E_\caP, \caR_\caP)$,
where $\Sigma_{\mathcal{P}}$ is the signature
defining the sorts and function symbols for the cryptographic functions and for
all the state constructor symbols,
$E_{\mathcal{P}}$ is a set of equations specifying the
\emph{algebraic properties} of the cryptographic functions and the state constructors,
and
$\caR_\caP$ is a set of rewrite rules representing the protocol's state changes.
More specifically, given a protocol $\mathcal{P}$, 
a \emph{state} in the protocol execution 
is
an $E_\caP$-equivalence class $[t]_{E_\caP}$
with $t$ a term
 of sort \sort{State},
$[t]_{E_\caP} \in T_{\Sigma_{\mathcal P}/ E_{\mathcal P}}(\cal{X})_{\sort{State}}$.
In basic Maude-NPA there are two types of algebraic properties: 
(i) {\em equational axioms}, such as commutativity,  associativity-commutativity,  
or associativity-commutativity-identity,
      called \emph{axioms},  and
  (ii) {\em equational rules}, called \emph{equations}.
%% Axioms are specified in Maude-NPA input files within the typed function symbol declarations, whereas equations are specified separately
%    with the symbol \texttt{eq} and attribute \texttt{variant}; see \citep{FOSAD07}.
% %As an example, we provide the specification of the NSL protocol below.
% 
Basic Maude-NPA includes two predefined sorts:
(i) the sort \sort{Msg} %of messages
that allows the protocol specifier to describe 
other sorts 
as subsorts of the %top 
sort \sort{Msg},
and
(ii) the sort $\sort{Fresh}$
%in the protocol-specific signature $\Symbols$
for representing fresh unguessable values,
e.g., nonces.

\begin{example}\label{ex:specNSL}
The specification of the NSL protocol in Maude-NPA is as follows.
A nonce generated by principal $A$
is denoted by $n(A,r)$, 
where $r$ is a unique variable of %a special 
sort $\sort{Fresh}$
and $A$ denotes who generated the nonce.  This representation makes it easier to specify and keep track of the origin of nonces.  E.g.,  one can use the notation to specify a state in which a principal accepts a nonce as coming from $A$ when it actually comes from some $B \ne A$.  Concatenation of two messages, e.g., $N_A$ and $N_B$, is denoted by the operator
$\_{;}\_$, e.g., $n(A,r)\ ;\ n(B,r')$.
Encryption of a message $M$ with the public key  of principal $A$
is denoted by $pk(A,M)$, e.g., 
$\{N_B\}_{pub(B)}$ is denoted by
$pk(B,n(B,r'))$.
Encryption with the secret key  of principal $A$ 
is denoted by $sk(A,M)$.
%
%\todo{Sonia: Is the following sentence necessary?}
%
The signature 
$\Symbols_{NSL}$ 
of the NSL protocol 
contains only terms such as $n(A,r)$, $M_1 ; M_2$,  $pk(A,M)$, and $sk(A,M)$.
 
The equational theory of the NSL protocol  contains no axioms
and only the equations describing
 public/private encryption cancellation:
 $E_{NSL} = \{ \, pk(A,sk(A,M)) = M, \ sk(A,pk(A,M)) =  M \}$.
%\todo{Sonia: have only the second equation?}
\end{example}

A protocol $\mathcal{P}$ is specified with a notation 
derived from strand spaces \citep{strands}.
In a \emph{strand},
a local
execution of a protocol by a principal is indicated by 
a sequence of messages
$ %$
[msg_1^-,\ msg_2^+,\ msg_3^-,   \allowbreak \ldots,\ msg_{k-1}^-,\ msg_k^+]
$ %$
where each $msg_{i}$ is a term of
sort \textsf{Msg} (i.e., $msg_{i}\in T_{\Sigma_{\mathcal{P}}}(\cal{X})_{\textsf{Msg}}$).
Strand items representing input messages are assigned a negative sign,
and strand items representing output messages are assigned a positive sign.
We write ${m}^{\pm}$ to denote $m^{+}$ or $m^{-}$,
indistinctively.
We often write $+(m)$ and $-(m)$ instead of $m^{+}$ and $m^{-}$, respectively.
%\todo{Cathy suggests removing this sentence}
%For each positive message $msg_{i}$ in a sequence of messages
%$ %$
%[msg_1^\pm,\ msg_2^\pm,\ msg_3^\pm,   \allowbreak \ldots,\ msg_i^+,   \allowbreak \ldots,\ msg_{k-1}^\pm,\ msg_k^\pm]
%$ %$
%the non-fresh variables (see below) occurring in 
%an output message $msg^+_{i}$ must appear in previous messages 
%$msg_1, msg_2, msg_3,\ldots, msg_{i-1}$.
%
%Maude-NPA uses a special sort \sort{Msg} of messages
%that allows the protocol specifier to describe 
%other sorts 
%as subsorts of the top sort \sort{Msg}.
%For verification purposes, there is also a sort \sort{Public}
%denoting data initially available to the intruder.
%The specifier can make use of a special sort $\sort{Fresh}$
%in the protocol-specific signature $\Symbols$
%for representing fresh unguessable values,
%e.g., nonces.
%The meaning of a variable of sort 
%$\sort{Fresh}$ is that it will never be instantiated
%by an $E$-unifier generated during the backwards reachability analysis.
%This ensures that if two nonces are represented using 
%different variables of sort 
%$\sort{Fresh}$, they will never be identified 
%and no approximation for nonces is necessary. 
%
We make 
explicit
the $\sort{Fresh}$ variables 
$r_1,\ldots,r_k (k \geq 0)$
generated by a strand 
by writing 
${:: r_1,\ldots,r_k::}\ [msg_1^\pm,\ldots,msg_n^\pm]$,
where $r_1,\ldots,r_k$ 
appear somewhere in 
$msg_1^\pm,\ldots,msg_n^\pm$.
Fresh variables generated by a strand are unique and this is enforced during execution.
%This feature makes explicit who generates fresh data in a protocol description. Furthermore, since fresh data is uniquely generated,
%we impose that two different strands cannot generate 
%any common variable of sort $\sort{Fresh}$, %at the same time,
%i.e., for every protocol state
%and for every two strands
%${:: r_1,\ldots,r_{n_{1}}::}\ [msg_1^\pm,  [2]\ldots,  [2]msg_{n_{2}}^\pm]$
%and
%${:: r'_1,\ldots,r'_{m_{1}}::}\ [{msg'}_1^\pm,\ldots,{msg'}_{m_{2}}^\pm]$
%in the protocol state,
%$\{r_1,  [2]\ldots,  [2]r_{n_{1}}\}   [2]\cap \{r'_1,\ldots,r'_{m_{1}}\} = \emptyset$.
%As an example, we provide the principal strands of the NSL protocol below.
%\begin{align*}
%   :: r :: &
%    [nil ~|~ +(pk(B,n(A,r) ; A)), -(pk(A,n(A,r) ; N_B ; B)), 
%           +(pk(B,N_B)), nil] 
%    \\
%   :: r' :: &
%    [nil ~|~ -(pk(B,N_A ; A)), +(pk(A,N_A ; n(B,r') ; B)),
%           -(pk(B,n(B,r'))), nil]
%\end{align*}
%
Furthermore, fresh variables are treated as constants that are never instantiated.

In Maude-NPA \citep{EscMeaMes-tcs06,FOSAD07}, 
strands evolve over time %are used for backwards reachability analysis by narrowing
and thus we use the symbol $|$ to divide past and future in a strand, 
i.e.,
%$$
%[nil, msg_1^\pm, \ldots, msg_{j-1}^\pm ~|~ msg_j^\pm, msg_{j+1}^\pm, \ldots, msg_k^\pm , nil ]
%$$
$
[nil, msg_1^\pm,   \allowbreak \ldots, msg_{j-1}^\pm ~|~   \allowbreak msg_j^\pm, msg_{j+1}^\pm, \ldots,   \allowbreak msg_k^\pm ,   \allowbreak nil ]
$,
where $msg_1^\pm,   \allowbreak\ldots,   \allowbreak msg_{j-1}^\pm$ are the past messages,
and $msg_{j}^\pm,   \allowbreak msg_{j+1}^\pm,   \allowbreak \ldots, msg_k^\pm$ are the future messages
($msg_{j}^\pm$ is the immediate future message).
%The nils are present so that 
%the bar may be placed at the beginning or end of the strand if necessary.
%A strand 
%$[msg_1^\pm, \ldots, msg_k^\pm]$ 
%is a shorthand for 
%$[nil ~|~ msg_1^\pm,  [2] \ldots,  [2] msg_k^\pm ,  [2] nil ]$.
In this presentation we will often remove the nils to simplify the exposition, except when there is nothing else between the vertical bar and the beginning or end of a strand.  If there is no risk of confusion, we may also remove the fresh variables appearing before the strand.

We write $\caP$ for the set of strands in a protocol,
including the strands that describe the intruder's behavior.
When it is necessary to identify a strand ${:: r_1,\ldots,r_k::}\ [msg_1^\pm,\ldots,msg_n^\pm]$ to distinguish it from other strands,
we will do so via a \emph{role name} in parentheses appearing before the strand, e.g. $(\mathit{initiator})~{:: r_1,\ldots,r_k::}\ [msg_1^\pm,\ldots,msg_n^\pm]$.

\begin{example}\label{ex:strandsNSL}
Let us continue  Example~\ref{ex:specNSL}.
The two principal strands associated  to the 
NSL protocol describing the three steps shown 
in~Section~\ref{sec:examples-NSLDB}
are as shown below.
%where the symbols used are as explained in
%Example~\ref{ex:specNSL}.

\begin{align*}
   :: r :: &
    [nil ~|~ +(pk(B,n(A,r) ; A)), -(pk(A,n(A,r) ; N_B ; B)), 
           +(pk(B,N_B))] 
    \\
   :: r' :: &
    [nil ~|~ -(pk(B,N_A ; A)), +(pk(A,N_A ; n(B,r') ; B)),
           -(pk(B,n(B,r'))) ]
\end{align*}

\noindent
In the NSL protocol the intruder has the following capabilities:
(i) it can perform   encryption  with any public key,
(ii) it can only perform encryption  with its own private key,
(iii) it can concatenate two messages, and
(iv) it can decompose a concatenation into each of its parts.
%Each intruder's capability is described by an strand.
For example, the intruder's ability to concatenate
two messages $M_1$ and $M_2$ is described
by the following strand:

\begin{align*}
   :: nil :: &
    [nil ~|~ -(M_1), -(M_2), +(M_1 ; M_2) ] 
\end{align*}

\end{example}

A \emph{state} in Maude-NPA is a pair consisting of a set of Maude-NPA strands and the \emph{intruder knowledge} at that time.
The set of Maude-NPA strands is  unioned\footnote{In reality we consider a multiset of strands but duplicates are discarded as redundant, see \citep{EscMeaMes-tcs06,FOSAD07}.} together 
by an associative and commutativity
union operator $\_\&\_$
with identity operator $\emptyset$,
along with an additional term describing the intruder knowledge at that point.
The \emph{intruder knowledge} is represented 
as a set of facts unioned\footnote{Again, in reality we consider a multiset of intruder facts but duplicates are discarded as redundant, see \citep{EscMeaMes-tcs06,FOSAD07}.} together
with an associative and commutativity
union operator \verb!_,_!
with identity operator $\emptyset$.
There are two kinds of intruder facts:
\emph{positive} knowledge facts 
(the intruder knows message $m$, i.e., $\inI{m}$), and 
\emph{negative} knowledge facts 
(the intruder does not yet know $m$ but will know it in a future state, 
denoted by $\nI{m}$). %,
We represent a state as a term 
$$s_1 \& s_2 \& \cdots s_n \& (\inI{m_1},\ldots,\inI{m_k},\nI{m'_1},\ldots,\nI{m'_j})$$ 
with $s_1 \& s_2 \& \cdots s_n$ the \emph{set of strands} and $\inI{m_1},\ldots,\inI{m_k},\nI{m'_1},\ldots,\nI{m'_j}$ the \emph{intruder knowledge},
i.e., we consider the intruder knowledge as another state component, enclosed in parenthesis, to simplify the exposition.
 
We now describe the rewrite rules used in basic Maude-NPA to describe forward execution. When new strands are not added  into the state, 
the rewrite rules $R_\caP$ obtained from the protocol strands $\caP$
 are 
 %represented by the following rewrite rules
as follows, %\footnote{To simplify the exposition we omit the fresh variables at the beginning of each strand in a rewrite rule.}, %\footnote{%
where 
%letters 
$L,L'$ are 
variables of the sort for lists of input and output messages
($+m$,$-m$),
%letters 
$IK$ is a variable of the sort for sets of intruder facts 
(\inI{m},\nI{m}),
%letters 
$SS$ is a variable of the sort for sets of strands,
and
$M$ is a variable of sort \sort{Msg}:

\begin{small}
\begin{align}
&SS\ \&\ [L ~|~ M^-, L']\ \&\ (\inI{M},IK) %\notag\\
%& \hspace{2cm}
  \to SS\ \&\ [L, M^- ~|~ L']\ \&\ (\inI{M},IK)
  \label{eq:negative-1}\\
&SS\ \&\ [L ~|~ M^+, L']\ \&\ IK %\notag\\
%& \hspace{2cm}
  \to SS\ \&\ [L, M^+ ~|~ L']\ \&\ IK
  \label{eq:positiveNoLearn-2}\\
%&SS\ \&\ [L ~|~ M^+, L']\ \&\ IK %\notag\\
%%& \hspace{2cm}
%  \to SS\ \&\ [L, M^+ ~|~ L']\ \&\ (\inI{M},IK)
%  \label{eq:positiveLearn-forward-3}\\
&SS\ \&\ [L ~|~ M^+, L']\ \&\ (\nI{M},IK) %\notag\\
%& \hspace{2cm}
  \to SS\ \&\ [L, M^+ ~|~ L']\ \&\ (\inI{M},IK)
  \label{eq:positiveLearn-4}%\\
\end{align}%
\end{small}%

In a \emph{forward execution} of the protocol strands,
Rule \eqref{eq:negative-1} synchronizes
an input message with a message already in the channel (i.e., learned by the intruder),
Rule \eqref{eq:positiveNoLearn-2} accepts output messages 
but the intruder's knowledge is not increased,
and
%Rule \eqref{eq:positiveLearn-forward-3} 
Rule \eqref{eq:positiveLearn-4} 
accepts output messages 
and the intruder's knowledge is positively increased.
\noindent
Note that
%Therefore, the rule increasing the intruder knowledge,
%Rule (\ref{eq:positiveLearn-forward-3}),
%must make explicit \emph{when} the intruder learned
%message $M$:
Rule \eqref{eq:positiveLearn-4} 
makes explicit \emph{when} the intruder learned
a message $M$, which is recorded in the previous state 
\footnote{Of course, in an actual forward execution of a protocol,
the intruder knowledge only has positive facts. The usefulness of
\nI{m} becomes clear when we consider \emph{backward} executions,
so that at the beginning of a protocol execution all intruder knowledge 
will be ``negative'', i.e., to be learned in the future.}
 by the negative fact \nI{M}.
%A fact $\nI{M}$ can be paraphrased as: 
%``the intruder does not yet know $M$, but will learn it in the future''.

%In an analysis with an unbounded number of sessions, 
New strands 
%associated to a different session
are added to the state by explicit introduction through dedicated rewrite rules
(one for each honest or intruder strand).
It is also the case that when we are performing a backwards search, only the strands that we are searching for are listed explicitly, and
extra strands necessary to reach an initial state are dynamically added. Thus, when we
want to introduce new strands into the explicit description of the state, we need to describe additional rules for doing that,
as follows:
%For the introduction of new strands, we add: %modify the Rules \ref{eq:newstrand-forward} as follows:

%\vspace{-2ex}
\begin{small}
\begin{align}
\mbox{For each }[~ l_1,\ u^+,\ l_2~] \in \caP:
SS\ \,\&\, [~ l_1 ~|~ u^+, l_2~] \,\&\, (\nI{u},IK)
\to
SS\ \,\&\, (\inI{u},IK)
\label{eq:newstrand}%
\end{align}%
\end{small}%

\noindent
where 
$u$ denotes a message,
$l_1,l_2$ denote lists of input and output messages
($+m$,$-m$),
%letters 
$IK$ denotes a variable of the sort for sets of intruder facts 
(\inI{m},\nI{m}),
and
%letters 
$SS$ denotes a variable of the sort for sets of strands.

\begin{example}
The rewrite rule introducing a new intruder strand during backwards execution 
associated to the concatenation of two learned messages
is %described
as follows:

\vspace{-2ex}
\begin{small}
\begin{align}
&SS\ \&\ [M_{1}^-, M_{2}^- ~|~ (M_{1} ; M_{2})^+ ]\ \&\ (\nI{(M_{1} ; M_{2})},IK) %\notag\\
%& \hspace{.5cm}
\to 
SS\ \&\ (\inI{(M_{1} ; M_{2})},IK)
  \nonumber
\end{align}%
\end{small}%
\end{example}

\noindent
In summary, 
for a protocol $\caP$,
the set of rewrite rules 
obtained from the protocol strands
that are
used
for backwards narrowing reachability analysis
\emph{modulo} the equational properties $E_{\caP}$
is 
%$R_{\caP} = \{ \eqref{eq:negative-1},\eqref{eq:positiveNoLearn-2},\eqref{eq:positiveLearn-4} \}
%\cup \{\eqref{eq:newstrand}\}$.
$R_{\caP} = \{ \eqref{eq:negative-1},\eqref{eq:positiveNoLearn-2},\eqref{eq:positiveLearn-4} \}
\cup \{\eqref{eq:newstrand}\}$.

\label{sec:analysis}

An \emph{initial state}
is the final result of the backwards reachability process when an attack is found,
and is described as follows:
\begin{enumerate}
\item in an initial state, all strands have the bar at the beginning,
i.e., all strands are of the form
$ :: r_{1},\ldots,r_{j} :: [\  nil \mid {m_1}^{\pm},\ \ldots,\ {m_k}^{\pm}\ ]$;
\item in an initial state, all the intruder knowledge is negative,
i.e., all the items in the intruder knowledge are of the form 
$\nI{m}$ and therefore only to be known in the future.
\end{enumerate}
From an initial state no further backwards reachability steps are possible.

%\subsection{Attack States}\label{sec:attack-states}
\emph{Attack states}
describe 
not just single concrete attacks, but \emph{attack patterns}
(or if you prefer \emph{attack scenarios}), 
which are specified symbolically as terms (with variables)
whose instances are
the final attack states we are looking for. 
Given an attack pattern, 
Maude-NPA
tries to either find an instance of the attack or prove
that no instance of such attack pattern is possible.  
%We can specify more than one
%attack state.  Thus, we designate each attack state with a natural number.

%When specifying an attack state
%the user should specify only its first two components:
%(i) a set of strands expected to appear in the attack, 
%and (ii) some positive intruder knowledge. 
%The message sequence, auxiliary data components, and never pattern
%should  have just the empty symbol \texttt{nil}.
%Regarding  the last component, the never pattern, it can be 
%either the empty symbol \texttt{nil} or either the word
%\texttt{never} followed by a set of pairs of the form 
%\texttt{( StrandSet $||$ IntruderKnowledge)}, where \texttt{StrandSet}
%and \texttt{IntruderKnowledge} denote, respectively, the set of strands
%and the set of intruder knowledge terms of the states that should 
%be discarded by the Maude-NPA during the backwards reachability 
%analysis.

%Note that the attack state is indeed a term with variables. However, the user does not have to provide\label{page:remaining} 
%the variables denoting ``the remaining strands'', ``the remaining intruder knowledge'', and the three variables for the three last state components.
%These  extra variables are automatically  inserted by the tool.

\begin{example}
In order to prove that the NSL protocol fixes the bug found in the Needham-Schroeder Public Key protocol (NSPK), i.e., the intruder cannot learn the nonce generated by Bob,
we should specify the following attack state:
$$
:: r ::  [nil, -(pk(b,a ; N_A)), +(pk(a,N_A ; n(b,r) ; b)), -(pk(b,n(b,r))) ~|~ nil] \, \& \, (n(b,r) \inI)
$$

\noindent
%where we require the intruder to have learned the nonce generated by Bob.
from which an initial state cannot be reached and has a finite search space, proving it secure.
\end{example}

\section{Abstract Definition of Sequential Protocol Composition in Maude-NPA}\label{sec:mnpa-comp}

Sequential composition of two protocols describes a situation in which one protocol (the \emph{child})
can only execute after another protocol (the \emph{parent}) has completed its execution,
which  allows the child protocol to use information generated during the execution of the parent protocol. 
The underlying idea of such a situation is that the end of the parent's protocol execution
is \emph{synchronized} with the beginning of the child's protocol execution.
In this section we present a synchronization syntax and semantics which refines that of \citep{EMMS10}. %%%%%%%%%%
%More specifically, since in Maude-NPA a protocol execution is
%denoted by a set of strands,   we actually need   to provide an
%infrastructure to express the  notion of synchronization among strands,
%so that the strands of the parent protocol can be ``connected''
%with the strands of the child protocol.
%%%%%%%%%%%%
In Section \ref{subsec:parameters} we first explain in detail the syntactic and semantics
features necessary to express the synchronization among both protocols.  %strands,
Then, in  Section~\ref{subsec:composition-formalization} we provide an abstract definition of sequential  composition of two or more
protocols in Maude-NPA.
Finally, in Section~\ref{sec:compex}  
 we define  a concrete execution model for the %$1-*$ 
one-to-one and 
one-to-many protocol compositions 
by extending the basic  Maude-NPA execution model.
%
%\todo{Sonia: I have added a sentence to explain new name of this composition}
%
Throughout this paper, we will refer to the syntax and semantics explained in this section 
as \emph{abstract composition syntax and semantics}.

\subsection{Input/Output Parameters and Roles}\label{subsec:parameters}

 In this section we describe in more detail the   
 new 
features we need to make explicit
in each protocol to later define abstract sequential protocol compositions.  These features
are identical to those defined in \citep{EMMS10}.
Each strand in a protocol specification in the Maude-NPA
is now extended with {\em input} and {\em output parameters}.
Input parameters are 
a sequence of variables of different sorts
placed at the beginning of a strand.  
Output parameters are a sequence of terms placed at the end of a strand.  
%Any variable contained in an output parameter must appear either in the body of the strand, or as an input parameter.  
The strand notation
we will now use is
%
%$$
$
[
\{\overrightarrow{I}\},
\overrightarrow{M},
\{\overrightarrow{O}\} 
]
%$$
$
\noindent where $\overrightarrow{I}$ is a list of input parameter variables, $\overrightarrow{M}$ is a list of positive and negative terms in the strand notation of the Maude-NPA, and $\overrightarrow{O}$ is a list of output terms.
Note that all the variables of $\overrightarrow{O}$ must appear 
in $\overrightarrow{M}$ or $\overrightarrow{I}$, i.e., no extra variables are allowed in sequential protocol composition outputs.
The input and output parameters describe the exact assumptions
about each principal. 
Note that we allow each honest or Dolev-Yao strand to be \emph{labeled}
(e.g. \textit{NSL.init} or \textit{NSL.resp}) to denote the ``role''
of that strand in the protocol,
in contrast to the standard Maude-NPA syntax for strands. 
These strand labels play an important role in our protocol composition method.

\begin{example}\label{ex:NSL-MNPA}\label{ex:NSL-new}
%The following description of the NSL protocol %of Example~\ref{ex:NSL}
%contains more technical details than 
%%either
%the informal description of NSL in Section~\ref{sec:examples-composition}.
%%The whole protocol specification can be found in {\small\url{http://www.dsic.upv.es/~ssantiago/esorics.html}}.
%%
%A nonce generated by principal $A$
%is denoted by $n(A,r)$, 
%where $r$ is a unique variable of sort $\sort{Fresh}$.
%Concatenation of two messages, e.g., $N_A$ and $N_B$, is denoted by the operator
%$\_{;}\_$, e.g., $n(A,r)\ ;\ n(B,r')$.
%Encryption of a message $M$ with the public key $K_A$ of principal $A$
%is denoted by $pk(A,M)$, e.g., 
%$\{N_B\}_{pub(B)}$ is denoted by
%$pk(B,n(B,r'))$.
%Encryption with a secret key is denoted by $sk(A,M)$.
%The public/private encryption cancellation properties are described using 
%the %following 
%equations
%$pk(X,sk(X,Z)) = Z$ and 
%$sk(X,pk(X,Z)) = Z$.
Following Examples~\ref{ex:specNSL} and \ref{ex:strandsNSL},
the protocol $\caP$ with two strands 
associated to the three protocol steps shown in~Section~\ref{sec:examples-NSLDB} 
is now described as follows:

%\vspace{1ex}
%\begin{footnotesize}
\begin{small}
\begin{align}
 (\textit{NSL.init})  :: r :: [  \{A,B\}, &  +(pk(B,n(A,r);A)),   \notag\\
                                             &  -(pk(A,n(A,r);N;B)),   \notag\\  
                                             & +(pk(B,N)),  \notag\\  
                                \{A,B, \ & n(A,r),N\} ] .   \notag\\
 (\textit{NSL.resp}) :: r :: [  \{A,B\},  & -(pk(B,N;A)), \notag\\
 				       &   +(pk(A,N;n(B,r);B)),  \notag\\  
				       & -(pk(B,n(B,r))), \notag\\  
		              \{A, B,\ & N,n(B,r)\}] . \notag
\end{align}%
%\end{footnotesize}
\end{small}%
\end{example}

\begin{example}\label{ex:DB-MNPA}\label{ex:DB-new}
Similarly to the NSL protocol,
there are several technical details missing in the previous informal description of the Distance Bounding  (DB) protocol.
The exclusive-or operator is $\oplus$
and its equational properties are described using 
associativity and commutativity of $\oplus$ plus
the %following 
equations\footnote{Note that the redundant equational property 
$X \oplus X \oplus Y = Y$ is necessary in Maude-NPA
for coherence purposes; see~\citep{Viry02,DuranM10a}.} %$E_{\caP}$ %:
${X \oplus 0} = X$,
$X \oplus X = 0$,  and 
$X \oplus X \oplus Y = Y$.
Since Maude-NPA does not yet include timestamps, we do not include all the actions relevant to calculating time intervals, sending timestamps, and checking them.
The protocol $\caP$ with two strands
associated to the two protocol steps shown in~Section~\ref{sec:examples-NSLDB}  
is described as follows:%

%\vspace{1ex}

%\begin{footnotesize}
\begin{small}
\begin{align}
    (\textit{DB.init}) :: r ::  [ \{A,B,N_A\}, & +(n(B,r)), \notag\\
                          &   -(n(B,r) \oplus N_A),  \notag\\
			   \{A,B,N_A, \  &  n(B,r)\}] . \notag\\
   (\textit{DB.resp}) :: nil :: [  \{A,B,N_A\}, &   -(N_B), \notag\\
                                  & +(N_B \oplus N_A),  \notag\\
                                  \{A,B,N_A, \ & N_B\}] . \notag
\end{align}
%\end{footnotesize}
\end{small}

%\vspace{1ex}

\noindent
This protocol specification
makes clear that the nonce $N_{A}$
used by the initiator is a parameter and 
is
never generated by $A$ during the run of DB.
However, the initiator $B$ does generate a new nonce.
\end{example}

\begin{example}\label{ex:KD-MNPA}
The previous informal description of the Key Distribution (KD) protocol also lacks
several technical details, which we supply here.
Encryption of a message $M$ with key $K$ 
is denoted by $e(K,M)$, e.g., 
$\{N'_B\}_{h(N_{A},N_{B})}$ is denoted by
$e(h(n(A,r),n(B,r')), n(B,r''))$.
Cancellation properties of encryption and decryption
are described using 
the %following 
equations %$E_{\caP}$ %:
$e(X,d(X,Z)) = Z$ and 
$d(X,e(X,Z)) = Z$.
Session keys are written $skey(A,r)$, where $A$ is the principal's name
and $r$ is a \sort{Fresh} variable.
The protocol $\caP$ with two strands
associated to  the KD protocol 
 steps
 shown above 
is described as follows:%

\vspace{1ex}

%\begin{footnotesize}
\begin{small}
\begin{align}
    (\textit{KD.init})  :: r :: [   \{A,B,K\},  & +(e(K,skey(A,r)),  \notag\\
  				             &  -(e(K,skey(A,r) ; N)), \notag\\ &+(e(K, N)),    \notag\\
                                  \{A,B,K,\ & skey(A,r),N\}] . \notag\\
    (\textit{KD.resp}) :: r :: [    \{A,B,K\},  & -(e(K,SK)), \notag\\
                                                      &  +(e(K,SK ; n(B,r))), \notag\\ &-(e(K,n(B,r))),  \notag\\ 
                                   \{A,B,K, \ & SK,n(B,r)\} ] . \notag%
\end{align}%
\end{small}%
%\end{footnotesize}
\end{example}

In the rest of this paper we remove irrelevant parameters (i.e. input parameters for strands with no
parents, and output parameters for strands with no children) in order to simplify the exposition.
Therefore, a  strand is now a term of one of  
the following forms:
 
 \begin{enumerate}
 	\item 
		$[ nil, \overrightarrow{M}, nil]$,  
		 i.e. a standard strand that cannot be connected to either a parent or a child strand,
		
	\item 
		$[ \{\overrightarrow{I}\}, \overrightarrow{M}, nil ]$,
		 i.e. a \emph{child strand} that can   be connected to a parent strand,
	\item 
		$[ nil,  \overrightarrow{M}, \{\overrightarrow{O}\} ]$,
		 i.e. a \emph{parent strand}  that can   be connected to a child strand,
	\item 
		$[ \{\overrightarrow{I}\}, \overrightarrow{M}, \{\overrightarrow{O}\} ]$,
		 i.e. a strand that can be connected to both a parent and a child strand, or
		 
	\item 
		$[ \{\overrightarrow{I}\},  \{\overrightarrow{O}\} ]$,
		 i.e. a strand that can be connected to both a parent and a child strand,
		 but without sending or receiving any message, called a  \emph{void strand}.	 
 \end{enumerate}

%\noindent
%Note that the strand label has been ommited for clarity.

\subsection{Strand and Protocol  Composition}\label{subsec:composition-formalization}

In this section we formally define sequential protocol composition in Maude-NPA.
We first define the sequential
composition of two strands, since this will help us to define sequential protocol composition in general.
Intuitively, sequential composition of two strands describes a situation in which one strand 
(the \emph{child}), can only execute after another strand  (\emph{the parent}) has
completed its execution.  
Each composition of two strands is  obtained 
by \emph{matching the output parameters of the parent strand with the input parameters of the child strand} in a user-specified way.
Note that it may be possible for a single parent strand to have more than one child strand.

\begin{definition}[Sequential Strand Composition]\label{def:strandComposition}
Given 
two %protocols $\caP_1$ and $\caP_2$ 
strands
$(a) ::{\overrightarrow{r_{a}}}:: [ \{\overrightarrow{I_a}\},\allowbreak \overrightarrow{M_a},\allowbreak \{\overrightarrow{O_a}\} ]$
and
$(b) ::{\overrightarrow{r_{b}}}:: [ \{\overrightarrow{I_b}\},\allowbreak \overrightarrow{M_b},\allowbreak \{\overrightarrow{O_b}\} ]$
that are properly renamed to avoid variable sharing, a sequential strand composition
is a triple of the form $(a, b, \textit{MODE})$,
where
$a$ and $b$ denote the parent and child roles, respectively, 
and $\textit{MODE}$ is either  \textrm{1-1}  or  \textrm{1-*},
indicating a one-to-one or one-to-many composition.
This triple satisfies the following conditions for consistency:

\begin{enumerate}
\item  both $\overrightarrow{O_a}$  and $\overrightarrow{I_b}$  have the same length, i.e.
	$\overrightarrow{O_a} = m_1, \ldots, m_n$ and $\overrightarrow{I_b} = m'_1, \ldots, m'_n$, and

% output parameters is an instance of child's input parameters
\item there exists at least one substitution $\sigma$ such that $\overrightarrow{O_a} \congr{E_\caP} \overrightarrow{I_b}\sigma$.
 
\end{enumerate}
\end{definition}

We note that the definition of sequential strand composition given here differs from that given in \citep{EMMS10} in that
in Definition \ref{def:strandComposition} each input parameter in a child strand  is matched  with the corresponding output parameters in the parent strand, while in \citep{EMMS10} the user can choose which parameters are matched.  This gives
the user more flexibility, particularly in the case in which different children use different output parameters of the same parent.  But it comes at the cost of being more complex to specify and implement.   Moreover, the case of different children needing
different output parameters can be taken care of by using ``dummy'' input parameters to match parental output parameters the child does not need, or  more generally,  by means of the protocol adapters described in Section~\ref{sec:adapters}.

% Our Maude-NPA implementations of sequential strand composition, both that of [15] and the present one, implement composition triples in the
%strands themselves instead of as separate entities.  This means that information about parent-child and mode relationships must be encoded directly into
%one of the strands, and that strand is thus limited in the kinds of compositions it can participate in.  We choose to put this limitation solely on child strands,
%since it is more common to use parent strands for different kinds of compositions than it is for child strands.  This leads to the following definition.
%
%%\todo{Sonia: write new definition.}
%
%\begin{definition}[Simple Set of Strand Compositions]\label{def:simpleSet}
%A set $S$ of strand compositions is called \emph{simple} 
%%if
%%for each pair of strand compositions 
%%$(a, b, \textit{MODE})$ and $(a', b', \textit{MODE}')$ in $S$, we have
%%$a \neq a'$ and $b \neq b'$.
%if for each set of strand compositions $(a,b,\textit{MODE})$ and $(a',b',\textit{MODE'})$, if $b = b'$ then $a = a'$ and $\textit{MODE} = \textit{MODE'}$.
%%\vspace{-0.7ex}
%%
%That is,  a parent (child) role can only be composed with a child (resp. parent) role.
%\end{definition}

\begin{example}\label{ex:strandComp}
Let us consider again the NSL protocol of Example~\ref{ex:NSL-new}
and the DB protocol of Example~\ref{ex:DB-new}.
%our two examples of sequential protocol composition.
The  composition of the NSL initiator strand and the DB responder strand is specified
by the triple $(\textit{NSL.init}, DB.resp, $1{-}1$)$.
However, the NSL protocol had four output arguments while the DB protocol had three input arguments
and we are required to adapt the syntax of the NSL protocol to have only the three arguments required by the DB protocol:
 
\begin{small}
\begin{align}
% (\textit{NSL.init})  :: r :: [ &  \{A,B\},  +(pk(B,n(A,r);A)),   \notag\\
%  &  -(pk(A,n(A,r);N;B)), +(pk(B,N)), \{A,B,n(A,r) \}] .  \notag\\
%  
%  
  (\textit{NSL.init})  :: r :: [  &\{A,B\}, \notag\\ &  +(pk(B,n(A,r);A)),   %\notag\\
                                              %&  
                                              -(pk(A,n(A,r);N;B)), %\notag\\ &
                                              +(pk(B,N)),  \notag\\  
                            &\{A,B, n(A,r)\} ] .   \notag\\
                                %                                                    
%  (\textit{DB.resp}) :: nil :: \ & [  \{A,B,N_A\},  -(N_B), +(N_B \oplus N_A), \{A,B,N_A,N_B\}] . \notag
  (\textit{DB.resp}) :: nil :: [  &\{A,B,N_A\}, \notag\\ &   -(N_B), %\notag\\ & 
  +(N_B \oplus N_A),  \notag\\
                            & \{A,B,N_A,  N_B\}] . \notag
\end{align}
\end{small}
\end{example}

\begin{example}\label{ex:strandComp-KD}
Let us consider again the NSL protocol of Example~\ref{ex:NSL-new}
and the KD protocol of Example~\ref{ex:KD-MNPA}.
%our two examples of sequential protocol composition.
The  composition of the NSL responder strand and the KD initiator strand is specified
by the triple 
  $(\textit{NSL.resp}, \textit{KD.init}, $1{-}*$)$.
  But again, the NSL protocol had different output arguments 
  than the input arguments of the KD protocol 
and we are required to adapt the syntax of the NSL protocol  as follows:
%where both strands are as shown below:

 \begin{small}
\begin{align}
% (\textit{NSL.init})  :: r :: [ &  \{A,B\}, 
%  +(pk(B,n(A,r);A)),   \notag\\
%  &  -(pk(A,n(A,r);N;B)), +(pk(B,N)), \{A,B, h(n(A,r),N)\}] .  \notag\\
  %
%   (\textit{NSL.init})  :: r :: [  &\{A,B\}, \notag\\ &  +(pk(B,n(A,r);A)),   %\notag\\
%                                             %&  
%                                             -(pk(A,n(A,r);N;B)), %\notag\\ & 
%                                             +(pk(B,N)),  \notag\\  
%                                &\{A,B, h(n(A,r),N) \} ] .   \notag\\
(\textit{NSL.resp}) :: r :: [  & \{A,B\},    \notag\\ 
                                            & -(pk(B,N;A)),  +(pk(A,N;n(B,r);B)),  -(pk(B,n(B,r))), \notag\\  
		              &  \{B, A, h(N,n(B,r)) \}] . \notag\\
%                                
%     (\textit{KD.resp}) :: r :: [ &  \{A,B,K\},  -(e(K,SK)), +(e(K,SK ; n(B,r))), \notag\\
%    & -(e(K,n(B,r))),  \{A,B,K,SK,n(B,r)\} ] . \notag
%
    (\textit{KD.init})  :: r' :: [   &\{B,A,K\},  \notag\\ & +(e(K,skey(B,r')),  %\notag\\
  				             %&  
				             -(e(K,skey(B,r') ; N')), %\notag\\ &
				             +(e(K, N')),    \notag\\
                                  &\{B,A,K, skey(B,r'),N'\}] . \notag
 % should we use the following strand instead?
%     (\textit{KD.init})  :: r' :: [   &\{C,D,K\},  \notag\\ & +(e(K,skey(C,r')),  %\notag\\
%  				             %&  
%				             -(e(K,skey(C,r') ; N')), %\notag\\ &
%				             +(e(K, N')),    \notag\\
%                                  &\{C,D,K, skey(C,r'),N'\}] . \notag   
\end{align}
\end{small}

\noindent
such that the term $h(N,n(B,r))$ has the same sort as that of the input parameter $K$. 
\end{example}

 Intuitively, we can now define the sequential composition of two protocols as 
 a set of sequential strand compositions.
 
 \begin{definition}[Sequential Composition of Two Protocols]\label{def:2protocolComp}
Given two protocols $\caP_1$ and $\caP_2$ 
   %with disjoint variables, 
   that are properly renamed to avoid variable sharing, %a sequential strand composition,
   a sequential composition of both
   protocols, written $\caP_1 \ ;_S  \caP_2$, is defined as a triple of the form
   $(\caP_1, S, \caP_2)$
   where
   $S$ denotes a %\emph{simple} 
   set of strand compositions 
   between a parent strand of $\caP_1$ and a child strand of $\caP_2$
   of the form described in Definition~\ref{def:strandComposition}.
%   \todo{Sonia: Check the following sentence. Should it be more formal, using tuples?}
   Note that the signature of such protocol composition is the union\footnote{Note that we allow shared items but require the user to solve any possible conflict. Operator and sort renaming is an option, as in the Maude module importation language, but we do not consider those details in this paper.}\label{footnote:clash-renaming} of the signature of both protocols, 
   i.e., $\Sigma_{\caP_1  ;_S  \caP_2} = \Sigma_{\caP_1} \cup \Sigma_{\caP_2}$. 
   Similarly, the set of equations specifying the  algebraic properties of such protocol composition
   is the union\footnote{We assume the combined equational theory satisfies all the requirements for having a finitary and complete unification algorithm.} of the equations of both protocols,
   i.e., $E_{\caP_1  ;_S  \caP_2} = E_{\caP_1} \cup E_{\caP_2}$. 
 \end{definition}

  \begin{example}\label{ex:protComp-NSLDB}
Let us consider again both the NSL and DB protocols of Example~\ref{ex:strandComp} and their composition.
The  composition of both protocols, 
which is an example of a one-to-one composition,  is specified as follows, 
indicating that the initiator of NSL can be composed with the responder of DB and the responder of NSL with the initiator of DB:
%
%\begin{small}
\begin{align}
NSL \ ;_S \ DB =  ( \mathit{NSL},  \{ & (\mathit{NSL.init},  \mathit{DB.resp},  1{-}1), \notag\\
								     &  (\mathit{NSL.resp},  \mathit{DB.init}, 1{-}1) \}, \mathit{DB})   \notag
\end{align}
%\end{small}
%
\noindent
The  strands are left as follows,
where we have removed irrelevant input and output parameters for clarity and simplicity:
\begin{small}
\begin{align}
 (\textit{NSL.init})  :: r :: [ &  +(pk(B,n(A,r);A)),   %\notag\\
  %&  
  -(pk(A,n(A,r);N;B)), +(pk(B,N)), \notag\\ &\{A,B,n(A,r)\}]   \notag\\
 (\textit{NSL.resp}) :: r :: [ & -(pk(B,N;A)), %\notag\\
 %&   
 +(pk(A,N;n(B,r);B)), -(pk(B,n(B,r))), \notag\\ & \{A,B,N\}] \notag\\
   (\textit{DB.init}) :: r :: \  [ &  \{A,B,N_A\}, \notag\\ &+(n(B,r)), -(n(B,r) \oplus N_A) ] \notag\\
  (\textit{DB.resp}) :: nil :: \  [ &  \{A,B,N_A\}, \notag\\ & -(N_B), +(N_B \oplus N_A) ] \notag
\end{align}
\end{small}
\end{example}
 
\begin{example}\label{ex:protComp-NSLKD}
Let us now consider the NSL and KD protocols of Example~\ref{ex:strandComp-KD} and their composition.
The composition of both protocols, which is an example of a one-to-many 
composition, is specified as follows, indicating that there are four possible compositions:
the initiator of NSL composed with either the initiator or the responder of KD,
and the responder of NSL composed with either the   initiator or the  responder of KD:

%\begin{small}
\begin{align}
NSL \ ;_S \ KD =  ( \textit{ NSL}, &  (\textit{NSL.init},  \textit{KD.init}, 1{-}*), \notag\\
						&  (\textit{NSL.init},  \textit{KD.resp}, 1{-}*), \notag\\  						    
						&  (\textit{NSL.resp},  \textit{KD.init}, 1{-}*), \notag\\
						&  (\textit{NSL.resp},  \textit{KD.resp}, 1{-}*) \}, \textit{KD})   \notag
\end{align}

%\noindent
%However, 
%not all the combinations are possible.
%According to Section~\ref{sec:examples-NSLKD}
%we only have:
%(i) \textit{NSL.init} with \textit{KD.init}, and then \textit{NSL.resp} with \textit{KD.resp},
%or
%(ii) \textit{NSL.init} with \textit{KD.resp}, and then \textit{NSL.resp} with \textit{KD.init}.
%Since both the parent and the child strand must have the same output and input arguments,
%we have to generate two versions of each, the initiator and the responder, of the KD protocol.
%We name them \textit{KD.init}, \textit{KD.resp}, \textit{KD.resp}, and \textit{KD.resp}
%and change the composition of both protocols:
%\begin{align}
%NSL \ ;_S \ KD =  ( \textit{ NSL}, &  (\textit{NSL.init},  \textit{KD.init}, 1{-}*), \notag\\
%						&  (\textit{NSL.init},  \textit{KD.resp}, 1{-}*), \notag\\  						    
%						&  (\textit{NSL.resp},  \textit{KD.init}, 1{-}*), \notag\\
%						&  (\textit{NSL.resp},  \textit{KD.resp}, 1{-}*) \}, \textit{KD})   \notag
%\end{align}
The  strands are as follows,
where we have removed irrelevant input and output parameters for clarity and simplicity:

\begin{small}
\begin{align}
 (\textit{NSL.init})  :: r :: [ &  
  +(pk(B,n(A,r);A)), -(pk(A,n(A,r);N;B)),  %\notag\\
  %& 
  +(pk(B,N)), \notag\\
  &\{A,B,h(n(A,r),N)\}]   \notag\\
 (\textit{NSL.resp}) :: r :: [ &    -(pk(B,N;A)), +(pk(A,N;n(B,r);B)), %\notag\\
   -(pk(B,n(B,r))), \notag\\ &\{B,A , h(N,n(B,r))\}]  \notag\\
    (\textit{KD.init})  :: r :: [ & \{C,D,K\},  \notag\\ &+(e(K,skey(C,r)),  %\notag\\
  %&  
  -(e(K,skey(C,r) ; ND)), +(e(K, ND))] \notag\\
    (\textit{KD.resp}) :: r :: [ &  \{C,D,K\}, \notag\\ & -(e(K,SKD)), +(e(K,SKD ; n(C,r)), %\notag\\
    %& 
    -(e(K,n(C,r))]  \notag
\end{align}
\end{small}

%\noindent
%such that   terms \textit{h(n(A,r),N)} and  \textit{h(N,n(B,r))} are of the same sort than variable $K$.
%Note that, again,  we have removed irrelevant input and output parameters for clarity.

Note that in the KD strands we use variables $C$ and $D$ to avoid confusion,
since depending on how the NSL and KD protocols are composed, they will be instantiated
as either the NSL initiator or the NSL responder name, represented by variables $A$ and $B$, respectively.
\end{example}

 In addition, we need to define the sequential composition of more than two protocols.
 Intuitively, the sequential composition of $n$ protocols
 $\caP_1, \ldots, \caP_n$
% $\caP_1\ ;_{S_1} \caP_2 \ ;_{S_2}  \ldots ;_{S'} \caP_{n-1} \ ;_{S_{n-1}} \caP_n$
  is a sequence of
 two-protocol compositions, such that each protocol is composed with the previous protocol (except $\caP_1$)
 and with the next protocol (except  $\caP_n$).

 \begin{definition}[Sequential Composition of $n$ Protocols]\label{def:NprotocolComp}
 Given $n$ protocols $\caP_1, \ldots, \caP_n$ %with disjoint variables 
 that are properly renamed to avoid variable sharing,
 the sequential composition of all of them is denoted by:
 $$\caP_1\ ;_{S_1} \caP_2 \ ;_{S_2} \caP_3 \ ;_{S_3} \ldots ;_{S_{n-2}} \caP_{n-1} \ ;_{S_{n-1}} \caP_n$$
iff 
  $\caP_1\ ;_{S_1} \caP_2$,
  $\caP_2 \ ;_{S_2} \caP_3$, \ldots, 
  $ \caP_{n-1} \ ;_{S_{n-1}} \caP_n$ are sequential protocol compositions as explained in Definition~\ref{def:2protocolComp}.
 \end{definition}
 
 \subsection{Protocol Adapters}\label{sec:adapters}
 
 As we see from the examples in Section \ref{subsec:composition-formalization}, putting the composition information
 inside the role specification itself instead of specifying them separately introduces a potential modularity issue if we want to
 reuse roles in different specifications, in that different compositions may require different  information.  For example,  in one composition a child may require less information than a child in another composition with the same parent, as is the case in
 with NSL-DB versus NSL-KD.  Or, it may be more convenient to present the information in different orders in either the parent or the child, as is the case for NSL-DB versus NSL-KD. Or, one child may need the result of applying a function to parent output, while the other may require the output without that function applied.     Although some of these issues may be avoidable by careful planning, forcing the user to consider them in advance works against the sort of modularity we are trying to achieve.
 
 As a solution to this problem we propose the use of \emph{protocol adapters}, somewhat similar to the plug adapters one uses for overseas travel.  A protocol adapter, applied to the output of a parent protocol, would perform the operations on it
that would result in suitable input for the child protocol.  Such operations would include, but would not necessarily be limited to:

\begin{enumerate}
\item restricting the output parameters to a subsequence used by a child;
\item permuting the output parameters in the order used by a child, and;
\item computing symbolic functions on the output.
\end{enumerate} 

In a similar way, the input parameters of a child protocol can be restricted to a subsequence
or permuted to fit the output parameters of a parent.
We note that it is currently possible to specify such role adapters directly from void strands, using a void strand that takes
as its input the output parameters of the parent, and produces as its output the result of transforming these parameters into a format acceptable by the child.  However, this is a suboptimal solution in that it introduces an extra narrowing step to address a purely syntactic issue.  Thus, we are currently considering the best way of implementing protocol adapters on the syntactic level.

%%%%%%%%%%%%%%%%%
%%%%%%%%%%%%%%%%%%%

\subsection{Operational Semantics} \label{sec:compex}
 
As explained in Section~\ref{sec:maude-npa}, 
the operational semantics of protocol execution and analysis
is based on rewrite rules denoting state transitions
which are applied \emph{modulo} the algebraic properties $E_{\caP}$
of the given protocol $\caP$.
Therefore, in the one-to-one and one-to-many cases
we must add new state transition rules in order to deal with protocol composition. 
Maude-NPA performs backwards search 
modulo $E_{\caP}$
by reversing the transition rules expressed in a forward way; see Section~\ref{sec:maude-npa}.
% \citep{EscMeaMes-tcs06,EscobarMM-fosad}.
%Again, we define forward rewrite rules which will happen to be executed in a backwards way.\footnote{Note however that we represent unification explicitly via a substitution $\sigma$ instead of implicitly via variable equality as in Section \ref{sec:maude-npa}.  This is because  
%output and input parameters are not required to match,
%e.g.
%in the composition NSL-KD,
%the output parameters of the parent strand are
%$\{A,B,N_{A},N_{B}\}$
%whereas the input parameters of the child strand are
%$\{A,B,h(N_{A},N_{B})\}$.
%} 

%++++++++Original figure
\begin{figure}[t]
\begin{align}
&\mbox{For each one-to-one 
strand composition }
(a, b,   \mathrm{1{-}1})
\mbox{ with }
\notag\\%[-1ex]
%& \mbox{\textit{Mode} either 1-1 or 1-*,  }
%\notag\\ %[-1ex] 
& \mbox{strand   }
(a) [%\{\overrightarrow{I_{a}}\}, 
\overrightarrow{M_a},\{\overrightarrow{O_{a}}\}] 
\mbox{ for protocol } \Pone,
%\notag\\ %[-1ex] 
%& 
\mbox{strand   }
(b)[\{\overrightarrow{I_{b}}\},\overrightarrow{M_b} %,\{\overrightarrow{O_{b}}\}
]
\mbox{ for protocol } \Ptwo,
%\notag\\[-1ex]
%& \mbox{and substitution }
%\sigma_{a},\sigma_{ab} \mbox{ s.t. }
%\overrightarrow{O_{a}} \congr{E_{\caP}} \sigma_{a}(\overrightarrow{O})
%\mbox{ and }
%\sigma_{a}(\overrightarrow{I}) \congr{E_{\caP}} \sigma_{ab}(\overrightarrow{I_{b}}),
\notag\\
& \mbox{and for each substitution } \sigma \mbox{ s.t. } \overrightarrow{I_{b}}\sigma \congr{E_{\caP}} \overrightarrow{O_{a}} ,
%\notag\\
%&
\mbox{we add the following rules:}
\notag\\[1.5ex]
&\hspace{4ex}
SS\,\&\, 
(a)\ [\overrightarrow{M_a} ~|~ \{ \overrightarrow{O_{a}} \}] 
~ \& ~ 
(b)\  [nil ~ | ~ \{\overrightarrow{I_{b}}\sigma\}, \overrightarrow{M_b}\sigma ] 
\hspace{0ex}\,\&\, IK%
\notag\\ % [-1ex]
&
\rightarrow
SS\,\&\, 
(a)\  [\overrightarrow{M_a}, \{ \overrightarrow{O_{a}} \} ~|~ nil ] 
~ \& ~ 
(b)\  [\{\overrightarrow{I_{b}}\sigma\}  ~|~  \overrightarrow{M_b}\sigma 
] 
\,\&\, IK%
\label{eq:one-to-one-forward-transf}\\[1.5ex]
&\hspace{4ex}
SS\,\&\, 
(a)\  [\overrightarrow{M_a} ~|~ \{ \overrightarrow{O_a} \}] 
~ \& ~ 
(b)\  [nil ~ | ~ \{ \overrightarrow{I_{b}}\sigma\}  ,   \overrightarrow{M_b}\sigma 
] 
\hspace{0ex}\,\&\, IK%
\notag\\%[-1ex]
&%\hspace{-2ex}%\hspace{8ex}
\rightarrow
SS\,\&\,  
(b)\  [\{ \overrightarrow{I_{b}}\sigma \}  ~|~   \overrightarrow{M_b}\sigma
] 
\,\&\, IK%
\label{eq:one-to-one-forward-new-transf}
\end{align}
\vspace{-6ex}
 \caption{Forward semantics for one-to-one composition}
\label{fig:one-to-one}
%\vspace{-3ex}
\end{figure}

In the one-to-one composition, 
we add the state transition rules of Figure~\ref{fig:one-to-one}
to the rewrite theory $(\Symbols_\caP, E_\caP, R_{B\caP})$ of Section~\ref{sec:maude-npa}.
Note that these transition rules are written in a forwards way but will be executed backwards, as the basic %original 
transition rules of Section~\ref{sec:maude-npa}.
Rule~\ref{eq:one-to-one-forward-transf} composes a parent and a child strand
already present in the current state.
Rule~\ref{eq:one-to-one-forward-new-transf} 
is the same as Rule~\ref{eq:one-to-one-forward-transf} but
adds, in a backwards execution, a parent strand to the current state and composes it with an existing child strand.
%Note that since a strand specification
%is a symbolic specification representing many concrete instances
%and 
%the same applies to a composition of two protocol specifications,
%we need to relate actual and formal parameters of the protocol composition w.r.t. the two protocol specifications by using the substitutions $\sigma_{a}$ and $\sigma_{ab}$ in Figure~\ref{fig:one-to-one}.
%%
For example, 
given the   composition of the NSL initiator's strand  
with the DB responder's strand  
$(\textit{NSL.init}, DB.resp, \allowbreak 1{-}1)$ 
where NSL.init and DB.resp were defined in 
Example~\ref{ex:protComp-NSLDB}, %Section \ref{sec:examples-composition},
we add the following transition rule
for Rule~\eqref{eq:one-to-one-forward-transf} where both the parent and the child strands are present
and thus synchronized.

%\begin{small}
%\begin{verbatim}
%(\textit{NSL.init}) :: r :: 
%[ +(pk(B,n(A,r);A)), -(pk(A,n(A,r);N;B)), +(pk(B,N))  | 
%  {A,B,n(A,r)}]
%(\textit{DB.resp}) :: nil :: 
%[ nil | {A,B,n(A,r)}, -(NB), +(NB * n(A,r))] & SS & IK
%->
%(\textit{NSL.init}) :: r :: 
%[ +(pk(B,n(A,r);A)), -(pk(A,n(A,r);N;B)), +(pk(B,N)), 
%  {A,B,n(A,r)} | nil]
%(\textit{DB.resp}) :: nil :: 
%[{A,B,n(A,r)} | -(NB), +(NB * n(A,r))] & SS & IK
%\end{verbatim}
%\end{small}

  \begin{small}
\begin{align}
&(\textit{NSL.init}) :: r ::  \hfill \notag\\
& [ \, +(pk(B,n(A,r);A)), -(pk(A,n(A,r);N;B)), +(pk(B,N))  \mid \{A,B,n(A,r)\} \, ]\ \& \notag\\
& (\textit{DB.resp}) :: nil ::   \notag\\
& [ \, nil ~|~ \{A,B,n(A,r)\}, -(NB), +(NB * n(A,r)) \, ] & \&\  SS \ \&\ IK \notag\\
& \longrightarrow \notag\\
& (\textit{NSL.init}) :: r :: \notag\\
& [ \, +(pk(B,n(A,r);A)), -(pk(A,n(A,r);N;B)), +(pk(B,N)),  \{A,B,n(A,r)\} \mid   nil \, ]\ \&\notag\\
& (\textit{DB.resp}) :: nil ::  \notag\\
& [ \, \{A,B,n(A,r)\} ~|~ -(NB), +(NB * n(A,r))\, ] & \&\ SS\ \&\ IK \notag
\end{align}
\end{small}

\begin{figure}[t]
\begin{align}
&\mbox{For each one-to-many strand composition }
(a, b,  1{-}*)
\mbox{ with }
\notag\\%[-1ex]
& \mbox{strand   }
(a) [%\{\overrightarrow{I_{a}}\},
\overrightarrow{M_a},\{\overrightarrow{O_{a}}\}] 
\mbox{ for protocol } \Pone,
%\notag\\ %[-1ex] 
%& 
\mbox{strand   }
(b)[\{\overrightarrow{I_{b}}\},\overrightarrow{M_b} %,\{\overrightarrow{O_{b}}\}
]
\mbox{ for protocol } \Ptwo,\notag\\
& \mbox{and for each substitution } \sigma \mbox{ s.t. } \overrightarrow{I_{b}}\sigma \congr{E_{\caP}} \overrightarrow{O_{a}} ,
%\notag\\
%& 
\mbox{we add one Rule \ref{eq:one-to-one-forward-transf},
one Rule \ref{eq:one-to-one-forward-new-transf},
and %the following 
rule}:
\notag\\[1ex]
&\hspace{3ex}
SS\,\&\,  
(a)\  [  \overrightarrow{M_a} ~|~ \{ \overrightarrow{O_{a}} \}]  
~ \& ~ 
(b)\  [nil ~ | ~ \{\overrightarrow{I_{b}}\sigma\}, \overrightarrow{M_b}\sigma
] 
\hspace{0ex}\,\&\, IK%
\notag\\
& 
\rightarrow
SS\,\&\, 
(a)\  [  \overrightarrow{M_a} ~|~ \{ \overrightarrow{O_{a}} \}  ]  
~ \& ~ 
(b)\  [\{ \overrightarrow{I_{b}}\sigma \}  ~|~ \overrightarrow{M_b}\sigma ] 
\,\&\, IK%
\label{eq:one-to-many-forward-transf}
\end{align}
\vspace{-6ex}
 \caption{Forward semantics for one-to-many composition}
\label{fig:one-to-many}
%\vspace{-4ex}
\end{figure}

One-to-many composition uses the rules in Figure~\ref{fig:one-to-one}
for the first child, plus an additional rule for subsequent children,
described in Figure~\ref{fig:one-to-many}.
Rule~\ref{eq:one-to-many-forward-transf}
composes a parent strand and a child strand
but the bar in the parent strand is not moved, in order to allow
further backwards child compositions.
%
%\noindent
For example, 
given the   composition of the NSL responder's strand 
with the KD initiator's strand 
$(\textit{NSL.resp}, KD.init, \textrm{1{-}*})$ 
\noindent
where $NSL.resp$ and $KD.init$  are as defined in 
Example~\ref{ex:protComp-NSLKD},  %Section \ref{sec:examples-composition},
we add the following transition rule
for Rule~\eqref{eq:one-to-many-forward-transf}:
%using substitution $\sigma_{ab}=\{\texttt{A'}\mapsto\texttt{A},\texttt{B'}\mapsto\texttt{B},\texttt{K}\mapsto\texttt{h(NA,n(B,r))}\}$:

%\begin{small}
%\begin{verbatim}
%(\textit{NSL.resp}) :: r :: 
%[ -(pk(B,NA;A)), +(pk(A,NA;n(B,r);B)), -(pk(B,n(B,r))) | 
%  {A,B,h(NA,n(B,r))}] .
%(\textit{KD.init}) :: r' :: 
%[ nil | {A,B,h(NA,n(B,r))}, +(e(h(NA,n(B,r)),skey(A,r')), 
%        -(e(h(NA,n(B,r)),skey(A,r');N)), +(e(h(NA,n(B,r)), N))]
% & SS & IK
%->
%(\textit{NSL.resp}) :: r :: 
%[ -(pk(B,NA;A)), +(pk(A,NA;n(B,r);B)), -(pk(B,n(B,r))), 
%  {A,B,h(NA,n(B,r))} | nil] .
%(\textit{KD.init}) :: r' :: 
%[ {A,B,h(NA,n(B,r))} | +(e(h(NA,n(B,r)),skey(A,r')), 
%  -(e(h(NA,n(B,r)),skey(A,r');N)), +(e(h(NA,n(B,r)), N))]
% & SS & IK
%\end{verbatim}
%\end{small}

 \begin{small}
\begin{align}
& (\textit{NSL.resp}) :: r ::  \notag\\
& [ \, -(pk(B,NA;A)), +(pk(A,NA;n(B,r);B)), -(pk(B,n(B,r))) ~|~  \{B,A, h(NA,n(B,r))\} \,], \notag\\
& (\textit{KD.init}) :: r' ::  \notag\\
&  [ \,nil ~|~ \{B,A,h(NA,n(B,r))\}, +(e(h(NA,n(B,r)),skey(B,r'))),  \notag\\
 & -(e(h(NA,n(B,r)),skey(B,r');N)), +(e(h(NA,n(B,r)), N)) \,] \hspace{12ex}\& SS\ \&\ IK \notag\\
& \longrightarrow \notag\\
& (\textit{NSL.resp}) :: r ::  \notag\\
& [ \, -(pk(B,NA;A)), +(pk(A,NA;n(B,r);B)), -(pk(B,n(B,r))) ~| ~  \{B,A,h(NA,n(B,r))\} \,], \notag\\
& (\textit{KD.init}) :: r' ::  \notag\\
& [ \, \{B,A,h(NA,n(B,r))\} ~|~ +(e(h(NA,n(B,r)),skey(B,r'))), \notag\\
 &  -(e(h(NA,n(B,r)),skey(B,r');N)), +(e(h(NA,n(B,r)), N)) \, ] \hspace{12ex}\& SS\ \&\ IK \notag
\end{align}
\end{small}

Thus, 
for a protocol composition $\caP_{1} ;_S \caP_{2}$,
the  rewrite rules governing protocol execution are
$R_{\caP_{1} ;_S \caP_{2}} = \{  \eqref{eq:negative-1},\eqref{eq:positiveNoLearn-2},
 \eqref{eq:positiveLearn-4}  \} \allowbreak
\cup  \allowbreak \eqref{eq:newstrand}  
\allowbreak \cup \allowbreak  
\eqref{eq:one-to-one-forward-transf} \cup \allowbreak  \eqref{eq:one-to-one-forward-new-transf}  \allowbreak
\cup \allowbreak  \eqref{eq:one-to-many-forward-transf}
$.
Note that the only generic rules are Rules  \eqref{eq:negative-1},\eqref{eq:positiveNoLearn-2},
 \eqref{eq:positiveLearn-4} and all the other are obtained from the protocol specification,
 thus increasing the number of transition rules.

%%%%%%%%%%%%%%%%%%%%%%%%%%%%%%%%%%%%%%%%
 % COMPOSITION VIA PROTOCOL TRANSFORMATION
 %%%%%%%%%%%%%%%%%%%%%%%%%%%%%%%%%%%%%%%%%

%%%%%%%%%%%%%%%%%%%%%%%%%%%%%%%%%%%%%%%%%% 
 %%%%%%%%%%%%%%%%%%%%%%%%%%%%%%%%%%%%%%%%
 % DIRECT COMPOSITION
 %%%%%%%%%%%%%%%%%%%%%%%%%%%%%%%%%%%%%%%%%
 %%%%%%%%%%%%%%%%%%%%%%%%%%%%%%%%%%%%%%%%%%

% \section{Protocol Composition via Direct  Implementation}\label{sec:directComposition}
 \section{Composition via synchronization messages}\label{sec:directComposition}
% \todo{Cathy: "Direct Protocol Composition" instead of "Protocol Composition"?}

%In Section~\ref{sec:transfComposition}
% we have provided a technique to support \emph{dynamic} sequential composition of protocols
%(i.e.,  protocols can be composed when desired) via a simple program transformation,
%rather than giving a direct definition of the operational semantics of composition.
%% This was in the journal paper, but I consider this is not necessary now.
%%Basically, the technique in  \citep{EMMS10} consists in specifying the strands of the parent and child protocol
%%\emph{separately}, adding explicit input and output parameters of each strand and using
%%fresh variables to identify each strand composition.
%%More specifically, 
%%input and output parameters are denoted by special negative and positive messages, respectively.
%%
%%
%One advantage of this approach is that it is not necessary to
%modify the Maude-NPA.
%However, the specification of a protocol composition is more
% tedious, since it  is done in a less straightforward way
% because there are several extra fresh variables exchanged between
% the parent and the children strands. 

In Section~\ref{sec:mnpa-comp}
 we have provided an abstract syntax and a semantics for protocol composition,
 but this is not  what has been implemented in the tool.
% The reason why we provide another syntax and semantics is because
There are two reasons for this, having to do with the fact that the rules in  Figures~\ref{fig:one-to-one} and \ref{fig:one-to-many}
are parametrized by the strands in the two composed protocols.  First of all, this means that implementing the rules would require a significant modification of Maude-NPA to support the new composition data type.  Secondly, the fact that each strand composition produces a new rule means that
the number of rewrite rules is significantly increased.  Increasing the number of rewrite rules can affect efficiency, since each rewrite rule must be tried at each narrowing step.   
%Our approach thus has been to employ the concept of 
% \emph{synchronization messages} between strands, which can be implemented using only slight modifications of constructs already present in Maude-NPA.
%In \citep{EMMS10} synchronization messages were implemented via actual messages sent over the Dolev-Yao channel; this implementation has also been proved sound and complete \citep{tesis-sonia} with respect to the semantics given in Section~\ref{sec:mnpa-comp}.  However, as we will show in Section ~\ref{sec:comparison} and ~\ref{sec:experiments-composition} this had a serious impact on performance, as well as making it more difficult to write specifications and attack states.  
%% Moreover,   
%Here, we present a modified version of Maude-NPA in which synchronization messages are passed \emph{directly} between a parent and child strand.  Although unlike Dolev-Yao synchronization, it is necessary to add new rewrite rules,  the rules are very similar to   those of the core Maude-NPA semantics.  Moreover, it is possible to replace many of the parametrized rules in Section \ref{sec:mnpa-comp} by a small number of  generic rules.  
Therefore, our approach has been to  instead implement composition using communication between strands, which can be achieved using only slight modifications
of constructs already present in Maude-NPA. 

In  \citep{EMMS10}  this communication was implemented via
messages sent over the Dolev-Yao channel; this implementation, referred to as \emph{synchronization by protocol transformation} has also been proved sound and
complete in \citep{tesis-sonia} with respect to the semantics given in Section~\ref{sec:mnpa-comp}.
However, as we will show in Section~\ref{sec:experiments-composition}     %~\ref{sec:comparison} 
this had a serious impact on performance due to the interleaving of the additional Dolev-Yao messages, as well as making it more difficult to write specifications
and attack states. Here, we present a modified version of Maude-NPA in which composition is achieved via \emph{synchronization messages}
that are passed \emph{directly} between a parent and child strand without going through the Dolev-Yao channel.  
Although, as in the case of composition with respect to protocol transformation, it is
necessary to add new rewrite rules, the rules are very similar to those of the basic Maude-NPA semantics, and
require the addition of fewer parametrized rules than for protocol transformation.
Composition of synchronization messages is still somewhat less expressive than the abstract semantics, in that the
same role cannot engage in both one-to-one and one-to-many compositions.  However, it can be proved sound and
complete with respect to the abstract semantics with the same restrictions. We discuss how this apparent restriction can be mitigated in Section \ref{subsec:synchro}. %In the remainder of this paper, we will refer to this technique as \emph{composition via direct synchronization messages}.
 
% 
%In this section we show how all these difficulties can be avoided 
%and better performance can be obtained
%by having a direct definition
%(and therefore a direct implementation) of 
%the operational semantics
%of   \emph{dynamic}  sequential protocol composition in Maude-NPA. 
%Besides greater simplicity and ease of specification, an important practical
% advantage of such a direct definition
%with respect to the
%protocol composition via protocol transformation  explained in
% Section~\ref{sec:transfComposition}
% is that 
%there are fewer backwards narrowing steps
%and fewer message interleaving, and therefore, the search space 
%becomes considerably smaller.
%Moreover, performing composition via a protocol transformation 
%requires a complex proof that the transformed protocol 
%implements the required semantics (see Section~\ref{sec:proofComposition}). 
%%
%%\todo{NEW:}
%%
%When the operational semantics for composition is directly defined,
%this proof is much simpler.
%%
%%This is not necessary if the operational semantics
%%for composition is  directly defined.
%
% 
%\todo{Explicar subsecciones}

%This section is organized as follows.
In Section~\ref{subsec:synchro} we 
introduce the notion of \emph{synchronization} of protocol strands,
a key idea underlying sequential protocol composition.
 In Section~\ref{subsec:compSyntax} we explain in detail the new Maude-NPA syntax
for the specification of protocol composition via
 synchronization messages. 
%Finally, 
Section~\ref{sec:semantics-directComp} provides 
detailed information about  
the operational semantics of this
%\todo{Sonia: should we say ``protocol composition via
%direct synchronization messages'' instead? Or is it too repetitive?}
 direct implementation of protocol composition in Maude-NPA. %.
%how the 
%Maude-NPA execution model has been extended 
%in order to support one-to-one and one-to-many protocol compositions.
%
%\todo{Sonia: new}
Throughout this paper we will refer to these syntax and semantics as 
\emph{ synchronization via synchronization messages}. %, respectively.
Finally, Section~\ref{sec:composition-SoundnessCompleteness}
proves the soundness and completeness of the semantics
in Section~\ref{sec:semantics-directComp}
with respect to the abstract semantics in Section~\ref{sec:compex}, %, assuming some restrictions on the
%states reachable via that semantics, 
thus proving that
the semantics in Section~\ref{sec:semantics-directComp}
is a \emph{correct} implementation of protocol composition in Maude-NPA.
We use our two running examples (NSL-DB and NSL-KD) to illustrate our technique.

\subsection{Synchronization Data Type Extension}\label{subsec:synchro}
  
  As explained above,   the underlying idea of a sequential protocol composition
   is that the end of the parent's protocol execution
is \emph{synchronized} with the beginning of the child's protocol execution.
Since in Maude-NPA a protocol execution is
denoted by a set of strands,   we actually need   to provide an
infrastructure to express the  notion of synchronization among strands,
so that the strands of the parent protocol can   in fact  be ``connected''
with the strands of the child protocol.

Synchronization of strands can be achieved in Maude-NPA 
by extending its syntax to define a special type of message
that we call \emph{synchronization message}.
The signature necessary to specify synchronization messages,
written $\Symbols_{Synch}$, is as follows.
Several sorts are added:
\sort{Synch} for the synchronization message,
\sort{Role} for user-definable constants denoting the roles in the protocol,
\sort{RoleConnection} for establishing which roles are the parent and which roles are the children, and
\sort{Mode} for choosing between one-to-one composition, denoted by constant 1-1,
and one-to-many composition, denoted by 1-*.
The synchronization messages are defined by  %the %following operator:
patterns of the form:
\begin{align}
\{ a \rightarrow (b_1 b_2 \cdots b_j) \ ;;\  \textit{Mode} \ ;;\  \textit{Msg} \} ~\mbox{and}~ 
\{ (a_1 a_2 \cdots a_i) \rightarrow b \ ;;\  \textit{Mode} \ ;;\  \textit{Msg} \}.   \notag
\end{align}

\noindent 
%
%
%\vspace{1ex}
%\begin{verbatim}
%    op {_;;_;;_} -> RoleConnection Mode Msg -> Synch .
%\end{verbatim}  
%\vspace{1ex}
%
%\noindent
The sort \sort{Role} contains some constants defined by the user for role names $a_1 a_2 \cdots$, %predefined constants such as \texttt{initiator} and \texttt{responder}
%but the user has to provide them, 
e.g. NSL.init or NSL.resp.
The sort \sort{RoleConnection} 
contains just one operator 
$\_{\rightarrow}\_$,  
%$a_1 a_2 \rightarrow b$ indicates that roles $a_1 a_2$ can be parents of role $b$, %is the child,
%e.g.
so that  $a \rightarrow (b_1 b_2 \cdots b_j)$ 
specifies that a parent role $a$ can have
 child roles $b_1$ through $b_j$,
 while $(a_1 a_2 \cdots a_i) \rightarrow b$ specifies the parent roles $a_1 a_2 \cdots a_i$
 that a child $b$ may have.  
 Thus ``NSL.init NSL.resp $\rightarrow$ KD.resp''
indicates that either the initiator or the responder roles of the NSL protocol can be the parent of the responder role 
of the KD protocol.
The information passed from parent to child is given in the third parameter, which is just a term of sort \sort{Msg}, allowing the user to construct any message representing the information exchanged in the synchronization.

 %\todo{Jose: rephrase following sentences}
%In the following sections we formally define   protocol composition.
%We postpone until Section~\ref{subsec:compSyntax}
%a detailed description of how 
%the standard Maude-NPA's syntax has been extended with
%synchronization messages.

\subsection{Syntax for Protocol Composition via synchronization messages}\label{subsec:compSyntax}

 In this section we explain in detail how the Maude-NPA's syntax 
 has been extended with   synchronization messages (see Section~\ref{subsec:synchro})
 in order to support the 
 input and output parameters of Section~\ref{subsec:parameters} 
 and the %formal
 abstract definition of
 protocol composition provided in 
 Section~\ref{subsec:composition-formalization}.
 Synchronization messages are used to represent
protocol compositions directly in the strand specification of the parent and child strands
without any protocol transformation.
A mapping from the notation for protocol composition of 
Section~\ref{subsec:composition-formalization} into synchronization messages is described as follows.

%\begin{definition}[Strand Synchronization]
%Given two protocols $\caP_1$ and $\caP_2$ 
%and a strand composition of the form $(a, b,   \mathit{Mode})$   
%where
%$a$ and $b$ are role identifiers of strands in $\caP_1$ and $\caP_2$, respectively,
% \emph{Mode} is 1-1 or 1-*,
%and the strands of $a$ and $b$ are of the form 
% $(a)\ [ %\{i_1,\ldots,i_j\}, 
% \overrightarrow{M}, \{o_1,\ldots,o_{n}\} ]$
% and  
% $(b)\ [ \{i'_1,\ldots,i'_{n}\},\allowbreak \overrightarrow{M'} %,\allowbreak \{o'_1,\ldots,o'_{k}\} 
% ]$,
% where 
%% $i_1,\ldots,i_{j}$ and $i'_1,\ldots,i'_{n}$ are the respective input parameters,
% $i'_1,\ldots,i'_{n}$ are the   input parameters of $b$'s strand,
%and  
%% $o_1,\ldots,o_{n}$ and $o'_1,\ldots,o'_{k}$ are the respective output parameters,
%  $o_1,\ldots,o_{n}$  are the   output parameters of  $a$'s strand,
%we define
%$$
%synch(a,b,Mode)=
%\left\{\begin{array}{l}
%(a)\ [ \overrightarrow{M}, \{ a \rightarrow b\ ;;\ \textrm{Mode}\ ;; \allowbreak \ (o_1 ;\cdots ; o_{n}) \} ],\\
%(b)\ [ \{ a \rightarrow b\ ;;\ \textrm{Mode}\ ;;\  (i'_1 ;\cdots ; i'_{n})\},\allowbreak \overrightarrow{M'} ]
%\end{array}
%\right\}
%$$
%\end{definition} 

\begin{definition}[Parent Strand Synchronization]
Given two protocols $\caP_1$ and $\caP_2$,
a 
set $S$ of strand compositions, %of the form
%$(a, b, \textit{MODE})$,
a role $a$ of $\caP_1$ of the form
 $(a)\ [ %\{i_1,\ldots,i_j\}, 
 \overrightarrow{M}, \{o_1,\ldots,o_{n}\} ]$,
and all the strand compositions for $a$ in $S$, %of protocol $\caP_1$ with the same mode,
i.e., $(a, b_1,   \mathit{Mode}),\ldots,(a, b_k,   \mathit{Mode})$,
we define
$$
synch_S(a)=
\left\{\begin{array}{l}
(a)\ [ \overrightarrow{M}, \{ a \rightarrow b_1 \cdots b_k\ ;;\ \textrm{Mode}\ ;; \allowbreak \ (o_1 ;\cdots ; o_{n}) \} ]\\
\end{array}
\right\}
$$
\end{definition} 

\begin{definition}[Children Strand Synchronization]
Given two protocols $\caP_1$ and $\caP_2$,
a 
set $S$ of strand compositions, %of the form
%$(a, b, \textit{MODE})$,
a role $b$ of $\caP_2$ of the form
 $(b)\ [ \{i'_1,\ldots,i'_{n}\},\allowbreak \overrightarrow{M'} %,\allowbreak \{o'_1,\ldots,o'_{k}\} 
 ]$,
and all the strand compositions for $b$ in $S$,
i.e., $(a_1, b,   \mathit{Mode}),\ldots,(a_k, b,   \mathit{Mode})$,
we define
$$
synch_S(b)=
\left\{\begin{array}{l}
(b)\ [ \{ a_1 \cdots a_k \rightarrow b\ ;;\ \textrm{Mode}\ ;;\  (i'_1 ;\cdots ; i'_{n})\},\allowbreak \overrightarrow{M'} ]
\end{array}
\right\}
$$
\end{definition} 

%We intend this syntax to apply only to %simple 
%sets of compositions, and we enforce this by requiring that both mode and role connections be ground in input parameters.

\begin{definition}[Protocol Synchronization]\label{def:protocol-synchronization}
Given two protocols $\caP_1$ and $\caP_2$ 
   %with disjoint variables, 
   that are properly renamed to avoid variable sharing, %a sequential strand composition,
and a sequential protocol composition
$\caP_1 \ ;_S  \caP_2=(\caP_1, S, \caP_2)$
where $S$ denotes a %\emph{simple} 
set of strand compositions of the form
$(a, b, \textit{MODE})$,
the protocol synchronization, denoted $sync(\caP_1 \ ;_S  \caP_2)$
is a single protocol which:
\begin{enumerate}
\item has signature $\Sigma_{\caP_1} \cup \Sigma_{\caP_2} \cup \Sigma_{Synch}$,
where $\Sigma_{Synch}$ is the new signature described in Section~\ref{subsec:synchro},
\item the equational theory is $E_{\caP_1} \cup E_{\caP_2}$
\item the set of strands is $synch(S)$, which is, by definition, the set of strands of the form
$synch_S(r)$ for each role $r$ in $\caP_1$ and $\caP_2$, %and each mode 1-1 and 1-*.
and 
\item all the protocol compositions of a role have the same mode (1-1 or 1-*),
%one-to-one or one-to-many,
i.e., 
given $a$ in $\Pone$ 
and $(a, b_1, \textit{MODE}_1),\ldots,(a, b_k, \textit{MODE}_k)$ in $S$,
then $\textit{MODE}_1 = \ldots = \textit{MODE}_k$;
similarly
given $b$ in $\Ptwo$ 
and $(a_1, b, \textit{MODE}_1),\ldots,(a_k, b, \textit{MODE}_k)$ in $S$,
then $\textit{MODE}_1 = \ldots = \textit{MODE}_k$.
%\todo{Sonia: new.}
%Note that since  $S$ denotes a \emph{simple} set of strand compositions,
%then each role can only participate in one protocol composition.
\end{enumerate}
\end{definition}

  As we shall see in Section \ref{sec:composition-SoundnessCompleteness}, synchronization via synchronization messages implements our abstract composition
semantics of Section~\ref{sec:compex},  but in the next section we clarify the role connections of our framework.

  \subsection{Role Connections}

As explained above, there are two types of synchronization messages
\begin{align}
\{ a \rightarrow (b_1 b_2 \cdots b_j) \ ;;\  \textit{Mode} \ ;;\  \textit{Msg} \} ~\mbox{and}~ 
\{ (a_1 a_2 \cdots a_i) \rightarrow b \ ;;\  \textit{Mode} \ ;;\  \textit{Msg} \}.   \notag
\end{align}

\noindent
They correspond to two different  parent or child situations associated to the abstract semantics that 
are now represented
using synchronization messages.

First, in the abstract semantics there
is nothing preventing a single instantiation of a parent role from having two or more children belonging to different
roles, assuming both child roles are allowed by the specification.  
For example, in the NSL-KD composition we can have 
an instance of the NSL initiator strand being composed with both
an instance of the KD initiator strand and 
an instance of the KD responder strand,
since in Example~\ref{ex:protComp-NSLKD},
the NSL.init has output parameters $\{A,B,h(n(A,r),N)\}$
while both KD.init and KD.resp have input parameters $\{C,D,K\}$.
Indeed, since we have one-to-many compositions for both KD.init and KD.resp, 
we could have an instance of NSL.init being composed with many different KD.init and many different KD.resp. 
In this case, we write 
``NSL.init $\rightarrow$ KD.init KD.resp'' in the synchronization message of the parent strand NSL.init.

Second, 
in the abstract semantics we can have a child role that participates in multiple protocol compositions, though a single instantiation of a child role has only one parent.
Again, in the NSL-KD composition we can have 
an instance of the KD initiator strand that can be synchronized with
either an instance of the NSL initiator strand or the NSL responder strand,
since in Example~\ref{ex:protComp-NSLKD},
the NSL initiator has output parameters $\{A,B,h(n(A,r),N)\}$,
the NSL responder has output parameters $\{B,A,h(n(A,r),N)\}$,
and the KD initiator has input parameters $\{C,D,K\}$.
Note that, in contrast to one parent being composed with many child instances of the children roles, in this case an instance of a child role would be composed only with an instance of the parent role. 
In this case, we write 
``NSL.init NSL.resp $\rightarrow$ KD.init'' in the synchronization message of the child strand KD.init.

We would like to stress that the restriction 
%of Remark~\ref{rem:mode} 
in Definition~\ref{def:protocol-synchronization} about all the strand roles participating in composition using always the same mode
has been  the result of a conscious decision to trade off expressiveness against
ease and readability of the specification.  For example, we could have allowed roles to be used in one-to-one
and one-to-many compositions by attaching modes to the names to each possible child role of a parent (and vice versa), e.g. "$a \rightarrow (b, 1-1) ,~ (c,1-*)$ but decided that this complicates the specification too much.  We note also that it is possible
to simulate roles that compose with children (or parents) using different modes by using void strands.  For example, instead of having $a$ compose directly with $b$ and $c$ we could have $a$ compose in $1-1$ mode with two void roles $b_0$ and $c_0$.  Void role $b_0$ would then compose with child $b$ in 1-1 mode and void role $c_0$ would compose with child $c$ in  $1-*$ model.  The performance impact of the extra narrowing step introduced by the void role can be mitigated by the use of partial order reductions, as we do for other steps in which no messages are exchanged over the Dolev-Yao channel.

We note that it is not possible to  simulate in the synchronization message syntax the  case in which a single instantiation of a strand may have children in more than one
mode, although this  is possible in the abstract syntax.  We believe that this is a reasonable price to pay.  We may consider introducing this capability later, but if so it will be in a larger context in which we consider a much more expressive syntax and semantics that is given by the current abstract semantics.  
See Section~\ref{subsec:future} for a discussion.

    In the following we provide the specification of our two examples of protocol composition,
namely the NSL Distance Bounding protocol (NSL-DB)
 and the NSL Key Distribution protocol (NSL-KD), presented in Sections~\ref{sec:examples-NSLDB}
 and \ref{sec:examples-NSLKD}, respectively,
 using the new synchronization message representation described above.

\begin{example}\label{ex:example-NSLDB}
 We begin with our example of one-to-one protocol composition, 
 i.e., the NSL-DB protocol. As explained in Section~\ref{sec:examples-NSLDB},
 the initiator of the DB protocol is always the child of the responder of the NSL protocol.
% in this protocol  the NSL initiator's strand is composed
% with the DB responder's strand, whereas the NSL responder's strand is composed
% with the DB initiator's strand.
The specification of the protocol strands
using this syntax is as follows
 where the symbol $\oplus$ denotes the exclusive-or operator:

%\begin{small}
\begin{align} %\label{code:NSLDB}
%*** NSL\ protocol\notag\\
 (\textit{NSL.init}) :: r :: 
[ nil \mid &+(pk(B, n(A,r) ; A)) , %\notag\\
        %&
        -(pk(A, n(A,r) ; NB ; B )), %\notag\\
        %&
        +(pk(B, NB)), \notag\\
        &\{\textit{NSL.init} \rightarrow \textit{DB.resp}\ ;;\ \textrm{1-1}\ ;;\ (A ; B ; n(A,r))\} ] \&\notag\\
 (\textit{NSL.resp}):: r ::  
[ nil \mid &-(pk(B,NA ; A)), %\notag\\
        %&
        +(pk(A, NA ; n(B,r) ; B)), %\notag\\
        %&
        -(pk(B,n(B,r))),\notag\\
        &\{\textit{NSL.resp} \to \textit{DB.init}\ ;;\ \textrm{1-1}\ ;;\ (A ; B ; NA)\} ] \&\notag\\
%*** Distance\ Bounding\ protocol        \notag\\
 (\textit{DB.init}) :: r' ::
[ nil \mid\  &\{\textit{NSL.resp} \to \textit{DB.init}\ ;;\ \textrm{1-1}\ ;;\  (A ; B ; NA)\},\notag\\
        &+(n(B,r')), %\notag\\
        %&
        -(NA \oplus n(B,r'))] \&\notag\\
 (\textit{DB.resp}) :: nil :: 
[ nil \mid\   &\{\textit{NSL.init} \to \textit{DB.resp}\ ;;\ \textrm{1-1}\ ;; \ (A ; B ; NA) \},\notag\\
         &-(N), %\notag\\
         %&
         +(NA \oplus N) ]  
\notag
\end{align}
%\end{small}%
%
  \end{example}

  \begin{example}\label{ex:example-NSLKD}
 Let us now continue with our example of a one-to-many protocol composition, 
 i.e., the NSL-KD protocol. As explained in Section~\ref{sec:examples-NSLKD},
 the initiator of the session key protocol can be the child of either
 the initiator or responder of the NSL protocol.
 %%%
 %As explained in  Example~\ref{ex:protComp-NSLKD},
 %the initiator and the responder of the KD protocol need to be
% duplicates as shown below.  In this case, we also change the names of the
% roles to improve the readability of the specification.
 %%%%
 The specification of the strands of the NSL-KD protocol using the   
syntax for protocol composition via synchronization messages is as follows:

%{\small
\begin{align}
%*** NSL protocol
 (\textit{NSL.init}) :: r :: 
[ nil \mid &+(pk(B, n(A,r) ; A)) , %\notag\\
        %&
        -(pk(A, n(A,r) ; NB ; B )), %\notag\\
        %&
        +(pk(B, NB)), \notag\\
        &\{\textit{NSL.init} \to \textit{KD.init}\ \textit{KD.resp}\ ;;\ \textrm{1-*}\ ;;\  (A ;  B ; h(n(A,r) , NB)) \}\  ] \&\notag\\
 (\textit{NSL.resp}) :: r ::
[ nil \mid &-(pk(B,NA ; A)), %\notag\\
        %&
        +(pk(A, NA ; n(B,r) ; B)), %\notag\\
        %&
        -(pk(B,n(B,r))), \notag\\
        &\{\textit{NSL.resp} \to \textit{KD.init}\ \textit{KD.resp}\ ;;\ \textrm{1-*}\ ;;\ (B ; A ; h(NA , n(B,r)))\}\  ]  \&\notag\\
%%*** KD protocol
 (\textit{KD.init}) :: r' ::
[ nil \mid &\{\textit{NSL.init NSL.resp} \to \textit{KD.init}\ ;;\ \textrm{1-*}\ ;;\ (C ; D ; K) \}, \notag\\
        &+(e(K, skey(C, r'))) , %\notag\\
        %&
        -(e(K, skey(C, r') ; N)), %\notag\\
        %&
        +(e(K, N))] \&\notag\\
 (\textit{KD.resp}) :: r' ::
[ nil \mid &\{\textit{NSL.init NSL.resp} \to \textit{KD.resp}\ ;;\ \textrm{1-*}\ ;;\ (C ; D ; K) \}, \notag\\
        &-(e(K, SKD)) , %\notag\\
        %&
        +(e(K, SKD ; n(C, r'))), %\notag\\
        %&
        -(e(K, n(C,r')))] \&\notag
\end{align}
%}
% 
 \end{example}

   \subsection{Operational Semantics of  Composition via synchronization messages}\label{sec:semantics-directComp}

 In Section~\ref{sec:compex}
 we provided an operational semantics based on extra transition
rules generated for each
possible protocol composition and we differentiated between rules generated for one-to-one compositions
and rules generated for one-to-many compositions. In this section we propose
a simplified version of that operational semantics, which we call \emph{composition via  synchronization messages semantics},
 that reduces the number of transition rules so that 
 now we just have
 two \emph{generic transition rules} and a set of \emph{generated transition rules} for each strand
in the same spirit of Rule~\eqref{eq:negative-1} and Rules~\eqref{eq:newstrand}.  

The two generic transition rules for protocol composition via synchronization messages
are described in Figure~\ref{fig:generic-directComposition}.
Note that these transition rules are written in a \emph{forwards} way but will be executed backwards, as the basic
 %original 
transition rules of Section~\ref{sec:maude-npa} and the 
abstract composition semantics %for protocol composition 
of  Section~\ref{sec:compex}.
The first generic transition Rule \eqref{eq:one-to-one-forward} is applicable to
both  one-to-one compositions and   one-to-many compositions.
This rule achieves the synchronization between both strands by means of the synchronization message.
The second generic Rule \eqref{eq:one-to-many-forward}
is applicable only to   one-to-many compositions
and represents the synchronization of a parent and a child without disabling the synchronization message of the parent. %child.

\begin{figure}[t!]
 \begin{align} 
&
SS\,\&\, 
(\textrm{a}) [ L ~|~ \{\textrm{a}\rightarrow \textrm{b}\ \textrm{R}  \ ;; \ \textrm{Mode} \ ;; \textrm{M}\}] 
 ~ \& ~
%\notag\\
%& 
%\hspace{5.5ex}
(\textrm{b}) [ nil ~|~ \{\textrm{a} \ \textrm{R'} \rightarrow \textrm{b}\ ;; \ \textrm{Mode} \ ;; \textrm{M}\}, L'] 
\hspace{0ex}\,\&\, IK%
\notag\\
\rightarrow&
SS\,\&\, 
(\textrm{a}) [ L, \{\textrm{a} \rightarrow \textrm{b} \ \textrm{R} \ ;; \ \textrm{Mode} \ ;; \textrm{M}\} ~|~ nil] 
~ \& ~
%\notag\\
%& 
%\hspace{5.5ex}
(\textrm{b}) [ \{\textrm{a} \ \textrm{R'} \rightarrow \textrm{b} \ ;; \ \textrm{Mode} \ ;; \textrm{M}\} ~|~ L'] 
\hspace{0ex}\,\&\, IK%
\label{eq:one-to-one-forward}
\\[1.5ex]
&
SS\,\&\, 
(\textrm{a}) [ L ~|~ \{\textrm{a} \rightarrow \textrm{b}\ \textrm{R}  \ ;; \texttt{1-*} ;; \textrm{M}\}] 
 ~ \&~  
%\notag\\
%& 
%\hspace{5.5ex}
(\textrm{b}) [ nil ~|~ \{\textrm{a}\ \textrm{R'}  \rightarrow \textrm{b} \ ;; \texttt{1-*} ;; \textrm{M}\}, L'] 
\hspace{0ex}\,\&\, IK%
\notag\\
\rightarrow&
SS\,\&\, 
(\textrm{a}) [ L ~|~ \{\textrm{a} \rightarrow \textrm{b}\ \textrm{R}  \ ;; \texttt{1-*} ;; \textrm{M}\} ] 
 ~\&  ~
%\notag\\
%& 
%\hspace{5.5ex}
(\textrm{b}) [ \{\textrm{a}\ \textrm{R'}  \rightarrow \textrm{b} \ ;; \texttt{1-*} ;; \textrm{M}\} ~|~ L'] 
\hspace{0ex}\,\&\, IK  \label{eq:one-to-many-forward} \\
& \mbox{where:} \notag\\
& \hspace{3ex}   L,L'  \mbox{ are variables of the sort for lists of input and output messages (+\textit{m},-\textit{m}),}
\notag\\
& \hspace{3ex} \mbox{\textit{IK} is a variable of the sort for sets of intruder facts $(\inI{m},\nI{m})$,}
\notag\\
& \hspace{3ex}   \mbox{\textit{SS} is a variable of the sort for sets of strands,}
\notag\\
& \hspace{3ex}   \mbox{\textrm{M} is a variable of sort \sort{Msg},}
\notag\\
& \hspace{3ex}   \textrm{a},  \textrm{b}  \mbox{ are variables of sort \sort{Role}, }
\notag\\
& \hspace{3ex}   \textrm{R}, \textrm{R'}  \mbox{ are variables denoting sets of roles, and}
\notag\\
%& \hspace{3ex}    \textrm{Mode} \mbox{ is a variable of a sort for constants \texttt{1-1} and \texttt{1-*}}
& \hspace{3ex}    \textrm{Mode} \mbox{ is a variable of sort \sort{Mode}}
\notag
\end{align} 

\vspace{-2ex}
  \caption{Generic forward transition rules for composition via synchronization messages}
\label{fig:generic-directComposition}
 
 \vspace{-2ex}
  \begin{align} 
& \mbox{For each strand definition }
[\overrightarrow{M_a},\{a \rightarrow b_1\cdots b_k \ ;; \ \textrm{mode} \ ;; \textrm{msg}\}],
\mbox{ and each } i\in\{1,\ldots,k\},
\notag\\
& \mbox{we add a rule of the form:}
\notag\\[1ex]
&
SS\,\&\, 
(\textrm{a}) [ \overrightarrow{M_a} ~|~ \{a \rightarrow b_1\cdots b_k \ ;; \ \textrm{mode} \ ;; \textrm{msg}\}] 
~ \& ~ 
\notag\\
& 
\hspace{5.5ex}
(b_i) [ nil ~|~ \{a\ R \rightarrow b_i \ ;; \ \textrm{mode} \ ;; \textrm{msg}\}, L] 
\hspace{0ex}\,\&\, IK%
\notag\\
\rightarrow&
SS\,\&\, 
(b_i) [ \{a\ R \rightarrow b_i \ ;; \ \textrm{mode} \ ;; \textrm{msg}\} ~|~ L] 
\hspace{0ex}\,\&\, IK%
\label{eq:one-to-*-generated}\\
& \mbox{where:} \notag\\
& \hspace{3ex}   L  \mbox{ is a variable of the sort for lists of input and output messages (+\textit{m},-\textit{m}),}
\notag\\
& \hspace{3ex} \mbox{\textit{IK} is a variable of the sort for sets of intruder facts $(\inI{m},\nI{m})$,}
\notag\\
& \hspace{3ex}   \mbox{\textit{SS} is a variable of the sort for sets of strands,}
\notag\\
& \hspace{3ex}   \mbox{\textrm{msg} is a specific expression of sort \sort{Msg},}
\notag\\
& \hspace{3ex}   a, b_1,\ldots,b_k  \mbox{ are specific constants of sort \sort{Role}, }
\notag\\
& \hspace{3ex}   \textrm{R}  \mbox{ is a variable denoting sets of roles, and}
\notag\\
%& \hspace{3ex}  \textrm{Mode} \mbox{ is a variable of a sort for constants \texttt{1-1} and \texttt{1-*}}
& \hspace{3ex}  \textrm{mode} \mbox{ is a specific constant of   sort  \sort{Mode}}
\notag
\end{align} 
\vspace{-4ex}
 \caption{Generated forward transition rules for  composition via  synchronization messages}
\label{fig:generated-directComposition}
\vspace{-4ex}
 
\end{figure}

These two generic rules synchronize
an output parameter of an existing parent strand with an input message of an existing child strand.
Both strands must be present in the state.
The difference between a one-to-one and one-to-many composition is that the output parameter of the parent strand
is kept in the same position of the parent strand for further synchronizations with other children strands.

As it happens in the basic %original 
Maude-NPA operational semantics of Section~\ref{sec:maude-npa},
we generate extra transitions rules from strands, in this case for protocol composition, as 
shown in Figure~\ref{fig:generated-directComposition}.
Transition rules of the form~\eqref{eq:one-to-*-generated}, when executed backwards,  allow adding to the state
a new parent strand, whose output parameters will be synchronized with
the input parameters of an already existing child strand.
%%%%%%%%%%%%%%
Note that 
the generated transitions rules \eqref{eq:one-to-*-generated}  
%are 
apply to \emph{both} of the one-to-one or one-to-many composition cases.  In each case, they describe
a parent synchronizing with its first child.

For example, 
given the
composition of the NSL initiator's strand and the DB responder's strand,
where both strands were defined in Example~\ref{ex:example-NSLDB},
for Alice's strand

%\begin{small}
%\begin{verbatim}
%:: r :: 
%[ nil | +(pk(B, n(A,r) ; A)), 
%        -(pk(A, n(A,r) ; NB ; B )), 
%        +(pk(B, NB)), 
%        {NSL.init -> DB.resp ;; 1-1 ;; A ; B ; n(A,r)}, nil ]
%\end{verbatim}
%\end{small}
%\vspace{1ex}

%\clearpage

\begin{small}
\begin{align}
:: r :: 
[ nil | &+(pk(B, n(A,r) ; A)), %\notag\\
        %&
        -(pk(A, n(A,r) ; NB ; B )), %\notag\\
        %&
        +(pk(B, NB)), \notag\\
        &\{\textit{NSL.init} \to \textit{DB.resp}\ ;;\ \textrm{1-1}\ ;;\ (A ; B ; n(A,r))\} ]\notag
\end{align}
\end{small}

\noindent
we add the following transition rule generated by
Rule~\eqref{eq:one-to-*-generated}

%\vspace{1ex}
%\begin{footnotesize}
%\begin{small}
%\begin{verbatim}
%:: r :: 
%[ nil, +(pk(B, n(A,r) ; A)), 
%       -(pk(A, n(A,r) ; NB ; B )), 
%       +(pk(B, NB)), 
%       | {NSL.init -> DB.resp ;; 1-1 ;; A ; B ; n(A,r)}, nil ] &
%:: RR :: 
%[ nil | {NSL.init -> DB.resp ;; 1-1 ;;  A ; B ; n(A,r) },  L ] 
%  & SS & IK
%->
%:: RR :: 
%[ nil, {NSL.init -> DB.resp ;; 1-1 ;;  A ; B ; n(A,r) } | L ] 
%  & SS & IK
%\end{verbatim}
%%\end{footnotesize}
%\end{small}

\begin{small}
\begin{align}
& :: r ::
[ nil,  +(pk(B, n(A,r) ; A)), %\notag\\
       %&
       -(pk(A, n(A,r) ; NB ; B )), %\notag\\
       %&
       +(pk(B, NB)), \hfill \notag\\
&  \hspace{9ex} \mid\ \{\textit{NSL.init} \to \textit{DB.resp}\ ;;\ \textrm{1-1}\ ;;\ (A ; B ; n(A,r))\} ] \, \&\notag\\
& :: RR :: 
 [ nil \mid\ \{\textit{NSL.init}\ R \to \textit{DB.resp} \ ;; \textrm{1-1} ;;  (A ; B ; n(A,r)) \},  L ] 
 \hspace{12ex} \  \&\ SS\ \&\ IK\notag\\
& \longrightarrow \notag\\
&:: RR :: 
[ nil, \{\textit{NSL.init}\ R\to \textit{DB.resp}\ ;;\ \textrm{1-1}\ ;;\  (A ; B ; n(A,r)) \} \mid L ] 
    \hspace{12ex}   \&\ SS\ \&\ IK
  \notag
\end{align}
%\end{footnotesize}
\end{small}

%\vspace{1ex}

Thus, 
for a protocol composition $\caP_{1} ;_S \caP_{2}$,
the  rewrite rules governing protocol execution in  composition via synchronization messages are
$R_{synch(\caP_{1} ;_S \caP_{2})} = \{ 
%\eqref{eq:negative-1},\eqref{eq:positiveNoLearn-2},\eqref{eq:positiveLearn-4}  
% negative-back is the 4.1. i.e, rule 3.1 optimized to limit the introduction of new strands.
 \eqref{eq:negative-1},\allowbreak\eqref{eq:positiveNoLearn-2},\allowbreak\eqref{eq:positiveLearn-4} \} \cup \allowbreak \eqref{eq:newstrand}
\cup\{\eqref{eq:one-to-one-forward},\eqref{eq:one-to-many-forward}\}
\cup  \eqref{eq:one-to-*-generated} 
$.

% This cellarage is to leave only Figures 3 and 4 in page 25
\clearpage

Here, the reader can realize that this  synchronization semantics for protocol composition contains two generic transition rules, Rules~\eqref{eq:one-to-one-forward} and \eqref{eq:one-to-many-forward},
and one transition rule
for each protocol composition from Rule~\eqref{eq:one-to-*-generated}, whereas the protocol composition presented
in Section~\ref{sec:mnpa-comp} produces several transition rules for each protocol composition.
Indeed, this simpler semantics for protocol composition requires fewer rules distinguishing
one-to-one and  one-to-many compositions than the abstract semantics.

\subsection{Soundness and Completeness %of Direct Composition
}\label{sec:composition-SoundnessCompleteness}

   In this section
we prove soundness and completeness of the operational semantics composition via synchronization messages
presented in Section~\ref{sec:semantics-directComp} 
   with respect to the abstract compositional operational semantics of Section~\ref{sec:compex} under the restriction that a each child role (respectively parent role) and compose with parent (respectively child roles) in at most one mode.

    First, we must relate protocol states using the %theoretical
protocol composition rewrite rules of Section~\ref{sec:compex}
and protocol states in the composition via  synchronization messages.
Throughout this section, when we can avoid confusion, 
a state $St$ is called \emph{valid according to a rewrite theory $\caR$}
if it is a valid term of sort \sort{State} with respect to the order-sorted signature of $\caR$.

\begin{definition}[Bijective function \textit{trans}]%[$\equiv_\Phi$]
\label{def:equivalenceDC}
Let $\caP_{1}$ and $\caP_{2}$ be two protocols
and $\SeqComp$ their 
%\todo{Sonia: say here ``abstract composition''?}
composition.   
Let  %$\SeqComp$
$\caR_{\SeqComp}$ be the rewrite theory 
associated in Section~\ref{sec:compex}
to the abstract protocol composition $\SeqComp$
and
$\RTdc$ be the rewrite theory associated 
in Section~\ref{sec:semantics-directComp}
to  composition via  synchronization messages.
We define
the function $\textit{trans}_S$ mapping states valid according to the rewrite theory
$\caR_{\SeqComp}$ to
 states
valid according  to 
the rewrite theory
$\RTdc$
as
specified in Figure~\ref{fig:state-transformationDC},
and its inverse function
$\textit{trans}^{-1}_S$
as
specified in Figure~\ref{fig:state-invtransformationDC}.
\end{definition}

\begin{figure}[t]
\begin{small}
$trans_S(St)=
\left\{
\begin{array}{l@{\ \ }l}
(b) [ \{ \mathit{a_1 \cdots a_k \to b \, ;; Mode \, ;;  \,  {I}_{b}^{\sharp} } \}, \  \overrightarrow{b_{1}} \mid\overrightarrow{b_{2}}
]\ \&\ St'
& \mbox{if }
\begin{array}[t]{l}
(b) [\{\overrightarrow{I_{b}}\}, \overrightarrow{b_{1}} \mid \overrightarrow{b_{2}}
] \in St, \\
(a_1,b, \mathit{Mode}), \ldots,(a_k,b, \mathit{Mode}) \in S,\\
trans_S(St-(b))=St'
\end{array}
\\[0ex]
(b) [ nil \mid \{ \mathit{a_1 \cdots a_k \to b \, ;; Mode \, ;;  \,  {I}_{b}^{\sharp} } \}, \  \overrightarrow{b_{1}} %\mid\overrightarrow{b_{2}}
]\ \&\ St'
& \mbox{if }
\begin{array}[t]{l}
(b) [nil \mid \{\overrightarrow{I_{b}}\}, \overrightarrow{b_{1}} %\mid \overrightarrow{b_{2}}
] \in St, \\
(a_1,b, \mathit{Mode}),\ldots,(a_k,b, \mathit{Mode}) \in S,\\
trans_S(St-(b))=St'
\end{array}
\\[0ex]
(a) [\overrightarrow{a_{1}} \mid \overrightarrow{a_{2}}
,  \{ \mathit{a \to b_1 \cdots b_k \, ;; Mode \, ;;  \,  {O}_{a}^{\sharp} } \} ]\ \&\ St'
& \mbox{if }
\begin{array}[t]{l}
(a) [\overrightarrow{a_{1}} \mid \overrightarrow{a_{2}}
, \{\overrightarrow{O_{a}}\} ] \in St, \\
(a,b_1, \mathit{Mode}),\ldots,(a,b_k, \mathit{Mode}) \in S,\\
trans_S(St-(a))=St'\\
\end{array}\\
(a) [ \overrightarrow{a_{1}} %\mid \overrightarrow{a_{2}}
,  \{ \mathit{a \to b_1\cdots b_k \, ;; Mode \, ;;  \,  {O}_{a}^{\sharp} } \} \mid nil ]\ \&\ St'
& \mbox{if }
\begin{array}[t]{l}
(a) [ \overrightarrow{a_{1}} %\mid \overrightarrow{a_{2}}
, \{\overrightarrow{O_{a}}\} \mid nil ] \in St, \\
(a,b_1, \mathit{Mode}),\ldots,(a,b_k, \mathit{Mode}) \in S,\\
trans_S(St-(a))=St'\\
\end{array}\\
St & \mbox{otherwise}
\end{array}
\right.
$
\\
where 
\begin{minipage}[t]{.9\linewidth}
$I^{\sharp}$ (resp. $O^{\sharp}$)
is equal to $\overrightarrow{I}$ (resp.   $\overrightarrow{O}$)
by replacing the comma ``,'' by  a semicolon ``;'' to denote concatenation
of input and output parameters, e. g. input parameters    
$\overrightarrow{I} = \{A \, , \,B \,  , \, NA\}$
is written as the sequence    $I^{\sharp} = A \, ; \, B \, ; \,  NA$.
\end{minipage}
\caption{Function $trans$ 
 between states 
valid according  to the rewrite theory
$\caR_{\SeqComp}$
%produced by $\SeqComp$ 
and states 
valid according to the rewrite theory
$\RTdc$
}
\label{fig:state-transformationDC}
\end{small}
\end{figure}

\begin{figure}[t]
\begin{small}
$trans^{-1}_S(St)=
\left\{
\begin{array}{l@{\ \ }l}
(b) [\{\overrightarrow{I_{b}}\}, \overrightarrow{b_{1}} \mid \overrightarrow{b_{2}}
] \&\ St'
& \mbox{if }
\begin{array}[t]{l}
(b) [ \{ \mathit{a_1 \cdots a_k \to b \, ;; Mode \, ;;  \,  {I}_{b}^{\sharp} } \}, \  \overrightarrow{b_{1}} \mid\overrightarrow{b_{2}}
]\ \in St, \\
%(a_1,b, \mathit{Mode}), \ldots,(a_k,b, \mathit{Mode}) \in S,\\
trans^{-1}_S(St-(b))=St'
\end{array}
\\[0ex]
(b) [nil \mid \{\overrightarrow{I_{b}}\}, \overrightarrow{b_{1}} %\mid \overrightarrow{b_{2}}
]\ \&\ St'
& \mbox{if }
\begin{array}[t]{l}
(b) [ nil \mid \{ \mathit{a_1 \cdots a_k \to b \, ;; Mode \, ;;  \,  {I}_{b}^{\sharp} } \}, \  \overrightarrow{b_{1}} %\mid\overrightarrow{b_{2}}
] \in St, \\
%(a_1,b, \mathit{Mode}),\ldots,(a_k,b, \mathit{Mode}) \in S,\\
trans^{-1}_S(St-(b))=St'
\end{array}
\\[0ex]
(a) [\overrightarrow{a_{1}} \mid \overrightarrow{a_{2}}
, \{\overrightarrow{O_{a}}\} ]\ \&\ St'
& \mbox{if }
\begin{array}[t]{l}
(a) [\overrightarrow{a_{1}} \mid \overrightarrow{a_{2}}
,  \{ \mathit{a \to b_1 \cdots b_k \, ;; Mode \, ;;  \,  {O}_{a}^{\sharp} } \} ]
 \in St, \\
%(a,b_1, \mathit{Mode}),\ldots,(a,b_k, \mathit{Mode}) \in S,\\
trans^{-1}_S(St-(a))=St'\\
\end{array}\\
(a) [ \overrightarrow{a_{1}} %\mid \overrightarrow{a_{2}}
, \{\overrightarrow{O_{a}}\} \mid nil ]\ \&\ St'
& \mbox{if }
\begin{array}[t]{l}
(a) [ \overrightarrow{a_{1}} %\mid \overrightarrow{a_{2}}
,  \{ \mathit{a \to b_1\cdots b_k \, ;; Mode \, ;;  \,  {O}_{a}^{\sharp} } \} \mid nil ] \in St, \\
%(a,b_1, \mathit{Mode}),\ldots,(a,b_k, \mathit{Mode}) \in S,\\
trans^{-1}_S(St-(a))=St'\\
\end{array}\\
St & \mbox{otherwise}
\end{array}
\right.
$
\\
where 
\begin{minipage}[t]{.9\linewidth}
$I^{\sharp}$ (resp. $O^{\sharp}$)
is equal to $\overrightarrow{I}$ (resp.   $\overrightarrow{O}$)
by replacing the comma ``,'' by  a semicolon ``;'' to denote concatenation
of input and output parameters, e. g. input parameters    
$\overrightarrow{I} = \{A \, , \,B \,  , \, NA\}$
is written as the sequence    $I^{\sharp} = A \, ; \, B \, ; \,  NA$.
\end{minipage}

 \caption{Function $trans^{-1}$ 
 between states 
valid according  to the rewrite theory
$\RTdc$
%produced by $\SeqComp$ 
and states 
valid according to the rewrite theory
$\caR_{\SeqComp}$
}
\label{fig:state-invtransformationDC}
\end{small}
\end{figure}

   The following auxiliary results  ensure that 
there is %a bijection 
an appropriate connection 
between states of both rewrite theories.

\begin{lemma}
Let $\caP_{1}$ and $\caP_{2}$ be two protocols
and $\SeqComp$ their 
composition.
%Suppose that no role in $\SeqComp$ appears in both a one-to-one and a one-to-many composition as
%a child, or in both a one-to-one and a one-to-many composition as a parent.
Let 
$\caR_{\SeqComp}$ be the rewrite theory 
associated in Section~\ref{sec:compex}
to the protocol composition $\SeqComp$
and
$\RTdc$ be the rewrite theory associated 
in Section~\ref{sec:semantics-directComp}
to 
the   composition via  synchronization messages.

Then $\textit{trans}_S$ defined in Definition~\ref{def:equivalenceDC}
is a bijective function from terms of sort \sort{State} in 
$\RTdc$
to terms of sort \sort{State} in $\caR_{\SeqComp}$,
and has $\textit{trans}^{-1}_S$ as its inverse function.
\end{lemma}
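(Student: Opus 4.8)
The plan is to prove the two components of the claim — totality/well-definedness of both maps and the mutual-inverse identities — simultaneously, by well-founded induction on the number of strands in the state. The key structural fact I would exploit is that both $\textit{trans}_S$ and $\textit{trans}^{-1}_S$ (Figures~\ref{fig:state-transformationDC} and~\ref{fig:state-invtransformationDC}) are defined by peeling off a single strand at a time along the associative-commutative union operator $\_\&\_$ (with identity $\emptyset$), mapping between states valid according to $\caR_{\SeqComp}$ (parameter notation) and states valid according to $\RTdc$ (synchronization-message notation). First I would check totality. Termination is immediate, since every recursive call $\textit{trans}_S(St-(b))$ (and dually for $(a)$) strictly decreases the strand count. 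Exhaustiveness holds because, by the fixed strand shapes of Section~\ref{subsec:parameters}, a block $\{\overrightarrow{I}\}$ can occur only at the start and $\{\overrightarrow{O}\}$ only at the end of a strand, so each strand matches exactly one of the four composition clauses (according to whether it is a child or parent and whether the vertical bar precedes or follows the block) or else the catch-all ``otherwise'' clause, which also absorbs the intruder-knowledge component. Finally, since the action on each strand depends only on that strand and the fixed composition set $S$, and removals along $\_\&\_$ commute, the computed result is independent of the peeling order — a routine confluence argument for the terminating ``transform one strand'' relation.

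The core of the argument is that the transformation is local and invertible on each individual strand. For a composed parent strand of role $a$, the relevant clause replaces its output block $\{\overrightarrow{O_a}\}$ by the synchronization message $\{a \to b_1\cdots b_k \,;;\, \textit{Mode} \,;;\, O_a^{\sharp}\}$, where $b_1\cdots b_k$ enumerates the children of $a$ recorded in $S$ and $\textit{Mode}$ is their mode; dually, for a child strand of role $b$ the input block $\{\overrightarrow{I_b}\}$ is replaced by $\{a_1\cdots a_k \to b \,;;\, \textit{Mode} \,;;\, I_b^{\sharp}\}$. Both the role-connection list and the mode are determined as a function of the role name and the fixed set $S$ alone — the \emph{uniqueness} of the mode being precisely the single-mode restriction imposed in Definition~\ref{def:protocol-synchronization} — while the reformatting $\overrightarrow{O_a}\leftrightarrow O_a^{\sharp}$ (and $\overrightarrow{I_b}\leftrightarrow I_b^{\sharp}$), which merely rewrites commas as the concatenation symbol ``;'', is evidently a bijection. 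Non-composition strands and the intruder knowledge are fixed by both maps. Hence on each strand the matching clauses of $\textit{trans}_S$ and $\textit{trans}^{-1}_S$ are mutually inverse.

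Combining these, I would prove $\textit{trans}^{-1}_S(\textit{trans}_S(St)) = St$ by induction: the inductive hypothesis handles the peeled remainder $St'$, and on the distinguished strand the backward clause discards the freshly introduced connection and mode fields and rebuilds the original block by the inverse reformatting. This direction is routine, because the information erased by $\textit{trans}^{-1}_S$ is exactly the information synthesised by $\textit{trans}_S$.

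The main obstacle is the opposite composition, $\textit{trans}_S(\textit{trans}^{-1}_S(St)) = St$, on a synchronization-message state $St$. Since $\textit{trans}^{-1}_S$ throws away the fields $a \to b_1\cdots b_k$ and $\textit{Mode}$, I must show that $\textit{trans}_S$ regenerates exactly those fields rather than some other canonical choice. This forces an auxiliary invariant: every state valid according to $\RTdc$ carries only \emph{canonical} synchronization messages, i.e.\ messages whose connection and mode fields coincide with those determined by the role name and $S$. I would establish this invariant separately — it holds for the strands of $synch(S)$ by construction (Definition~\ref{def:protocol-synchronization}), and it is preserved by the transition rules of Figures~\ref{fig:generic-directComposition} and~\ref{fig:generated-directComposition}, which only move the vertical bar or copy strands and never rewrite a synchronization message's connection or mode. (If ``valid'' is read as bare type-correctness, non-canonical synchronization messages are themselves type-correct and this round-trip can fail; the honest remedy is to fold the canonical-consistency invariant into the notion of validity, or equivalently to take the codomain to be the image of $\textit{trans}_S$.) Once this invariant is secured, both round-trips reduce to the identity, and $\textit{trans}_S$ is the claimed bijection with inverse $\textit{trans}^{-1}_S$.
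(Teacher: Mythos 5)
Your proposal is correct and follows essentially the same route as the paper's proof: structural induction on the strand-peeling recursion of $\textit{trans}_S$ and $\textit{trans}^{-1}_S$ over the AC union $\_\&\_$, with the action on each individual strand shown to be locally invertible (the role-connection list and mode being recomputable from the role name and $S$, and the comma-to-semicolon reformatting being trivially bijective). You are in fact more careful than the paper, which only verifies the round-trip $\textit{trans}^{-1}_S(\textit{trans}_S(St))=St$; your point that the opposite composition $\textit{trans}_S(\textit{trans}^{-1}_S(St))=St$ needs a canonicity invariant on synchronization messages -- because $\textit{trans}^{-1}_S$ discards the connection and mode fields and $\textit{trans}_S$ must regenerate exactly them from $S$ -- identifies a genuine omission in the published argument, and your proposed remedy (folding canonicity into validity, or taking the codomain to be the image of $\textit{trans}_S$) is the right fix.
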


\begin{proof}
By structural induction on the %definitions of 
functions $\textit{trans}_S$ and $\textit{trans}^{-1}_S$ given in
Figures~\ref{fig:state-transformationDC}
and \ref{fig:state-invtransformationDC}.
The base case is a state $St$ that has no strand with input or output parameters,
since $\textit{trans}_S(St)=St$.
For the inductive case we consider only the case when $St$ contains a strand
of the form $(b) [\{\overrightarrow{I_{b}}\}, \overrightarrow{b_{1}} \mid \overrightarrow{b_{2}} ]$
and all the other cases are similar.
Let $St = (b)[\{i_1,\ldots,i_n\}, m_1^\pm, \ldots,m_i^\pm \mid m_{i+1}^\pm, \ldots,m_k^\pm ] \& ss \& ik$
where $ss$ denotes a set of strand instances and $ik$ the intruder knowledge of the state.
Let $(a_1,b, \mathit{mode}), \ldots,(a_k,b, \mathit{mode})$ be all the composition triples in $S$ involving role $b$.
By induction hypothesis we have that $\textit{trans}^{-1}_S(\textit{trans}_S(ss \& ik)) = ss \& ik$.
Then, by applying function \textit{trans} to $St$ we have that the strand instance $b$ is transformed into
$(b) [ \{ \mathit{a_1 \cdots a_k \to b \, ;; mode \, ;;  \,  (i_1; \cdots ; i_n) } \}, \  m_1^\pm,\allowbreak \ldots,m_i^\pm \mid m_{i+1}^\pm, \allowbreak\ldots,m_k^\pm]$.
But then it is easy to see that when we apply $\textit{trans}^{-1}$ to this transformed strand, we simply remove the synchronization message and get the same strand instance $b$.
Therefore, $\textit{trans}^{-1}_S(\textit{trans}_S(St)) = St$.
\qed
\end{proof}

 Let us now relate  
 backwards narrowing steps
using the rewrite theory associated to the composition  via  synchronization messages of  Section~\ref{sec:semantics-directComp}
(i.e., $\RTdc$)
 w.r.t. 
backwards narrowing using the rewrite theory associated to the abstract  protocol composition 
of Section~\ref{sec:compex}
(i.e., $\caR_{\SeqComp}$).
%We write $R^{-1}$ to denote the reverse form of all the rules in $R$.
 Note that in this case a backwards narrowing step performed with a
rule of $\RTcomp$ always corresponds to one backwards narrowing step
with a rule of $\RTcomp$, since no extra messages are introduced to
synchronize parent and child strands.

 \begin{lemma}[Bisimulation]\label{oneStepDC}
 Let $\caP_{1}$ and $\caP_{2}$ be two protocols
and $\SeqComp$ their composition.  
%Suppose that no role in $\SeqComp$ appears in both a one-to-one and a one-to-many composition as
%a child, or in both a one-to-one and a one-to-many composition as a parent.
Let 
$\RTcomp$ be the rewrite theory associated 
in Section~\ref{sec:compex}
to the abstract  protocol composition $\SeqComp$, %subject to the restrictions of Remark \ref{rem:children},
and
$\RTdc$ be the rewrite theory associated 
in Section~\ref{sec:semantics-directComp} to  composition  via  synchronization messages.

Given  two %protocol 
states $St_{1}$ and $St_{2}$
valid according to the rewrite theory $\RTcomp$
such that 
$\textit{trans}_S(St_{1})=St'_{1}$,
$\textit{trans}_S(St_{2})=St'_{2}$,
$St_{1} \narrow{1}{\rho,\caR^{-1}_{\SeqComp},E_{\SeqComp}} St_{2}$
iff 
$St'_{1} \narrow{1}{\sigma,\caR^{-1}_{synch(\SeqComp)},E_{\SeqComp}}  St'_{2}$. 
\end{lemma}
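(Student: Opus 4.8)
The plan is to prove both implications by a single case analysis on the rewrite rule underlying the one-step backwards narrowing, using the fact that $\textit{trans}_S$ (and its two-sided inverse $\textit{trans}^{-1}_S$, established in the preceding lemma) acts strand-by-strand and rewrites \emph{only} the input/output-parameter segment of a composable strand into the corresponding synchronization message, leaving every signed message item $M^\pm$ and the whole intruder knowledge untouched. As observed just before the statement, each synchronization in $\RTdc$ is realized without passing any message over the Dolev-Yao channel, so a single step of $\caR^{-1}_{\SeqComp}$ matches a single step of $\caR^{-1}_{synch(\SeqComp)}$ and there is no stuttering to account for. The first task is therefore to fix the rule-to-rule correspondence: the basic rules \eqref{eq:negative-1}, \eqref{eq:positiveNoLearn-2}, \eqref{eq:positiveLearn-4} and the new-strand rule \eqref{eq:newstrand} are literally shared by both theories, while on the composition side \eqref{eq:one-to-one-forward-transf} corresponds to \eqref{eq:one-to-one-forward}, \eqref{eq:one-to-many-forward-transf} to \eqref{eq:one-to-many-forward}, and \eqref{eq:one-to-one-forward-new-transf} to the generated rule \eqref{eq:one-to-*-generated}.

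For the ($\Rightarrow$) direction I would first dispatch the steps taken with a shared rule. Such a rule matches only $M^\pm$ items and intruder facts, and moves the bar solely through the message segment of a strand, never through the boundary parameter segment that $\textit{trans}_S$ rewrites; hence $\textit{trans}_S$ commutes with it, and applying the rule to $St_1$ and then $\textit{trans}_S$ yields the same state, under the same $E$-unifier $\rho$, as applying $\textit{trans}_S$ first and then the rule. This gives $St'_1 \narrow{1}{\rho,\caR^{-1}_{synch(\SeqComp)},E_{\SeqComp}} \textit{trans}_S(St_2)=St'_2$, as required.

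The technical heart, and the step I expect to be the main obstacle, is the composition case. Here I must show that the synchronization condition $\overrightarrow{I_b}\sigma \congr{E_\caP} \overrightarrow{O_a}$ that is \emph{baked into} each abstract composition rule is reproduced exactly by the $E$-unification the generic synchronization rules perform through the message variable $\mathrm{M}$ shared by the parent and child synchronization messages. Under $\textit{trans}_S$ the parent output segment $\{\overrightarrow{O_a}\}$ becomes $\{a \to b_1\cdots b_k\ ;;\ \textit{Mode}\ ;;\ O_a^\sharp\}$ and the child input segment $\{\overrightarrow{I_b}\}$ becomes $\{a_1\cdots a_k \to b\ ;;\ \textit{Mode}\ ;;\ I_b^\sharp\}$, where $O_a^\sharp$ and $I_b^\sharp$ pack the parameter tuples into single $\sort{Msg}$ terms via the concatenation operator $\_{;}\_$. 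Matching these two messages under \eqref{eq:one-to-one-forward} (or \eqref{eq:one-to-many-forward}) forces their role-constant and mode parts to coincide, which is possible exactly when $(a,b,\textit{Mode})\in S$ --- precisely the condition under which the matching abstract rule is generated --- and forces $O_a^\sharp \congr{E_\caP} I_b^\sharp\sigma$. Because the concatenation operator $\_{;}\_$ carries no axioms in $E_\caP$ (message concatenation is a free constructor) and, by Definition~\ref{def:strandComposition}, $\overrightarrow{O_a}$ and $\overrightarrow{I_b}$ have the same length, this equation decomposes componentwise into $\overrightarrow{O_a}\congr{E_\caP}\overrightarrow{I_b}\sigma$, so the $E$-unifiers on the two sides are in exact correspondence. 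Comparing the left- and right-hand sides of each matched pair of rules then shows that the bar is advanced across corresponding boundary segments and that a parent strand is added, retained, or removed in the same way, so the two target states are again related by $\textit{trans}_S$. The hypothesis that each role composes in a single mode is used here to guarantee that the $\textit{Mode}$ component of a synchronization message is uniquely determined, so that the matching is unambiguous.

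Finally, the ($\Leftarrow$) direction I would obtain by running the same case analysis backwards through $\textit{trans}^{-1}_S$: since $\textit{trans}^{-1}_S$ is the two-sided inverse of $\textit{trans}_S$, every step of $\caR^{-1}_{synch(\SeqComp)}$ between $St'_1$ and $St'_2$ is pulled back to a unique step of $\caR^{-1}_{\SeqComp}$ between $St_1=\textit{trans}^{-1}_S(St'_1)$ and $St_2=\textit{trans}^{-1}_S(St'_2)$, which closes the equivalence.
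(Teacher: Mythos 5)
Your proposal is correct and follows essentially the same route as the paper: a case analysis on the applied (reversed) rule, with the identical rule-to-rule correspondence (the shared basic and new-strand rules handled by commutation with $\textit{trans}_S$, \eqref{eq:one-to-one-forward-transf} matched to \eqref{eq:one-to-one-forward}, \eqref{eq:one-to-one-forward-new-transf} to \eqref{eq:one-to-*-generated}, and \eqref{eq:one-to-many-forward-transf} to \eqref{eq:one-to-many-forward}), using the bijectivity of $\textit{trans}_S$ for the converse direction. Your additional observation that the packed equation $O_a^\sharp \congr{E_\caP} I_b^\sharp\sigma$ decomposes componentwise (because the parameter tuples have equal length and the concatenation used in synchronization messages carries no axioms) is a detail the paper leaves implicit, but it does not change the structure of the argument.
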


\begin{proof}%(Sketch)
%($\Rightarrow$) 
We prove the result by case analysis 
on the applicable rewrite rules. 
First, let us recall the different rules that are applicable:
for a term $St_1$ valid according to the rewrite theory $\RTcomp$
we can apply the reversed version of Rules $\eqref{eq:negative-1}$, $\eqref{eq:positiveNoLearn-2}$,
and
$ \eqref{eq:positiveLearn-4}$ 
plus the reversed version of rules in any of the sets 
$\eqref{eq:newstrand}$, $\eqref{eq:one-to-one-forward-transf}$, $\eqref{eq:one-to-one-forward-new-transf} $,
and $\eqref{eq:one-to-many-forward-transf}$,
whereas for the term $St'_1 = trans_S(St_1)$ 
valid according to the rewrite theory $\RTdc$
we can apply the reversed version of Rules
$\eqref{eq:negative-1}$, $\eqref{eq:positiveNoLearn-2}$, $\eqref{eq:positiveLearn-4}$,
$\eqref{eq:one-to-one-forward}$, and $\eqref{eq:one-to-many-forward}$
plus the reversed version of rules in any of the sets
$\eqref{eq:newstrand}$ and $\eqref{eq:one-to-*-generated}$.
Second, we consider four possibilities below but only show in detail cases (a) and (b), since cases (c) and (d) are similar to case (b).

\begin{itemize}
\item[(a)]
When the
Rules~\eqref{eq:negative-1},\eqref{eq:positiveNoLearn-2}, and \eqref{eq:positiveLearn-4},
as well as any rule in the set \eqref{eq:newstrand},
are applied,
they do not  involve any composition and, since the function $\textit{trans}_S$ is a bijection,
the same type of rule would be applicable to $St'_1$.
% to regular incoming and outcoming messages in the strands of $St_{1}$,
%we have  one narrowing step from the associated state $St'_{1}$ (resp. $St_{1}$)
%applying the same type of rule.

\item[(b)]
A rule in the set~\eqref{eq:one-to-one-forward-transf}
corresponds to an application of 
Rule~\eqref{eq:one-to-one-forward}
(synchronizing the input parameters of the child strand with the output parameters
of the parent strand). 
In this case, the reversed version of a rule of the following form 
in set~\eqref{eq:one-to-one-forward-transf}
has been applied to state $St_1$

\begin{align}
& SS\,\&\, 
(a)\ [\overrightarrow{M_a} ~|~ \{ \overrightarrow{O_{a}} \}] 
~ \& ~ 
(b)\  [nil ~ | ~ \{\overrightarrow{I_{b}}\sigma\}, \overrightarrow{M_b}\sigma ] 
\hspace{0ex}\,\&\, IK%
\notag\\ % [-1ex]
\rightarrow
&
SS\,\&\, 
(a)\  [\overrightarrow{M_a}, \{ \overrightarrow{O_{a}} \} ~|~ nil ] 
~ \& ~ 
(b)\  [\{\overrightarrow{I_{b}}\sigma\}  ~|~  \overrightarrow{M_b}\sigma 
] 
\,\&\, IK%
\notag
\end{align}

\noindent
where $(a, b,   \mathrm{1{-}1}) \in S$,
$(a) [%\{\overrightarrow{I_{a}}\}, 
\overrightarrow{M_a},\{\overrightarrow{O_{a}}\}]$ is a role in $\Pone$,
$(b)[\{\overrightarrow{I_{b}}\},\overrightarrow{M_b}]$ is a role in $\Ptwo$,
 $\overrightarrow{O_{a}}, \overrightarrow{I_{b}}$
are two sequences of terms with variables,
$\overrightarrow{M_a},\overrightarrow{M_b}$ 
are two sequences of input and output messages,
$\overrightarrow{O_{a}} =_{E_\caP} \overrightarrow{I_{b}}\sigma$,
and only $SS$ and $IK$ are variables.
Since this rule was applied, 
there is a substitution $\rho$ such that 
$(a)\  [\overrightarrow{M_a}\rho, \{ \overrightarrow{O_{a}\rho} \} ~|~ nil ]$ 
and
$(b)\  [\{\overrightarrow{I_{b}}\sigma\rho\}  ~|~  \overrightarrow{M_b}\sigma\rho ] $
are strand instances in $St_1$.
But, by application of the \textit{trans} function,
there are strands
$(a)\  [\overrightarrow{M_a}\rho, \{ a \to b_1 \cdots b_{i-1}\ b\ b_i \cdots b_k ;; \textrm{1-1} ;; O^\sharp_{a}\rho \} ~|~ nil ]$ 
and
$(b)\  [\{a_1 \cdots a_{j-1}\ a\ a_j\cdots a_n  \to b ;; \textrm{1-1} ;;  I^\sharp_{b}\sigma\rho\}  ~|~  \overrightarrow{M_b}\sigma\rho ] $
in $St'_1$.
Now, since 
$\overrightarrow{O_{a}}\rho =_{E_\caP} \overrightarrow{I_{b}}\sigma\rho$,
the reversed version of 
Rule~\eqref{eq:one-to-one-forward} is applicable

\begin{align}
&
SS\,\&\, 
[ L ~|~ \{\textrm{a}\rightarrow \textrm{b}\ \textrm{R}  \ ;; \ \textrm{Mode} \ ;; \textrm{M}\}] 
 ~ \& ~
%\notag\\
%& 
%\hspace{5.5ex}
[ nil ~|~ \{\textrm{a} \ \textrm{R'} \rightarrow \textrm{b}\ ;; \ \textrm{Mode} \ ;; \textrm{M}\}, L'] 
\hspace{0ex}\,\&\, IK%
\notag\\
\rightarrow&
SS\,\&\, 
[ L, \{\textrm{a} \rightarrow \textrm{b} \ \textrm{R} \ ;; \ \textrm{Mode} \ ;; \textrm{M}\} ~|~ nil] 
~ \& ~
%\notag\\
%& 
%\hspace{5.5ex}
[ \{\textrm{a} \ \textrm{R'} \rightarrow \textrm{b} \ ;; \ \textrm{Mode} \ ;; \textrm{M}\} ~|~ L'] 
\hspace{0ex}\,\&\, IK%
\notag
\end{align}

\noindent
where $SS$, $L$, a, b, R, Mode, M, R', $L'$, $IK$ are variables.

\item[(c)]
A rule in the set~\eqref{eq:one-to-one-forward-new-transf}
corresponds to an application of 
a rule in the set~\eqref{eq:one-to-*-generated} 
(introducing a new parent strand and composing it with an existing child strand).
 
\item[(d)]
A rule in the set~\eqref{eq:one-to-many-forward-transf}
corresponds to an application of 
Rule~\eqref{eq:one-to-many-forward}
(synchronizing the output parameters of the parent strand
with the already accepted input parameters of the child strand,
but without moving the bar in the parent strand).
\qed
\end{itemize}
\end{proof}

Finally, we can put everything together into the following result.

\begin{theorem}[Soundness and Completeness]\label{thm:shortDC}
Let $\caP_{1}$ and $\caP_{2}$ be two protocols
and $\SeqComp$ their composition, as defined in Section \ref{subsec:composition-formalization}. 
%Suppose that no role in $\SeqComp$ appears in both a one-to-one and a one-to-many composition as
%a child, or in both a one-to-one and a one-to-many composition as a parent.
 Let $\RTdc$ be the   rewrite theory associated to composition via  synchronization messages defined above
 in     Section~\ref{sec:semantics-directComp}, 
 and let
$\RTcomp$ be the rewrite theory associated to the abstract protocol composition, as described in Section~\ref{sec:compex}.

Given a state $St$ valid according to 
 $\RTcomp$
 and
 an initial state $St_{ini}$ 
% reachable from $St$
%by backwards narrowing 
such that $\textit{trans}(St)=St'$ and $\textit{trans}(St_{ini})=St'_{ini}$,
then 
$St_{ini}$ 
is reachable from $St$
by backwards narrowing 
in $\RTcomp$
iff
 $St'_{ini}$ is reachable from $St'$
by backwards narrowing 
in $\RTdc$.
\end{theorem}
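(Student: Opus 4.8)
The plan is to promote the single-step bisimulation of Lemma~\ref{oneStepDC} to a statement about backwards narrowing derivations of arbitrary length, by a routine induction on the number of steps, keeping the two derivations in lock-step through the bijection $\textit{trans}_S$. The structural fact that makes this straightforward is the observation recorded just before Lemma~\ref{oneStepDC}: each backwards narrowing step in $\RTcomp$ corresponds to \emph{exactly one} backwards step in $\RTdc$, and conversely, because the synchronization-message semantics introduces no extra Dolev--Yao traffic. Hence corresponding derivations have equal length and the induction variable is unambiguous.

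For the ($\Rightarrow$) direction I would assume $St_{ini}$ is reachable from $St$ in $\RTcomp$, say through
\begin{align}
St = St_0 \narrow{}{\caR^{-1}_{\SeqComp},E_{\SeqComp}} St_1 \narrow{}{\caR^{-1}_{\SeqComp},E_{\SeqComp}} \cdots \narrow{}{\caR^{-1}_{\SeqComp},E_{\SeqComp}} St_n = St_{ini}, \notag
\end{align}
and argue by induction on $n$ (with the endpoint universally quantified). For $n = 0$ we have $St = St_{ini}$, so $St' = \textit{trans}_S(St) = \textit{trans}_S(St_{ini}) = St'_{ini}$, trivially reachable in zero steps. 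For the inductive step, the prefix $St_0 \narrow{}{} \cdots \narrow{}{} St_{n-1}$ has length $n-1$, so the induction hypothesis yields $St' \narrows{}{\caR^{-1}_{synch(\SeqComp)},E_{\SeqComp}} \textit{trans}_S(St_{n-1})$ in $\RTdc$; applying Lemma~\ref{oneStepDC} to the final step $St_{n-1} \narrow{}{} St_n$ gives one backwards step $\textit{trans}_S(St_{n-1}) \narrow{}{\caR^{-1}_{synch(\SeqComp)},E_{\SeqComp}} \textit{trans}_S(St_n) = St'_{ini}$, and concatenation produces the desired $\RTdc$-derivation.

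The ($\Leftarrow$) direction is symmetric: starting from a derivation in $\RTdc$, I would peel off steps in the same way, this time invoking the right-to-left direction of the ``iff'' in Lemma~\ref{oneStepDC} together with the inverse $\textit{trans}^{-1}_S$. Since $\textit{trans}_S$ and $\textit{trans}^{-1}_S$ are mutually inverse bijections, the reconstructed intermediate states coincide with the $St_i$, so the endpoints match $St$ and $St_{ini}$ without further bookkeeping.

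I do not anticipate a genuine difficulty, as all the real content lives in Lemma~\ref{oneStepDC}; the only points needing care are to confirm that every intermediate state in an $\RTcomp$-derivation is again a valid state of $\RTcomp$ (so that $\textit{trans}_S$ applies at each stage), and dually for $\RTdc$. This holds because both are topmost rewrite theories, so backwards narrowing preserves sort \sort{State}, keeping each $St_i$ (resp.\ its image) a valid state of the appropriate theory. With this, the exact one-to-one matching of step counts makes the induction close with no length discrepancy --- precisely the benefit of synchronization messages over the protocol-transformation encoding of \citep{EMMS10}, where extra channel messages would break this clean correspondence.
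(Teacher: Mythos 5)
Your proposal is correct and follows exactly the paper's route: the paper's entire proof of Theorem~\ref{thm:shortDC} is the single line ``By successive applications of Lemma~\ref{oneStepDC}'', and your induction on derivation length, using the bijection $\textit{trans}_S$ to keep the two derivations in lock-step, is just the explicit version of that. The only thing you add beyond the paper is the (sensible) remark about intermediate states remaining valid, which the paper leaves implicit.
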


\begin{proof}
By successive applications of Lemma~\ref{oneStepDC}.
\qed
\end{proof}

In the case of sequential composition of $n$ protocols 
$\caP_1\ ;_{S_1} \caP_2 \ ;_{S_2} \caP_3 \ ;_{S_3} \ldots ;_{S_{n-2}} \caP_{n-1} \ ;_{S_{n-1}} \caP_n$
as described in Definition~\ref{def:NprotocolComp},
we can define a function $\textit{trans}_{S_1,S_2,S_3,\ldots,S_{n-1}}$ %similar to $\textit{trans}_{S}$
between states valid according to the rewrite theory 
$\caR_{\caP_1\ ;_{S_1} \caP_2 \ ;_{S_2} \caP_3 \ ;_{S_3} \ldots ;_{S_{n-2}} \caP_{n-1} \ ;_{S_{n-1}} \caP_n}$ and 
states valid according to the rewrite theory 
$\caR_{synch(\caP_1\ ;_{S_1} \caP_2 \ ;_{S_2} \caP_3 \ ;_{S_3} \ldots ;_{S_{n-2}} \caP_{n-1} \ ;_{S_{n-1}} \caP_n)}$
with the only requirement that 
the role names of a protocol $\caP_i$ have to be different from the role names of all other protocols
$\caP_j$, $j \neq i$.
This requirement ensures that each strand instance can be easily associated to one of the protocols;
otherwise we may have a strand instance being associated to several protocol states.
We are working on relaxing this condition, perhaps via use of role adapters (Section~\ref{sec:adapters}).

%%%%%%%%%%%%%%%%%%%%%%%%%%%%%%%%%%%%%%%%%%%%%%%%%%%%%%
% Comparison previous work & protocol transformation
%%%%%%%%%%%%%%%%%%%%%%%%%%%%%%%%%%%%%%%%%%%%%%%%%%%%%%
\section{Composition via Protocol Transformation}\label{sec:comparison}\label{sec:transfComposition}

In this section we describe our previous approach to composition using protocol transformation,
presented
 in \citep{EMMS10,compositionTR2010}.  This section provides background for Section \ref{sec:experiments-composition}, in which the performance of
 composition via synchronization messages is compared with its predecessor.

 In \citep{EMMS10}, we presented an approach for protocol composition  
 where we defined a notion of sequential protocol composition
 slightly different from the one presented in Section~\ref{sec:mnpa-comp}  and
 the transition rules associated to such a composition. We did not
 implement those transition rules in the Maude-NPA.  Instead, we defined
 a protocol transformation that achieved the same effect
 using the existing Maude-NPA tool.  Proofs of soundness and
 completeness of the protocol transformation for the transition rules
 of \citep{EMMS10} were provided in \citep{compositionTR2010}.

 However, when experimenting with actual protocol composition
 examples, we realized that such a protocol composition and its
 semantics were quite complex and produced too many transition
 rules for a concrete protocol composition.  This led us to refine such
 protocol composition and its transition rules in the considerably
 simpler form now presented in Section~\ref{sec:mnpa-comp}. 
 Besides being simpler, it has also a more
 effective protocol composition semantics, more suitable for
 implementation. We then investigated two routes to obtaining
 a Maude-NPA implementation of the simpler composition
 notion and it semantics of Section~\ref{sec:mnpa-comp}:
\begin{enumerate}
\item the more direct route based on synchronization messages
presented in Section~\ref{subsec:synchro}; and

%\todo{Sonia: I have modified this}
\item the older route from \citep{EMMS10}
 based on \emph{protocol transformation}, but now % \emph{redefined and adapted}
according to the new composition notion and associated semantics
of Section~\ref{sec:mnpa-comp}. 
\end{enumerate} 

This was then used as a basis to compare more carefully which of these
two possible implementation routes would be the best.
To begin with, we wanted to prove that both (1) and (2)
above provided \emph{correct} implementations.
The correctness of the synchronization-based route of (1)
has been proved in Section~\ref{sec:composition-SoundnessCompleteness}.
Similarly, in analogy with
\citep{EMMS10,compositionTR2010},
the  redefined and adapted notion of protocol transformation
in (2) has been proved \emph{correct} in  \citep{tesis-sonia} with respect to the new
protocol composition semantics of Section~\ref{sec:mnpa-comp}. %, also included
 %, assuming that the restrictions of Remark \ref{rem:children} are satisfied. 
Once we were sure that both
implementation routes were correct, we proceeded to
compare their ease of use, simplicity, and performance
through concrete case studies.

The rest of this section briefly describes route (2),
based on the %redefined and adapted 
protocol transformation.
A more detailed comparison of ease of use, simplicity, and performance
between (1) and (2) is postponed until
Section~\ref{sec:experiments-composition}.

Given two protocols $\caP_{1}$ and $\caP_{2}$, its sequential
composition implemented via the redefined and adapted protocol
transformation in (2),
written $\Phi(\caP_{1} \ ;_S \caP_{2})$, 
is a single, composed protocol specification where:
%
%
%The protocol transformation %proceeds as follows.
%is given in Figure~\ref{fig:protocol-transformation}.
%%for the strand transformation.
%Its output is a single, composed protocol specification
%where:
\begin{enumerate}
\item Sorts, symbols, 
and equational properties of both protocols are put together into a single specification.
%\footnote{Note that we allow shared items but require the user to solve any possible conflict. Operator and sort renaming is an option, as in the Maude module importation language, but we do not consider those details in this paper.}
As explained in Footnote~\ref{footnote:clash-renaming} in Section~\ref{sec:maude-npa}, %in Page~\pageref{footnote:clash-renaming},
we allow shared items but require the user to solve any possible conflict. 
\item A new sort \sort{Param} is defined to denote
input and output parameters.
The sort \sort{Param} %of 
%these messages
is disjoint from the sort \sort{Msg} used by the protocol 
in the honest and intruder strands to ensure that an intruder cannot fake a composition.

%Strands of both protocols are transformed and added to this single specification
%as described in Figure~\ref{fig:protocol-transformation}.
%Note that this protocol transformation removes
%the strand labels since this  information 
%is now included in the input and output parameters.
%\item We add the role information directly as output messages
%of the form $+(\textit{NSL.init})$.
%The sort given to 
%these messages,
%say sort \sort{Role},
%is disjoint from the sort used by the protocol 
%in the honest and intruder strands. 
%This ensures that they are harmless since no intruder strand will use them.

\item
For each composition 
$(a, b, \textit{MODE})$
with underlying substitution $\sigma$ such that $\overrightarrow{O_a} \congr{E_\caP} \overrightarrow{I_b}\sigma$,
we transform 
the input parameters $\{\overrightarrow{I_{b}}\}$
%are transformed
into an input message exchange of the form
$-(\overrightarrow{I_{b}})$, 
and
%\item
the output parameters $\{\overrightarrow{O_{a}}\}$
%are transformed
into an output message exchange of the form
$+(\overrightarrow{I_{b}}\sigma)$.
%\end{enumerate}
In order to avoid type conflicts, we use a \emph{dot}
for  concatenation within protocol composition exchange messages,
e.g. input parameters $\overrightarrow{I}=\{A,B,NA\}$ are transformed into
the sequence $\dot{I}=A\; .\; B\; .\; NA$. 
% $\stackrel{\cdots}{I}=A\; .\; B\; .\; NA$.
%
\item
Each composition 
%$a \{\overrightarrow{O}\} ; \{\overrightarrow{I}\} b$
is uniquely identified by using a composition
identifier (a variable of sort \sort{Fresh}).
Strands exchange such composition identifier by using
input/output messages of the form $role_{j}(r)$,
which make the role explicit.
The sort \sort{Role} 
of these messages
is disjoint from 
the sorts \sort{Param} and \sort{Msg}.
\begin{enumerate}
\item
In a one-to-one protocol composition, the child strand uniquely generates
a fresh variable that is added to the area of fresh identifiers
at the beginning of its strand specification. This fresh variable must be passed from the child to the parent before the parent generates its output parameters and sends them back  %\todo{Jose: what does it mean "sends them back"} 
again to the child.
What this simulates in practice is the uniqueness of the one-to-one composition, since the parent can generate a single such message.
\item
In a one-to-many protocol composition, the parent strand uniquely generates a fresh variable that is passed to each child. Since an (a priori) unbounded number of children will be composed with it, no reply to the fresh variable 
is expected by the parent from the children.
Note that all the children strands receive the same fresh variables from the parent.
\end{enumerate}
\end{enumerate}

\noindent
%The rewrite theory defining  the protocol execution of a 
%protocol composition  
% $\Phi(\caP_{1} \ ;_S \caP_{2})$
%is denoted by $\caR_{\Phi(\caP_{1} \ ;_S \caP_{2})}$.
%
Let us illustrate this protocol transformation 
with our examples of protocol compositions.

\begin{example}
%Let us illustrate this protocol transformation with the  
%protocol composition of Example~\ref{ex:strandComp}.
%%Let us consider our two running examples.
%For the one-to-one strand composition
%$(\textit{NSL.init}, DB.resp, $1{-}1$)$
%we have 
%the following two transformed strands:

 The transformed strands of the one-to-one 
 protocol composition $\mathit{NSL} ;_S \mathit{DB}$
 of Example~\ref{ex:protComp-NSLDB} are as shown below:

%\begin{small}
%\begin{verbatim}
%  :: r :: 
% [ nil | +(\textit{NSL.init}), 
%         +(pk(B,n(A,r) ; A)) , 
%         -(pk(A, n(A,r) ; NB ; B )), 
%         +(pk(B, NB)), 
%         -(\textit{DB.resp}(r#)),
%         +(\textit{NSL.init}(r#) . A . B . n(A,r)}), nil ] &
%:: r ::
%[ nil | +(\textit{NSL.resp}), 
%        -(pk(B,NA ; A)), 
%        +(pk(A, NA ; n(B,r) ; B)), 
%        -(pk(B,n(B,r))),
%        -(\textit{DB.init}(r#)),
%        +(\textit{NSL.resp}(r#) . A . B  . NA }), nil ] &
%:: r', r# ::
%[ nil | +(\textit{DB.init}(r#)),
%        -(\textit{NSL.resp}(r#) . A . B . NA  }), 
%        +(n(B,r')), -(NA \oplus n(B,r')), nil] &
%:: r# :: 
%[ nil | +(\textit{DB.resp}(r#)),
%        -(\textit{NSL.init}(r#) . A . B . NA }), 
%        -(N), +( NA \oplus N), nil ] 
%\end{verbatim}
%\end{small}

\begin{small}
\begin{align}
  :: r :: 
 [ nil | &+(\textit{NSL.init}), \notag\\
         &
         +(pk(B,n(A,r) ; A)) , %\notag\\
         %&
         -(pk(A, n(A,r) ; NB ; B )), %\notag\\
         %&
         +(pk(B, NB)), \notag\\
         &
         -(\textit{DB.resp}(r\#)),%\notag\\
         %&
         +(\textit{NSL.init}(r\#)~.~A~.~B~.~n(A,r)) ] \&\notag\\
:: r ::
[ nil | &+(\textit{NSL.resp}), \notag\\
        &
        -(pk(B,NA ; A)), %\notag\\
        %&
        +(pk(A, NA ; n(B,r) ; B)), %\notag\\
        %&
        -(pk(B,n(B,r))),\notag\\
        &
        -(\textit{DB.init}(r\#)),%\notag\\
        %&
        +(\textit{NSL.resp}(r\#)~.~A~.~B~.~NA ) ] \&\notag\\
:: r', r\# ::
[ nil | &
         +(\textit{DB.init}(r\#)),%\notag\\
        %&
        -(\textit{NSL.resp}(r\#)\ .\ A\ .\ B\ .\ NA  ), \notag\\
        &
        +(n(B,r')), %\notag\\
        %& 
        -(NA \oplus n(B,r')) ] \&\notag\\
:: r\# :: 
[ nil | &
         +(\textit{DB.resp}(r\#)),%\notag\\
        %&
        -(\textit{NSL.init}(r\#)\ .\ A\ .\ B\ .\ NA ), \notag\\
        &
        -(N), %\notag\\
        %&
        +( NA \oplus N), nil ] \notag
\end{align}
\end{small}

\noindent
%For the %following one-to-many protocol composition in the NSL-KD
%one-to-many strand composition
%$(\mathit{NSL.init}, \mathit{KD.init}, $1{-}*$)$ 
%we have 
%the following two transformed strands:

 The transformed strands of the 
 %one-to-many strand composition \allowbreak
%$(\mathit{NSL.init},\allowbreak \mathit{KD.init}, \allowbreak $1{-}*$)$ 
 protocol composition $\mathit{NSL} ;_S \mathit{KD}$
 of Example~\ref{ex:protComp-NSLKD} are as shown below:

%\begin{small}
%\begin{verbatim}
%--- NSL protocol
%:: r , r# :: 
%[ nil | +(\textit{NSL.init}),
%        +(pk(B,n(A,r) ; A)) , 
%        -(pk(A, n(A,r) ; NB ; B )), 
%        +(pk(B, NB)), 
%        +(\textit{NSL.init}(r#) . A . B . h(n(A,r) , NB) }), nil ] &
%:: r , r# ::
%[ nil | +(\textit{NSL.resp}),
%        -(pk(B,NA ; A)), 
%        +(pk(A, NA ; n(B,r) ; B)), 
%        -(pk(B,n(B,r))),
%        +(\textit{NSL.resp}(r#) . A . B . h(NA , n(B,r))}), nil ] &
%
%---- KD protocol
%:: r' ::
%[ nil | +(\textit{KD.init}),
%        -(\textit{NSL.init}(r#) . A . B . K }), 
%        +(e(K, skey(A, n(A,r')))) , 
%        -(e(K, skey(A, n(A,r')) ; N)), 
%        +(e(K, N)), nil] &
%:: r' :: 
%[ nil | +(\textit{KD.resp}),
%        -(\textit{NSL.resp}(r#) . A . B . K }), 
%        -(e(K, skey(A,NA'))), 
%        +(e(K, skey(A,NA') ; n(B,r'))), 
%        -(e(K, n(B,r'))), nil ]     &
% --- 
%:: r' :: 
%[ nil | +(\textit{KD.init}),
%        -(\textit{NSL.resp}(r#) . A . B . K }), 
%        +(e(K, skey(B, n(B,r')))), 
%        -(e(K, skey(B, n(B,r')) ; N)), 
%        +(e(K, N)), nil ] &
%:: r' :: 
%[ nil | +(\textit{KD.resp}),
%        -(\textit{NSL.init}(r#) . A . B . K }), 
%        -(e(K, skey(B,NB') )), 
%        +(e(K, skey(B,NB') ; n(A,r'))), 
%        -(e(K, n(A,r'))),  nil ] 
%\end{verbatim}
%\end{small}

\begin{small}
\begin{align}
%--- NSL protocol
:: r , r\# :: 
[ nil | &+(\textit{NSL.init}),\notag\\
        &
        +(pk(B,n(A,r) ; A)) , %\notag\\
        %&
        -(pk(A, n(A,r) ; NB ; B )), %\notag\\
        %&
        +(pk(B, NB)), \notag\\
        &
        +(\textit{NSL.init}(r\#)\ .\ A\ .\ B\ .\ h(n(A,r) , NB) ) ] \&\notag\\
:: r , r\# ::
[ nil | &
         +(\textit{NSL.resp}),\notag\\
        &
        -(pk(B,NA ; A)), %\notag\\
        %&
        +(pk(A, NA ; n(B,r) ; B)), %\notag\\
        %&
        -(pk(B,n(B,r))),\notag\\
        &
        +(\textit{NSL.resp}(r\#)\ .\ B\ .\ A\ .\ h(NA , n(B,r))) ] \&\notag\\
%---- KD protocol
:: r' ::
[ nil | &+(\textit{KD.init}), %\notag\\
        %&
%        -(\textit{NSL.init}(r\#)\ .\ C\ .\ D\ .\ K ), \notag\\
        -(\textit{RO1}\ .\ C\ .\ D\ .\ K ), \notag\\
        &
        +(e(K, skey(C, r'))) , %\notag\\
        %&
        -(e(K, skey(C, r') ; N)), %\notag\\
        %&
        +(e(K, N))] \&\notag\\
:: r' :: 
[ nil | &+(\textit{KD.resp}),%\notag\\
        %&
%        -(\textit{NSL.resp}(r\#)\ .\ C\ .\ D\ .\ K ), \notag\\
        -(\textit{RO2}\ .\ C\ .\ D\ .\ K ), \notag\\
        &
        -(e(K, SK)), %\notag\\
        %&
        +(e(K, SK ; n(C,r'))), %\notag\\
        %&
        -(e(K, n(C,r'))) ]     \&\notag
\end{align}
\end{small}

\noindent
where \textit{RO1} and \textit{RO2} are variables of sort \sort{Role}.
\end{example}

\section{Pragmatic and Experimental Evaluation}\label{sec:experiments-composition}

In this section we 
further explore  composition via protocol transformation versus composition via synchronization messages comparing them for ease of use and simplicity.
Furthermore, we
present some experimental results about the performance of the two approaches.
First, in Section~\ref{sec:experiments-NSLDB} we show the attack for the NSL-DB explained in Section~\ref{sec:examples-NSLDB}.
 Then we fix the NSL-DB protocol using a hash function,  as explained in Section~\ref{sec:examples-NSLDB}, and show that the protocol
is verified as secure by our tool, i.e., the search space is finite and no attack is  found.
Moreover, in Section~\ref{sec:experiments-NSLKD} we show that the NSL-KD protocol  presented in Section~\ref{sec:examples-NSLKD} is also verified as secure by the Maude-NPA. 
Each time 
 we show a protocol secure, we also show that a regular execution  can be performed, proving that the search space is not empty a priori;
  however, these regular execution proofs have not been included in this paper, though they are available online (see below).

Here, the reader can see that the attack state patterns 
associated to the transformed protocol are more complex and hence more
error prone when they have to be specified than 
the
attack state patterns for  composition via  synchronization messages, since the introduction of fresh variables for protocol composition has to be done manually. Also, the attack state patterns look more artificial in the protocol transformation
because of the back and forth messages.

%For each protocol composition we have performed these experiments using  
%the two techniques presented in this paper, namely 
% protocol composition
%perfomed via the protocol transformation explained in Section~\ref{sec:transfComposition},
%and its direct implementation in Maude-NPA,  as explained in Section~\ref{sec:directComposition}.
In Section~\ref{sec:experiments-comparison} we provide more details of the experiments and compare  the results  
obtained using  both techniques.
All the experiments, including the source Maude-NPA files and the generated outputs, can be found at: %   
% \todo{Sonia: update link}
%\begin{center}
{\small \url{http://www.dsic.upv.es/~sescobar/Maude-NPA/composition.html}}
%\end{center}

\subsection{The NSL-DB Protocol}\label{sec:experiments-NSLDB}

We start with the NSL-DB protocol composition. 
As explained in Section~\ref{sec:examples-NSLDB}, this protocol 
has an attack in which  the honest principal $B$ thinks that he has heard from a principal 
$D$ (who may or may not be honest), but who has actually heard from an honest principal $A$.  
This covers, for example, the case in which $D$ is dishonest, and tries to pass on an honest principal's 
authenticated response as his own.
%This attack is represented in Maude-NPA by the following attack pattern,
%according to the protocol specification of Example~\ref{ex:example-NSLDB},
%where:
This attack is represented in Maude-NPA by an attack state pattern,
according to the protocol specification of Example~\ref{ex:example-NSLDB},
where:
(i) the first strand is Alice talking to some principal $C$ acting as NSL initiator and connecting
to a DB responder,
(ii) the second strand is Bob taking to some principal $D$ acting as DB initiator and receiving data
from NSL responder,
and
(iii)
we include disequality constraints for principal names, namely $a \neq D$ and $C \neq b$.

More specifically, the attack state pattern using  the protocol transformation technique 
is as follows:

\begin{small}
\begin{align}
  :: r :: 
   [ nil, &+(\textit{NSL.init}), \notag\\
          &
          +(pk(C,n(a,r) ; a)), %\notag\\
          %&
          -(pk(a, n(a,r) ; NC ; C )), %\notag\\
          %&
          +(pk(C, NC)) \notag\\
 \mid        \ &-(\textit{DB.resp}(r\#1)), %\notag\\
          %&
          +(\textit{NSL.init}(r\#1)\ .\ a\ .\ C\ .\ n(a,r))] \ \&  &(\textit{NSL-DB-a0-PT})\notag\\
   :: r', r\#2 ::
  [ nil, &+(\textit{DB.init}(r\#2)),%\notag\\
         %&
         -(\textit{NSL.resp}(r\#2)\ .\ D \ .\ b\ .\ n(a,r) ), \notag\\
         &
         +(n(b,r')), %\notag\\
         %&
         -(n(b,r') \oplus  n(a,r)) %\notag\\
         \mid\ %&
         nil ]  \notag\\
 &  \& ((a \neq D) , (C \neq b))\notag
\end{align}
\end{small} 

\noindent
And the backwards search from 
this attack pattern %the corresponding attack state for protocol  transformation 
does not terminate\footnote{In [15] we reported termination, but this turned out to be a result of a bug in Maude-NPA's management of disequality constraints, which has since been corrected. The development of new semantics and implementation helped us to discover this bug.} due to a state space explosion, and
no initial state is found up to the depth reached by the analysis.

In protocol composition via  synchronization messages
the attack state pattern is as shown below:

\begin{small}
\begin{align}
 :: r :: 
  [ nil, &+(pk(C,n(a,r) ; a)), %\notag\\
         %&
         -(pk(a, n(a,r) ; NC ; C)), %\notag\\
         %&
         +(pk(C, NC)) \notag\\
& \mid    \  \{\textit{NSL.init} \to \textit{DB.resp}\ ;;\ \textrm{1-1}\ ;;\ (a ; C ; n(a,r))\}] \&   &(\textit{NSL-DB-a0-SM})\notag\\
  :: r' ::
  [ nil, &\ \{\textit{NSL.resp}\to \textit{DB.init}\ ;;\ \textrm{1-1}\ ;;\ (D ; b ; n(a,r))\},               \notag\\
         &+(n(b,r')), %\notag\\
         %&
         -(n(a,r) \oplus n(b,r'))  \mid nil]\notag\\
  &\& (a \neq D , C \neq b)\notag
\end{align}
\end{small}

%\vspace{1ex}

\noindent
The backwards search from 
this %the  the NSL-DB  
attack state
using  composition via  synchronization messages finds 
 an initial state from which it is reachable, and thus demonstrates a distance hijacking attack.  
The exchange of messages of this attack is as explained in Section~\ref{sec:examples-NSLDB}.

We then considered other  attacks similar to the distance hijacking attack which however produced a smaller search space.
In the following attack, we asked whether it is possible for an attacker to use an initiator $A$'s nonce to participate in the distance-bounding
part of the protocol without Alice having completed the corresponding NSL strand. 
The attack state is given below (note the different position of the vertical bars w.r.t. attack state
\textit{NSL-DB-a0-PT});

\begin{small}
\begin{align}
  :: r :: 
   [ nil, &+(\textit{NSL.init}), %\notag\\
          %&
          +(pk(C,n(a,r) ; a)) \notag\\
        | \,
          &-(pk(a, n(a,r) ; NC ; C )), %\notag\\
          %&
          +(pk(C, NC)), \notag\\
          &-(\textit{DB.resp}(r\#1)),%\notag\\
          %&
          +(\textit{NSL.init}(r\#1)\ .\ a\ .\ C\ .\ n(a,r)) ]  \&&(\textit{NSL-DB-a1-PT})\notag\\
   :: r', r\#2 ::
  [ nil, &+(\textit{DB.init}(r\#2)),%\notag\\
         %&
         -(\textit{NSL.resp}(r\#2)\ .\ D\  .\ b\ .\ n(a,r) ), \notag\\
         &+(n(b,r')), %\notag\\
         %&
         -(n(b,r') \oplus  n(a,r)) | nil ]  \notag\\
  & \& (a \neq D , C \neq b)\notag
\end{align}
\end{small} 

\noindent
 This, besides being simpler, required only that
the bar move one step forward in the NSL strand, and produced a smaller search space in which the protocol transformation version was able to find an attack, and to terminate on the corrected version of the protocol, giving us a better opportunity compare the performance of the two approaches.   
The same result is obtained for the attack pattern \textit{NSL-DB-a1-SM} but we do not include it here.

As explained in Section~\ref{sec:examples-NSLDB}, the distance hijacking attack can be avoided using a hash function. 
The previous property for the NSL-DB is specified in the new version of the protocol with the following attack state pattern
using the protocol transformation:

\begin{small}
\begin{align}
  :: r :: 
   [ nil, &+(\textit{NSL.init}), \notag\\
          &+(pk(C,n(a,r) ; a)), %\notag\\
          %&
          -(pk(a, n(a,r) ; NC ; C )), %\notag\\
          %&
          +(pk(C, NC)) \notag\\
 \mid        \ &-(\textit{DB.resp}(r\#1)),%\notag\\
          %&
          +(\textit{NSL.init}(r\#1)\ .\ a\ .\ C\ .\ n(a,r))] \ \&  &(\textit{NSL-DB-a0-fix-PT})\notag\\
   :: r', r\#2 ::
  [ nil, &+(\textit{DB.init}(r\#2)),%\notag\\
         %&
         -(\textit{NSL.resp}(r\#2)\ .\ D \ .\ b\ .\ n(a,r) ), \notag\\
         &+(n(b,r')), %\notag\\
         %&
         -(n(b,r') \oplus  h(D,n(a,r))) %\notag\\
         \mid\ %&
         nil ]  \notag\\
 &  \& ((a \neq D) , (C \neq b))\notag
\end{align}
\end{small} 

\noindent
However, as in the case NSL-DB protocol, the analysis  using the protocol transformation
does not terminate due to state space explosion and, thus, the security of the protocol 
for this %such an 
attack state pattern cannot be proved.

The distance hijacking attack
via  synchronization messages is as follows:

\begin{small}
\begin{align}
 :: r :: 
  [ nil, &+(pk(C,n(a,r) ; a)), %\notag\\
         %&
         -(pk(a, n(a,r) ; NC ; C)), %\notag\\
         %&
         +(pk(C, NC)) \notag\\
\mid         & \ \{\textit{NSL.init} \to \textit{DB.resp}\ ;;\ \textrm{1-1}\ ;;\ (a ; C ; n(a,r))\}] \&   &(\textit{NSL-DB-a0-fix-SM})\notag\\
  :: r' ::
  [ nil, & \ \{\textit{NSL.resp}\to \textit{DB.init}\ ;;\ \textrm{1-1}\ ;;\ (D ; b ; n(a,r))\},               \notag\\
         &+(n(b,r')), %\notag\\
         %&
         -(h(D,n(a,r)) \oplus (b,r'))  \mid nil]\notag\\
  &\& (a \neq D , C \neq b)\notag
\end{align}
\end{small}

\noindent
The analysis of this  protocol composition
using the composition via  synchronization messages,
 terminates finding no attack (see Table~\ref{table:experiments-composition}). 
 Thus, the attack state is unreachable. 

Therefore, we proceed in a similar way  as  we did before and provide an attack pattern
with an earlier position of the vertical bar:

\begin{small}
\begin{align}
  :: r :: 
   [ nil, &+(\textit{NSL.init}), %\notag\\
          %&
          +(pk(C,n(a,r) ; a)) \notag\\
        |
          &-(pk(a, n(a,r) ; NC ; C )), %\notag\\
          %&
          +(pk(C, NC)), \notag\\
          &-(\textit{DB.resp}(r\#1)),%\notag\\
          %&
          +(\textit{NSL.init}(r\#1)\ .\ a\ .\ C\ .\ n(a,r)) ]  \&&(\textit{NSL-DB-a1-fix-PT})\notag\\
   :: r', r\#2 ::
  [ nil, &+(\textit{DB.init}(r\#2)),%\notag\\
         %&
         -(\textit{NSL.resp}(r\#2)\ .\ D\  .\ b\ .\ n(a,r) ), \notag\\
         &+(n(b,r')), %\notag\\
         %&
         -(n(b,r') \oplus h(D, n(a,r))) | nil ]  \notag\\
  & \& (a \neq D , C \neq b)\notag
\end{align}
\end{small}

 \noindent
 The analysis of the protocol using the protocol transformation
 terminates, finding no initial state from which this more specific
 attack state pattern is reachable. 
 %
% The same result is obtained when using  composition via 
%  synchronization messages.
 %
 The same result is obtained for the attack pattern \textit{NSL-DB-a1-fix-SM} but we do not include it here.

\subsection{The NSL-KD Protocol}\label{sec:experiments-NSLKD}

For the NSL-KD protocol 
presented in Section~\ref{sec:examples-NSLKD}
we may wish  to guarantee that 
a dishonest principal is not able to learn the secret key of an honest principal. This 
property is represented by an attack state pattern, according to the protocol of Example~\ref{ex:example-NSLKD},
where 
the first strand is an initiator of the KD protocol generating the 
session key $skey(a,n(a,r'))$,
the second strand is a responder of the KD protocol using the same session key $skey(a,n(a,r'))$,
and 
we ask whether the intruder can learn this session key by adding the fact $skey(a,n(a,r'))$
to the intruder knowledge.

More specifically, in the protocol transformation the attack state pattern
is of the following form:   

{\small
\begin{align}
  :: r' ::
    [ nil, &+(\textit{KD.init}),%\notag\\
           %&
           -(\textit{RO1}\ .\ a\ .\ b\ .\ K ),  \notag\\
           &
           +(e(K, skey(a, r'))) , %\notag\\
           %&
           -(e(K, skey(a, r') ;  n(b,r))), %\notag\\
           %&
           +(e(K,  n(b,r))) \mid nil] \& &(\textit{NSL-KD-PT})\notag\\
    :: r :: 
    [ nil, &+(\textit{KD.resp}),%\notag\\
           %&
           -(\textit{RO2}\ .\ b\ .\ a\ .\ K ), \notag\\
           &
           -(e(K, skey(a, r'))), %\notag\\
           %&
           +(e(K, skey(a, r') ; n(b,r))), %\notag\\
           %&
           -(e(K, n(b,r))) \mid nil ] \notag\\
 &   \& (\inI{skey(a, r')}) \notag
\end{align}
}

\noindent
whereas for the   composition via synchronization message is specified as follows:
%\vspace{1ex}

%{\small
%\begin{verbatim}
%eq ATTACK-STATE(0)
% = :: r' ::
%   [ nil, {NSL.init -> initA-kd ;; 1-* ;;  a ; b ; K }, 
%          +(e(K, skey(a, r'))), 
%          -(e(K, skey(a,r') ; N)), 
%          +(e(K, N))  | nil] 
%    &
%   :: r :: 
%   [ nil, {NSL.resp -> respB-kd ;; 1-* ;; a ; b ;  K }, 
%          -(e(K, skey(a, r'))), 
%          +(e(K, skey(a, r') ; n(B,r))), 
%          -(e(K, n(b,r))) | nil ]
%   || skey(a, r') inI 
%   || nil
%   || nil
%   || nil
% [nonexec] . 
%\end{verbatim}
%}

{\small
\begin{align}
  :: r' ::
   [ nil, &\ \{\textit{NSL.init NSL.resp} \to \textit{KD.init} \ ;;\ \textrm{1-*}\ ;;\  (a ; b ; K) \}, \notag\\
          &
          +(e(K, skey(a, r'))), %\notag\\
          %&
          -(e(K, skey(a,r') ; n(b,r))), %\notag\\
          %&
          +(e(K, n(b,r)))  | nil] \&  &(\textit{NSL-KD-SM})\notag\\
   :: r :: 
   [ nil, &\ \{\textit{NSL.init NSL.resp} \to \textit{KD.resp}\ ;;\ \textrm{1-*}\ ;;\ (b ; a ;  K) \}, \notag\\
          &
          -(e(K, skey(a, r'))), %\notag\\
          %&
          +(e(K, skey(a, r') ; n(b,r))), %\notag\\
          %&
          -(e(K, n(b,r))) | nil ]\notag\\
   & \& (\inI{skey(a, r')})\notag
\end{align}
}

%\vspace{1ex}

\noindent
Here again the reader can see that the attack state pattern for the transformed protocol 
lacks some useful information about what is really happening, since we have two strands, each participating in different protocol composition, but no indication of what the possible compositions are.  However, the attack state pattern for the composition via  synchronization messages clearly shows that the two different one-to-many compositions that are possible for each strand.

In this case, the desired property is satisfied by the NSL-KD, since the
analysis terminates using both the protocol transformation and the  composition via synchronization messages techniques,
%after $16$ backwards reachability steps 
finding no initial state for the attack state pattern described above.

\subsection{Performance Comparison}\label{sec:experiments-comparison}

 In this section we show in detail the results of the experiments 
presented in Sections~\ref{sec:experiments-NSLDB} and
\ref{sec:experiments-NSLKD}. 
 Table~\ref{table:experiments-composition} gathers the results
 of the analysis of these  protocol compositions, i.e.,
 (i) the composition of the NSL and DB protocols (NSL-DB),
 (ii) the composition of the NSL and the fixed version of the DB protocol (NSL-DB-fix),
 and
 (iii) the composition of the NSL and the KD protocols (NSL-KD).
Note that for the NSL-DB and NSL-DB-fix protocols we consider 
the two attack state patterns shown above: the more generic, denoted as ``a0'', e.g. NSL-DB-a0;
and the more specific, denoted as ``a1'', e.g. NSL-DB-a1.
%
% 
% (i) the composition of the NSL and DB protocols (NSL-DB) considering
% both the general and specific attack patterns shown above, denoted by 
% ``a0'' and ``a1'', respectively;
% (ii) the composition of the NSL and the fixed version of the DB protocol (NSL-DB-fix)
% considering
% both the general and specific attack patterns shown above, denoted by 
% ``a0'' and ``a1'', respectively;
% and
% (iii) the composition of the NSL and the KD protocols (NSL-KD).
 %
 For each protocol composition we provide the following
 information. 
 For each technique, i.e.,   protocol transformation
 and composition via  synchronization messages (referred as composition via SM in the table header), 
 the column ``Secure?''
 shows whether 
 the technique sucessfully proved
 the protocol composition is secure, i.e.
 Maude-NPA   generated a finite search space finding
 no attacks, or insecure, i.e, Maude-NPA found an attack.
  % the analysis finished before obtaining a concluding result,
 When Maude-NPA did not obtain a definite result,
 i.e., when the analysis did not terminate (e.g. because of an state space explosion) 
 and no  initial state was found up
 to the depth reached by the analysis, we write ``?" in this column.
 The column ``Finite?'' indicates  whether Maude-NPA generated a finite state search space 
 or not, i.e. whether the analysis of such protocol composition
 terminated or not.
The   column ``Depth''  provides the  depth of the analysis, 
 i.e., the number of  reachability steps performed by Maude-NPA
 until: (i) it generates a finite search space with no attacks in 
 the case of a secure composition,
  (ii) it finds the attack in the case of an insecure composition, or
  (iii) the analysis finished before obtaining a definite result;
  whereas
the column ``States'' shows  the total number of states 
generated during the analysis up to the indicated depth.
For the composition via  synchronization messages, the column ``SM / PT''
shows  the 
%resulting state space when compared to the state space
%associated to the protocol transformation. 
state space reduction as the number of states explored by the 
synchronization messages method (SM) divided by the number of states explored by the
protocol transformation  method (PT).
When Maude-NPA did not obtain concluding results using the protocol transformation technique
 we write ``-'' in this column.
In the case of the simpler attack for the NSL-DB-fix protocol (attack NSL-DB-fix-a1 in   Table~\ref{table:experiments-composition}), marked with an *,
we considered only the number of states generated until the first initial state was found with
both techniques, since Maude-NPA could not generate a finite search space 
in the protocol transformation approach.

\begin{table}[t]
{\small

\centering

\begin{tabular}{|c|c|c|c|c||c|c|c|c|c|c|c|}
\hline
  
 & \multicolumn{4}{c||}{Protocol Transformation}
 & \multicolumn{5}{c|}{Composition via SM}
 \\ 
 \hline
 Attack  & Secure? & Finite?   &  Depth & States &  Secure? &  Finite? &  Depth & States & SM / PT  
  \\
\hline

NSL-DB-a0 &  ? &  No &  10    &   3434  &  No &  Yes  & 16   & 1337 &   -   \\
\hline
NSL-DB-a1 &  No  & No &   16 &   1529 &  No     &  Yes & 13 &  259   &  0.17* \\
\hline
NSL-DB-fix-a0 &  ?  &  No   &   10 & 2650     &   Yes & Yes    & 19    & 1690 & -   \\
\hline
NSL-DB-fix-a1 &  Yes  & Yes &  17   & 273     &  Yes &  Yes &   16  &  103  & 0.38 \\
\hline
NSL-KD &  Yes  &   Yes & 19   & 1486    &  Yes & Yes  &  16  &  652  & 0.44 \\
\hline

\end{tabular}

\caption{Experiments with sequential protocol compositions}
\label{table:experiments-composition}
} % end small

%{\scriptsize
%(*) Since Maude-NPA could not generate a finite search space in the protocol transformation approach,
%we considered the number of states generated until the initial state was found.
%}

\end{table}

Regarding the execution time of the experiments, we note that we present these for the purpose of comparing
the composition times rather than as the best possible times that can be achieved using our methods.  The Maude programming language offers several levels of programming, including core Maude that provides the basic functionality of Maude, and the meta-level, in which Maude programmers can design new functionalities.  Core Maude has been carefully optimized, and hence programs in core Maude run faster than programs at the meta-level.  Our approach has been to first implement functionality at the meta-level, and then, when it is well understood, have it implemented in core Maude.  Many of the features we use in this analysis, including narrowing modulo equational are still implemented in the meta level although work is ongoing in moving them to core Maude. 

With this in mind, we present the execution times as follows.
In the case of attack NSL-DB-a0, the analysis using the protocol composition
 failed to complete after several days, whereas 
 using the composition via synchronization messages  it completed in a little under 9 hours.
For  attack NSL-DB-a1 using protocol transformation, the tool did not complete, and took almost 2 days to find an attack, while
 it took 1/2 hour to find the attack when using synchronization messages.
Attack NSL-DB-fix-a0 ran for several days without finishing when using protocol transformations,  while it completed after 6 days
when using synchronization messages.   
For  attack NSL-DB-fix-a1, the execution time was reduced from an hour and a half
when using protocol transformations to 35 minutes when using composition via synchronization
messages. For attack NSL-KD the tool took nine hours to complete using protocol transformations versus one and one-half hours using synchronization messages.

The reader may wonder why attack NSL-DB-fix-a0 took so much longer to complete than the other attacks, even for synchronization messages.  Although we have not yet investigated the reasons in detail, we believe that it is because that attack makes the most extensive use of narrowing modulo exclusive-or, which is the most expensive operation.

%As mentioned above, the results gathered in Table~\ref{table:experiments-composition}
%show that the analysis of a protocol composition performed with the   
%composition via  synchronization messages
%has considerably better performance
%than when it is carried out via protocol transformation.  
%
In summary,   protocol transformation fails to provide a definite result
about the security of two of our experiments, namely the analysis of the NSL-DB and NSL-DB-fixed 
protocol compositions for the distance hijacking attack state pattern, whereas 
this problem does not occur with the composition via  synchronization messages.
Morevoer, composition via  synchronization messages 
generates a finite state search space in all cases, whereas with protocol transformation 
this happens in only two cases.  Moreover, in the case in which both complete or both find an attack, so that it
is possible to compare performance directly, both state space size and time spent improved significantly for synchronization messages.  SM / PT state space size ratios ranged from 0.17 to 0.44.   Ratios for time spent were even more dramatic, ranging from 0.000868 to 0.389.  Although we should be careful about drawing too many conclusions for such a small number of experiments, we believe that it is safe to conclude that composition via synchronization messages offers a significant improvement in both space and time efficiency.

\section{Related Work, Lessons Learned, and Future Directions}\label{sec:conc}

 \subsection{Related Work}

Our work addresses a somewhat different problem than most existing work on cryptographic protocol
composition, which generally does not address model-checking. Indeed, to the best of our knowledge, most protocol analysis model-checking tools simply use
concatenation of protocol specifications to express sequential composition.   
However, we believe that the problem we are addressing is an important one that 
tackles a widely acknowledged source of protocol complexity.  
For example,
in the Internet Key Exchange Protocol \citep{ikev1}
there are sixteen different one-to-many parent-child compositions of Phase
One and Phase Two protocols.
The ability to synthesize compositions
automatically can  greatly simplify the specification and
analysis of protocols like these.

Now that we have a mechanism for synthesizing compositions, we are ready to revisit existing research on composing protocols and their properties
and determine how we could best make use of it in our framework. There have been two approaches to this problem.  One, called {\em nondestructive} composition in \citep{datta03mfps}, is to
concentrate on properties of protocols and conditions on them that guarantee that properties satisfied separately are not violated by the composition.  This is often (although not always) applied to parallel composition.
This is, for example, the approach taken by 
Gong and Syverson \citep{Gong-Syverson-98}, Guttman and Thayer \citep{GuttmanThayer00},   Cortier and Delaune \citep{CortierDelaune09},
Ciob\^ac\u{a}  and Cortier  \citep{CiobacaC10}, Gro{\ss} and M{\"o}dersheim  \citep{GrossM11},
%
%\citep{Gong-Syverson-98}, 
%\citep{GuttmanThayer00},   
% \citep{CortierDelaune09},
% \citep{CiobacaC10},
% \citep{GrossM11},
and, in the  computational
model, Canetti's Universal Composability \citep{CanettiLOS02}.  The conditions in this case are usually ones that can be verified syntactically, so Maude-NPA, or any other model
checker would only be of use here to supply an experimental method for testing various hypotheses about  syntactic conditions. 

Of more interest to us is the research that addresses the compositionality of the protocol properties themselves, called {\em additive} composition in \citep{datta03mfps}.
This addresses the development of logical systems and tools such as CPL, PDL, and CPSA cited earlier in this paper, in which inference rules are provided
for deriving complex properties of a protocol from simpler ones.  Since these are pure logical systems, they necessarily start from very basic statements concerning,
for example, what a principal can derive when it receives a message.  But there is no reason why the properties of the component protocols could
not be derived using model checking, and then composed using the logic.  This would give us the benefits of both model checking
(for finding errors and debugging), and logical derivations (for building complex systems out of simple components), allowing to switch between
one and the other as needed. Indeed, we think that Maude-NPA is well positioned in this respect.  For example, the notion of state in
strand spaces that it uses is very similar to that used by PDL \citep{cerv05csfw}, and we have already developed a simple property language that allows us to translate
the ``shapes'' produced by CPSA into Maude-NPA attack state patterns.  The next step in our research will be to investigate this
 connection more closely from the
point of view of compositionality.

\subsection{Lessons Learned and Future Directions}\label{subsec:future}

Our work has also taught us much about the optimum strategies for extending Maude-NPA.  First of all, although it is desirable to be conservative when extending the syntax and semantics, this should be done in such a way that the resulting semantics reflects the extended functionality in a natural way.  Secondly, the use of message passing over the Dolev Yao channel is expensive computationally and can lead to state space explosion.  Thus it should be used only when the properties of the Dolev Yao channel
are actually needed.  Thirdly, the use of an abstract semantics which is not actually implemented can be very helpful in assisting
us to experiment with different implementation approaches  in the tool itself.  This allowed us to compare performance of different approaches while understanding their relationship to the abstract semantics.  Thus we could be sure that we were not giving up correctness in order to obtain better performance, and we could understand to what degree we were losing expressiveness. 

Finally, since 
the work of \citep{EMMS10} we have discovered that sequential protocol composition
is a key idea for several other applications in protocol specification such
as protocol branching, secure communication channels, group protocols and protocols
with global state memory.
We believe that these applications can also be supported in Maude-NPA
with extensions of the methods presented in this paper to support more expressive composition languages.
We have performed a preliminary study of these applications but we leave for future work 
a deeper investigation on these topics.

%%%%%%%%%%% The bibliography starts:
%\begin{thebibliography}{9}
%
%\bibitem{r1}
%
%\bibitem{r2}
%
%\end{thebibliography}

\bibliographystyle{plain}
\bibliography{tex}

\end{document}